\documentclass[prl,twocolumn,longbibliography,nofootinbib,groupedaddress,preprintnumbers,notitlepage,fleqn,showpacs,superscriptaddress]{revtex4-1} 

\usepackage{soul} 

\usepackage{xcolor}

\usepackage[bookmarks = true, citecolor = blue, colorlinks = true, linkcolor = magenta, urlcolor = blue]{hyperref}

\usepackage[T1]{fontenc}			
\usepackage[sc,osf]{mathpazo}   	

\usepackage{amsmath}  			
\usepackage{amsfonts}  			
\usepackage{graphicx}   			
\usepackage{mathrsfs, amsthm, amssymb}
\usepackage{bbm, bm}

\usepackage{nicefrac}    			
\usepackage{longtable}
\usepackage{array}

\usepackage{fourier-orns}  		

\usepackage[braket, qm]{qcircuit}	

\newtheorem{theorem}{Theorem}
\newtheorem{definition}{Definition}

\newtheorem{lemma}{Lemma}

\usepackage{contour}
\usepackage{ulem}

\contourlength{0.8pt}

\newcommand{\myuline}[1]{%
  \uline{\phantom{#1}}%
  \llap{\contour{white}{#1}}%
}
\renewcommand{\emph}{\textit}

\usepackage{xcolor}
\definecolor{dark green}{rgb}{0.0,0.5,0.0}

\begin{document}


\title[]{
Measuring Bell non-locality in the presence of signaling
}

\author{Mark \surname{Broom}}
\affiliation{City St George's, University of London, London EC1V 0HB, United Kingdom}

\author{Talel \surname{Naccache}}
\affiliation{City St George's, University of London, London EC1V 0HB, United Kingdom}

\author{Emmanuel M. \surname{Pothos}}
\affiliation{City St George's, University of London, London EC1V 0HB, United Kingdom}

\author{Christoph \surname{Gallus}}
\affiliation{Technische Hochschule Mittelhessen, D-35390 Gießen, Germany}

\author{Pawel \surname{Blasiak}}
\email[Corresponding author: ]{pawel.blasiak@ifj.edu.pl}
\affiliation{Institute of Nuclear Physics, Polish Academy of Sciences, 31342 Krak\'ow, Poland}


\begin{abstract}
Scientific inquiry seeks causal explanations of observed phenomena. The Bell experiment provides a paradigmatic case, revealing correlations between spatially separated systems that no local model can reproduce. Such correlations, known as Bell non-locality, are typically analyzed under the non-signaling assumption, which requires that local statistics be independent of distant measurement choices. Yet real experiments, as well as applications beyond physics, often involve signaling, raising the question of how non-locality should be characterized without this constraint. We introduce a general method for quantifying Bell non-locality in the presence of signaling, designed to relax locality as little as necessary. Our approach is guided by the question: \textit{how often can locality be preserved across repeated trials in explaining the observed correlations?} The task reduces to finding the optimal convex decomposition of the observed correlations into local and genuinely non-local components. We solve this problem within the linear-programming framework, obtaining a closed-form solution valid for arbitrary correlations. We further evaluate a corresponding measure of signaling, demonstrating the generality of the method and the non-trivial character of the results. By extending the notion of non-locality beyond the non-signaling regime, our framework reshapes the basis for experimental analysis in physics and offers tools applicable outside physics.

\vspace{1cm}
\end{abstract}


\maketitle





\section{\textsf{Introduction}}
\vspace{-0.3cm}

At the heart of science lies the pursuit of uncovering causal relations in observed phenomena, an endeavor that finds a particularly stringent test in the correlations revealed by Bell inequalities~\cite{Be64,Be93}. Originally formulated within the context of physics, these inequalities confront deep-rooted assumptions about the workings of nature, showing that locality cannot be maintained alongside the statistical predictions of quantum mechanics. Over the past decades, increasingly refined experiments have confirmed their violation, firmly establishing Bell non-locality as an intrinsic feature of physical reality~\cite{FrCl72,AsDaRo82,HeBeDrReKaBlRu15,GiVeWeHaHoPhSt15,ShMeChBiWaStGe15,BIGBellCollaboration18,RaHaHoGaFrLeLi18}, while simultaneously opening the way for modern quantum information science and its applications. This stands out as a prominent example within the broader paradigm of causal discovery across disciplines~\cite{PeMa18,Pe09,SpGlSc00,AnPi09,RuIm15,HeRo20}. In this paper, we extend the original Bell framework to provide a unified basis for analyzing Bell-type scenarios across the full range of possible physical contexts and applications beyond physics.

Bell non-locality has been studied primarily in the non-signaling regime, where the local statistics of one party is independent of the measurement choices made by the other~\cite{Sc19,BrCaPiScWe14,Gi14c}. This assumption is grounded in the conventional setting of spatially separated systems, where standard quantum mechanics implies no faster-than-light communication between the parties. There are, however, compelling reasons to move beyond this regime. 
First, real experiments are rarely perfectly non-signaling, with imperfections in design and noise introducing unavoidable deviations~\cite{AdKh17,HeKaBlDrReVeSc16}. This makes the analysis appear sometimes contrived, as signaling alone suffices to demonstrate non-locality, highlighting the need for a unified framework that treats all conditions on an equal footing.
Second, not all fundamental tests in physics rely on spatial separation, with temporal scenarios~\cite{EmLaNo14,KoBr13,Fr10,BrKoMaPaPr15} or instrumental tests~\cite{RiGiChCoWhFe16,ChCaAgDiAoGiSc18,AgPoPoMiGaSuPo22} providing some notable examples. In such cases, disturbance is to be expected, underscoring the need for a unified approach that consistently incorporates effective signaling.
Third, in applications beyond physics -- such as Bayesian networks, epidemiology, social systems, econometrics, and cognition -- signaling is often inherent to the design, making the standard Bell non-signaling framework too restrictive~\cite{CeDz18,BrFeHoDeObGiMo23,GaPoBlYeWo23}. A broader framework is therefore essential to align the study of Bell non-locality with general causal-discovery paradigms in both physical applications and other sciences~\cite{PeMa18,Pe09,SpGlSc00,AnPi09,RuIm15,HeRo20}, where signaling is often intrinsic and cannot be ruled out on purely principled grounds. 

In this paper, we study Bell non-locality without requiring non-signaling and quantify how much locality can account for an observed behavior even in the presence of signaling. Our approach is guided by a simple question:
\begin{eqnarray}\nonumber
\hspace{0.45cm}\textit{\parbox{0.9\columnwidth}{Given that the observed statistics arise from a repeated experiment, \myuline{how often} can locality be preserved across trials while still explaining the observed correlations?}}&
\end{eqnarray}
We call this the \textit{(fractional) measure of locality}, or more briefly the \textit{local fraction}, since it reflects the proportion of experimental trials that admit a local explanation; see~\cite{ElPoRo92,Ha91,BaKePi06,CoRe08,CoRe16,PoBrGi12,BlPoYeGaBo21}.

Note that reducing the answer to the coarse distinction between \textit{‘always’} and \textit{‘sometimes’} brings us back to Bell’s original question: \textit{is the observed statistics local or non-local?} This binary test is resolved by the standard criterion: \textit{correlations admit a local explanation if and only if they are non-signaling and satisfy the Bell-CHSH inequalities}~\cite{ClHoShHo69,Fi82a}. Here, however, we go further by introducing a quantitative measure that captures the \textit{degree} of departure from locality in the observed correlations. Crucially, it provides a direct means of quantifying non-locality without requiring any additional assumptions. In doing so, we go beyond the scope of Ref.~\cite{BlPoYeGaBo21}, thereby laying the foundation for a quantitative framework of non-locality beyond the non-signaling assumption and opening new avenues for both theoretical and experimental advances.

Our result establishes a closed-form solution for arbitrary correlations in the two-setting, two-outcome Bell scenario. At its core, the challenge is to evaluate a non-trivial geometric distance from a given behavior to the local polytope, constructed via convex decomposition and constrained by the space of admissible behaviors. In the general case, where signaling is allowed, these constraints reduce to simple normalization conditions.  Geometrically, we show that the problem reduces to evaluating a set of scalar products with specific vectors that project the observed behavior onto the relevant directions of the correlation space. We refer to these operators with prefixes indicating the class to which they belong, e.g., \textit{f-vectors} or \textit{e-vectors} and the corresponding closeness measure as \textit{f-measure}, etc. This representation is readily amenable to computational implementation. See Fig.~\ref{Fig-Solution-Summary} for a summary. 

The scope and practical utility of the approach are partly revealed through comparison with an analogous measure that quantifies the degree of signaling. In direct analogy with the local fraction, we define a quantity that captures the extent of non-signaling in an observed behavior, which we call the \textit{(fractional) measure of non-signaling} or simply the \textit{non-signaling fraction}. This formulation places the problem within a broader landscape, emphasising the structural richness of the solution and offering a clear framework for comparative analysis. Finally, we demonstrate the indispensable role of each component in the solution and analyze example datasets of behaviors to illustrate how often each element of the solution set realizes the local or non-signaling fraction.


\begin{figure}[t]
\centering
\includegraphics[width=1\columnwidth]{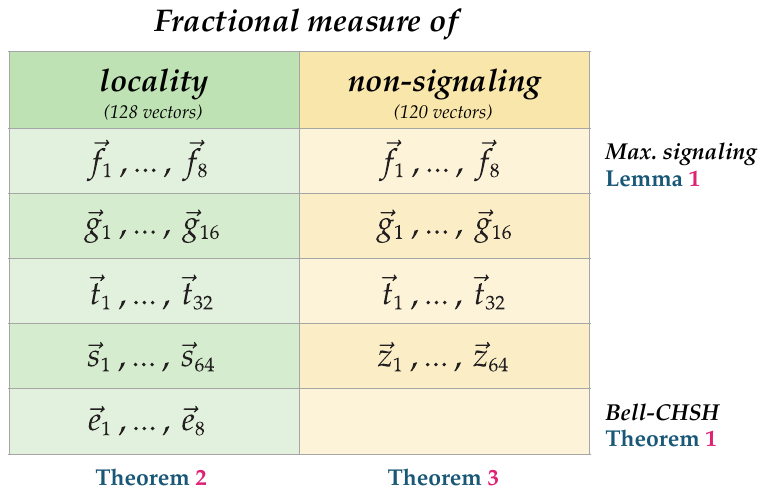}
\caption
{\label{Fig-Solution-Summary}{\bf\textsf{Summary of the results.}} The \textit{fractional measure} reflects the proportion of experimental trials that can still be understood within a specific explanatory framework, be it \textit{local} or \textit{non-signaling}. We calculate it explicitly and show that the result takes the form of a minimum over simpler linear measures, each represented by a vector that projects the observed statistics onto the relevant directions of the correlation space. Accordingly, the table presents the collections of vectors for each measure, with 128 vectors specifying the \textit{local fraction} (\textbf{Theorem~\ref{theoremL}}) and 120 vectors specifying the \textit{non-signaling fraction} (\textbf{Theorem~\ref{theoremNS}}). Importantly, none of these vectors can be omitted and some coincide with familiar quantities: notably, the \textit{f-vectors} correspond to maximal signaling (\textbf{Lemma~\ref{Lemma-muNS-f}}), while the \textit{e-vectors} align with the Bell-CHSH expressions (\textbf{Theorem~\ref{Theorem-muL-e}}).}
\end{figure}

\section{\textsf{Results}}\vspace{-0.3cm}

\subsection{\textsf{Bell scenario, locality and non-signaling}}\vspace{-0.3cm}

\begin{figure*}[t]
\centering
\includegraphics[width=0.8\columnwidth]{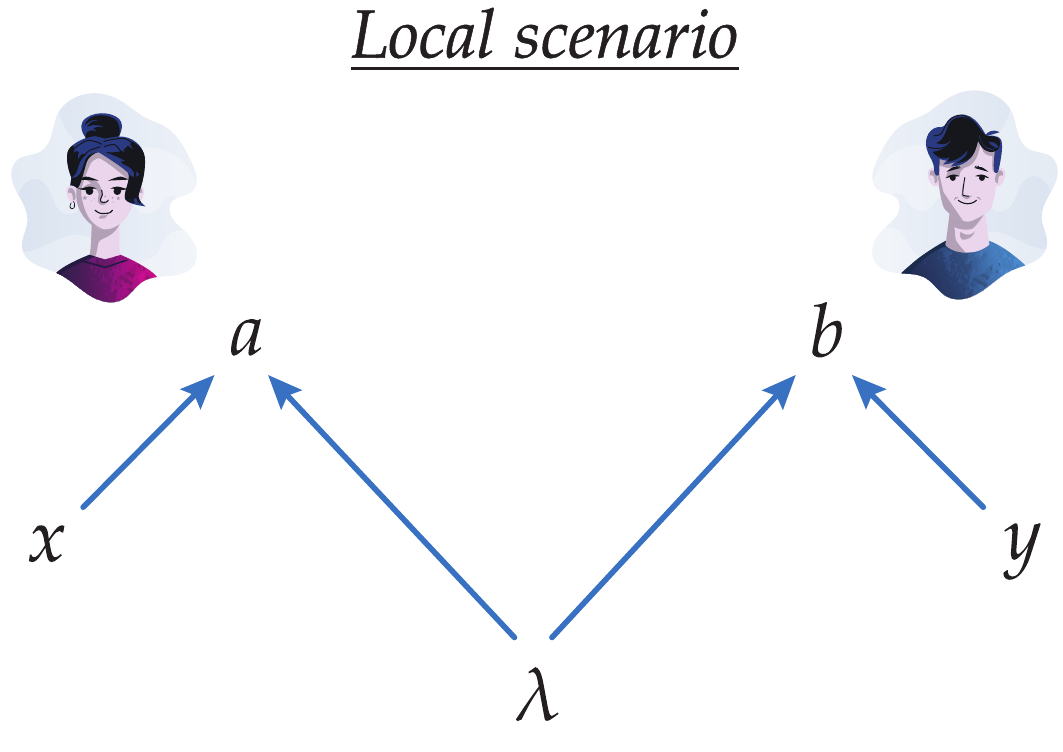}\qquad\qquad\qquad\qquad\qquad
\includegraphics[width=0.8\columnwidth]{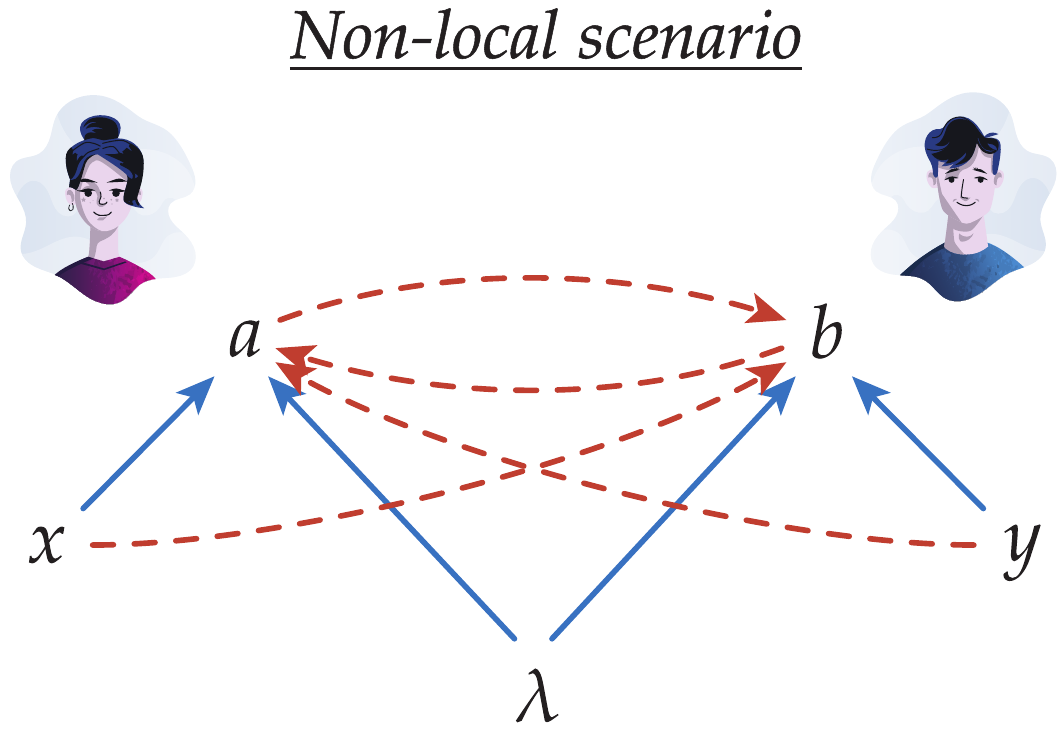}
\caption
{\label{Bell-DAGs}{\bf\textsf{Causal explanations of Bell experiment.}} Two parties, Alice and Bob, perform experiments at spatially separated locations. For each \textit{freely chosen} pair of measurement settings $x$ and $y$, they obtain outcomes $a$ and $b$, respectively, giving rise to the observed statistics $\{P_{\scriptscriptstyle  ab|xy}\}_{\scriptscriptstyle  xy}$.  Assuming that causes propagate only \textit{forward-in-time}, such statistics can be described by \textit{directed acyclic graphs (DAGs)}, where nodes represent measurement settings, outcomes, or shared variables, and arrows capture potential cause-and-effect relations. On the left, the causal DAG depicts a situation where the locality condition rules out any direct causal influence between the parties, so correlations can only originate from a shared common cause in the past, denoted by $\lambda$. On the right, the presence of any red arrow indicates a direct causal influence and therefore constitutes a violation of the locality assumption. While the \textit{local} scenario cannot reproduce arbitrary correlations ${P_{\scriptscriptstyle ab|xy}}$, allowing red arrows in the \textit{non-local} scenario---potentially active in every experimental trial---can generate any behavior. In this work, we investigate to what extent the correlations observed in Bell experiments can be accounted for by \textit{minimally relaxing locality}, that is, by introducing non-local (red) arrows as sparingly as possible across repeated trials.
}
\end{figure*}

Consider the simplest Bell scenario, where two parties -- Alice and Bob -- perform measurements on spatially separated systems. In each trial, Alice chooses one of two measurement settings $x\in\{0,1\}$ and likewise Bob chooses one of two settings $y\in\{0,1\}$, obtaining respective outcomes $a,b\in\{0,1\}$. Repeating the experiment many times generates statistics that defines four conditional distributions $\{P_{\scriptscriptstyle  ab|xy}\}_{\scriptscriptstyle  xy\,=\,00,01,10,11}$, where $P_{\scriptscriptstyle ab|xy}$ is the probability of outcomes $a,b$ given measurement choices $x,y$. We refer to this full collection as a \textit{behavior}, represented compactly by the sixteen-component vector $\vec{P}\equiv(P_{\scriptscriptstyle ab|xy})$. In general, any behavior is admissible provided it is non-negative, $\vec{P}\geqslant0$, and properly normalized $\sum_{\scriptscriptstyle  ab}\,P_{\scriptscriptstyle  ab|xy}=1$ for each $x,y$. The set of all such behaviors defines the \textit{correlation (or behavior) space}, denoted by $\mathcal{P}$.

Bell’s scenario becomes especially revealing when physical constraints are introduced. Specifically, spatial separation is conventionally understood to preclude any direct causal influence between Alice and Bob. In this view, the observed correlations (the \textit{behavior}) are attributed to a shared cause in the past, represented by a hidden variable $\lambda$ (Fig.~\ref{Bell-DAGs}, left). This principle, known as the \textit{(Bell) locality assumption}, is formally expressed by the factorization condition~\cite{Be64,Be93,Sc19,BrCaPiScWe14,Gi14c}
\begin{eqnarray}\label{Bell-factorisation}
P_{\scriptscriptstyle ab|xy}&=&\sum_{\scriptscriptstyle \lambda}\ P_{\scriptscriptstyle a|x\lambda}\cdot P_{\scriptscriptstyle b|y\lambda}\cdot P_{\scriptscriptstyle \lambda}\,.
\end{eqnarray}
Typically, this condition is justified on causal grounds; however, for our purposes, it also suffices to regard it as a purely statistical constraint.\footnote{\label{footnote-1}
For a detailed discussion of the assumptions underlying Bell’s theorem, see e.g. Ref.~\cite{WiCa17}. Here, for clarity of exposition, we adopt the conventional causal approach, illustrated in Fig.~\ref{Bell-DAGs} (left). However, we should emphasize that our analysis relies solely on the factorization condition in Eq.\,(\ref{Bell-factorisation}), which defines the local polytope $\mathcal{P}_{\scriptscriptstyle Loc}$ --- the central object of our study. From this perspective, causal notions serve only a subsidiary (or illustrative) role in the paper. See Refs.~\cite{BlPoYeGaBo21,BlGa24} for treatments of other assumptions, and Ref.~\cite{ViRaCa25} for a different, purely probabilistic analysis of the assumptions underlying Bell inequalities.} 

The situation in which locality is imposed stands in sharp contrast to one where direct influence is allowed and the assumption is relaxed (Fig.~\ref{Bell-DAGs}, right). The central question is whether these two explanations --- \textit{local} and \textit{non-local} --- can be distinguished solely on the basis of observed data, without access to the hidden variable~$\lambda$. As it turns out, the \textit{locality assumption} imposes a set of testable constraints on the observed behavior~$\vec{P}$.

Firstly, the correlations must satisfy the non-signaling condition. This requires that Alice cannot deduce Bob’s measurement choice ($y=0$ or $1$) from the statistics available on her side alone, and likewise Bob cannot deduce Alice’s choice ($x=0$ or $1$) from his own statistics. Formally, these requirements reduce to four independence conditions on how the marginal probabilities on one side depend on the measurement choices of the other. These conditions can be compactly expressed as a single constraint on so-called \textit{maximal signaling}
\begin{eqnarray}\label{NS}
\Delta&:=&\max_{\scriptscriptstyle{i\text{\,=\,}1,...,4}}\ |\Delta_i|\ =\ 0\,,
\end{eqnarray}
where 
\begin{eqnarray}\label{NS-AB0}
\Delta_{\scriptscriptstyle  1}&=&P_{\scriptscriptstyle a\text{=}0|00}\,-\,P_{\scriptscriptstyle a\text{=}0|01}\quad\quad\ \ \text{[Bob\,$\rightarrow$\,Alice}_{\,x\text{=}0}\text{]}\,,\\\label{NS-AB1}
\Delta_{\scriptscriptstyle  2}&=&P_{\scriptscriptstyle a\text{=}0|10}\,-\,P_{\scriptscriptstyle a\text{=}0|11}\quad\quad\ \ \text{[Bob\,$\rightarrow$\,Alice}_{\,x\text{=}1}\text{]}\,,\\\label{NS-BA0}
\Delta_{\scriptscriptstyle  3}&=&P_{\scriptscriptstyle b\text{=}0|00}\,-\,P_{\scriptscriptstyle b\text{=}0|10}\quad\quad\ \ \text{[Alice\,$\rightarrow$\,Bob}_{\,y\text{=}0}\text{]}\,,\\\label{NS-BA1}
\Delta_{\scriptscriptstyle  4}&=&P_{\scriptscriptstyle b\text{=}0|01}\,-\,P_{\scriptscriptstyle b\text{=}0|11}\quad\quad\ \ \text{[Alice\,$\rightarrow$\,Bob}_{\,y\text{=}1}\text{]}\,,\end{eqnarray}
with $P_{\scriptscriptstyle a\text{=}0|xy}=\sum_{\scriptscriptstyle b}P_{\scriptscriptstyle 0b|xy}$ and $P_{\scriptscriptstyle b\text{=}0|xy}=\sum_{\scriptscriptstyle a}P_{\scriptscriptstyle a0|xy}$ denoting the marginal probabilities observed on Alice’s and Bob’s sides, respectively. Requiring all $\Delta_{\scriptscriptstyle i}$ to vanish (or equivalently $\Delta=0$) ensures that neither party can extract any information about the other’s measurement choice from their local statistics.

Secondly, the correlations must satisfy the four \textit{Bell-CHSH inequalities}~\cite{ClHoShHo69}, expressed in compact form as
\begin{eqnarray}\label{CHSH}
S&:=&\max_{\scriptscriptstyle{i\text{\,=\,}1,...,4}}\ |S_i|\ \leqslant\ 2\,,
\end{eqnarray}
where the four linear \textit{CHSH expressions} are defined as
\begin{eqnarray}\label{S1}
S_{\scriptscriptstyle  1}&=&\ \ \ \langle ab\rangle_{\scriptscriptstyle  00}+\langle ab\rangle_{\scriptscriptstyle  01}+\langle ab\rangle_{\scriptscriptstyle  10}-\langle ab\rangle_{\scriptscriptstyle  11}\,,\\\label{S2}
S_{\scriptscriptstyle  2}&=&\ \ \ \langle ab\rangle_{\scriptscriptstyle  00}+\langle ab\rangle_{\scriptscriptstyle  01}-\langle ab\rangle_{\scriptscriptstyle  10}+\langle ab\rangle_{\scriptscriptstyle  11}\,,\\\label{S3}
S_{\scriptscriptstyle  3}&=&\ \ \ \langle ab\rangle_{\scriptscriptstyle  00}-\langle ab\rangle_{\scriptscriptstyle  01}+\langle ab\rangle_{\scriptscriptstyle  10}+\langle ab\rangle_{\scriptscriptstyle  11}\,,\\\label{S4}
S_{\scriptscriptstyle  4}&=&-\langle ab\rangle_{\scriptscriptstyle  00}+\langle ab\rangle_{\scriptscriptstyle  01}+\langle ab\rangle_{\scriptscriptstyle  10}+\langle ab\rangle_{\scriptscriptstyle  11}\,,
\end{eqnarray}
and $\langle ab\rangle_{\scriptscriptstyle  xy}=\sum_{\scriptscriptstyle a,b}\,(-1)^{\scriptscriptstyle a+b}\,P_{\scriptscriptstyle  ab|xy}$ denotes the correlation coefficient for a given choice of measurements $x,y$. Note that while the algebraic maximum in the absence of locality is $S=4$, the locality assumption restricts the bound to $S\leqslant2$. Remarkably, as Bell showed~\cite{Be64,Be93}, quantum mechanics permits correlations that go beyond the local bound, up to Tsirelson’s bound of $S=2\sqrt{2}$~\cite{Ts80}.
\vspace{-0.3cm}

\subsection{\textsf{Geometric picture and Fine's theorem}}\vspace{-0.3cm}

According to \textit{Fine's theorem}~\cite{Fi82a}, a behavior $\vec{P}$ admits a local hidden-variable model consistent with Eq.~(\ref{Bell-factorisation}) if and only if it satisfies both the non-signaling conditions Eq.\,(\ref{NS}) and the Bell-CHSH inequalities Eq.\,(\ref{CHSH}). Any violation of these conditions rules out a purely common-cause explanation, leaving a non-local mechanism as the only viable account. See Fig.~\ref{Bell-DAGs} (left vs right).


In other words, the \textit{local polytope} $\mathcal{P}_{\scriptscriptstyle Loc}$ is fully specified by a set of linear constraints
\begin{eqnarray}\label{loc-polytope}
\mathcal{P}_{\scriptscriptstyle Loc}&\ \Leftrightarrow\ &\bigg\{{\,\Delta=0\,,\atop \ S\leqslant2\,,}
\end{eqnarray}
which consist of four equalities and eight inequalities.
Similarly, the \textit{non-signaling polytope} $\mathcal{P}_{\scriptscriptstyle NS}$ is characterized by the four equality constraints
\begin{eqnarray}\label{NS-polytope}
\mathcal{P}_{\scriptscriptstyle NS} &\ \Leftrightarrow\ & \Delta=0 \,.
\end{eqnarray}

This yields two non-trivial subsets of the full \textit{correlation space (polytope)} $\mathcal{P}$, forming the natural hierarchy
\begin{eqnarray}\label{PPR}
\mathcal{P}_{\scriptscriptstyle Loc}\subsetneq \mathcal{P}_{\scriptscriptstyle NS} \subsetneq \mathcal{P} \subsetneq \mathbb{R}^{16}\,.
\end{eqnarray}
See Fig.\,\ref{Fig-Polytopes} for illustration.
Note that these polytopes can equivalently be characterized by their vertices: $\mathcal{P}_{\scriptscriptstyle Loc}$ is generated by the 16 local deterministic strategies~\cite{Fi82a}, $\mathcal{P}_{\scriptscriptstyle NS}$ further includes the 8 external PR-box vertices~\cite{PoRo94}, and $\mathcal{P}$ is spanned by all 256 deterministic strategies. A detailed account is provided in the \textbf{Methods} section.
\begin{figure}[t]
\centering
\includegraphics[width=0.95\columnwidth]{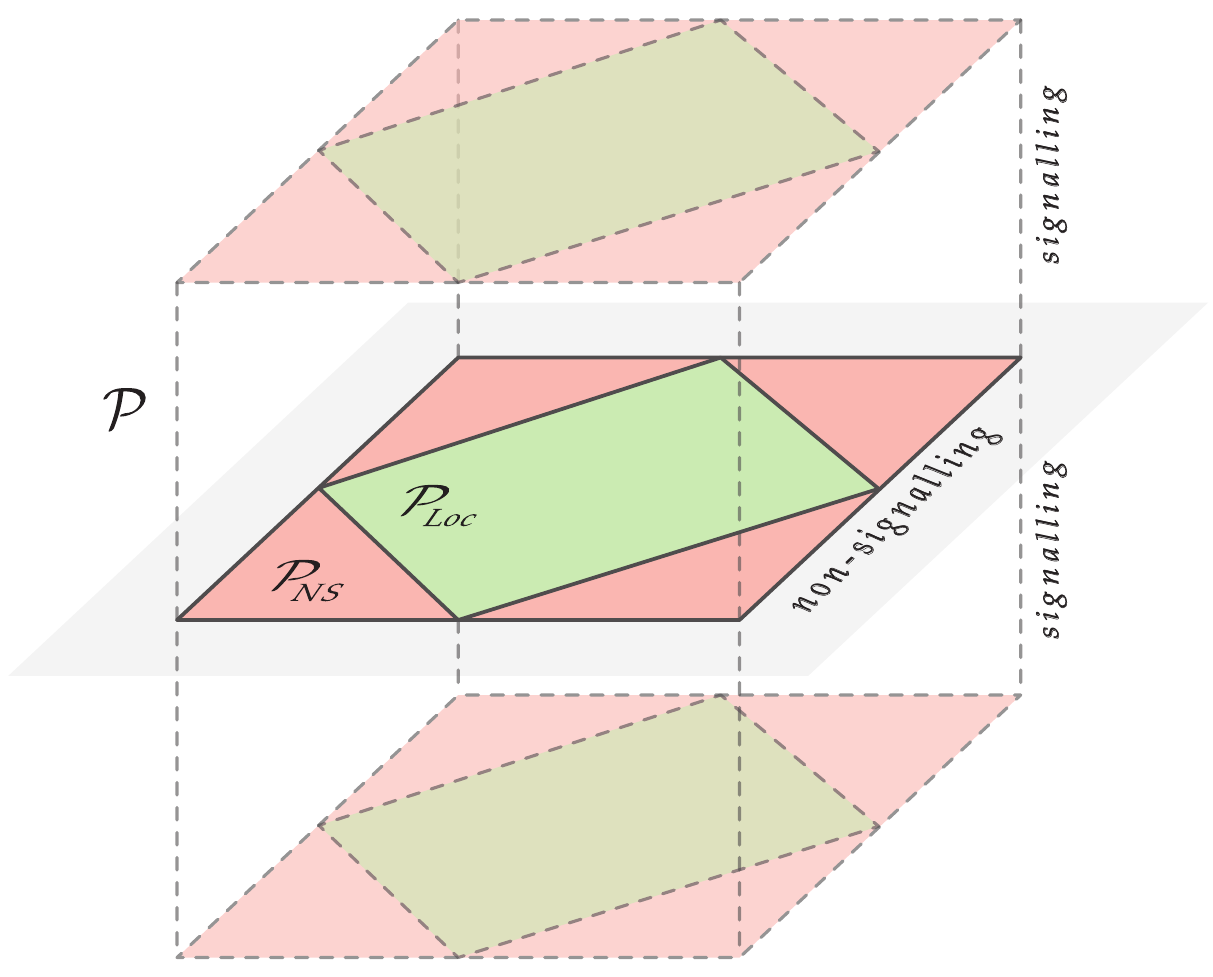}\caption
{\label{Fig-Polytopes}{\bf\textsf{Geometry of behaviors.}} Schematic illustration of the nested structure of convex polytopes embedded in $\mathbb{R}^{\scriptscriptstyle 16}$. Each relaxation of constraints enlarges the set: from the \textit{local polytope} $\mathcal{P}_{\scriptscriptstyle Loc}$ (8-dimensions) to the \textit{non-signaling polytope} $\mathcal{P}_{\scriptscriptstyle NS}$ (8-dimensions), and finally to the full \textit{correlation space (polytope)} $\mathcal{P}$ (12-dimensions).}
\end{figure}

\subsection{\textsf{Measuring locality \& non-signaling}}\vspace{-0.3cm}

When locality cannot be maintained---particularly when this is at odds with intuition---the natural question is not merely whether the observed correlations admit a local explanation, but rather how strongly non-local mechanisms must be invoked. This question is especially pertinent when there is suspicion that signaling might be responsible for apparent locality violations.

In the following discussion, we regard non-locality as an exceptional resource, invoked only when necessary to explain the observed correlations, whereas common causes,  as shown in Fig.~\ref{Bell-DAGs}, left, are assumed to be freely available as the conventional sources of correlation. This viewpoint naturally motivates treating non-locality as a (possibly small) correction to predominantly local content. Technically, for a given behavior of interest $\vec{P}$, we therefore consider all possible convex decompositions of the form
\begin{eqnarray}\label{decompL}
\vec{P} &=& p\cdot \vec{P}^{\scriptscriptstyle\,L} + (1-p)\cdot \vec{P}^{\scriptscriptstyle\,NL}\,,
\end{eqnarray}
where $\vec{P}^{\scriptscriptstyle\,L}\in\mathcal{P}_{\scriptscriptstyle\,Loc}$ is a local behavior, $\vec{P}^{\scriptscriptstyle\,NL}\in\mathcal{P}$ is any other (possibly non-local) behavior, and $0 \leqslant p \leqslant 1$. Our aim is to determine the maximal possible contribution of the local component.

\begin{definition}[Measure of locality]\label{Def-muL}\ \\
For a given observed behavior $\vec{P}$, the measure of its local component is defined as
\begin{eqnarray}\label{muL}
\mu_{\scriptscriptstyle  L}(\vec{P})&:=&\underset{\scriptscriptstyle decomp.\,(\ref{decompL})}{\textnormal{max}}\ p\,,
\end{eqnarray}
where the maximum is taken over all convex decompositions of the form in Eq.\,(\ref{decompL}).\end{definition}

\noindent Sometimes referred to as the \textit{local fraction} or \textit{local content} of a behavior $\vec{P}$, this notion was introduced in studies of quantum correlations~\cite{ElPoRo92,Ha91,BaKePi06,CoRe08,CoRe16,PoBrGi12} and later extended to arbitrary non-signaling statistics~\cite{BlPoYeGaBo21}.\footnote{The idea was first put forward in Ref.~\cite{Ha91}, where it was used to derive Bell inequalities. In parallel, Elitzur et al.~\cite{ElPoRo92} introduced it as a measure of locality, demonstrating that the local fraction vanishes in the limit of infinitely many measurement settings, with further refinements provided in Refs.~\cite{BaKePi06,CoRe08,CoRe16}. Remarkably, subsequent studies managed to compute the local fraction explicitly for the quantum statistics of any pure two-qubit state~\cite{PoBrGi12}. Importantly, the concept also applies beyond quantum mechanics: in the two-setting, two-outcome Bell scenario, Ref.~\cite{BlPoYeGaBo21} demonstrated that the local fraction is directly related to the violation of the Bell-CHSH inequalities under the non-signaling assumption, cf. \textbf{Theorem~\ref{Theorem-muL-e}}.} In this work, we dispense with prior assumptions to establish an unconstrained framework for quantifying the extent to which the observed statistics can be accounted for by a local explanation. 

A distinct advantage of the \textit{measure of locality} $\mu_{\scriptscriptstyle L}$ lies in its intuitive interpretation, captured by the following:
\begin{eqnarray}\nonumber
\hspace{0.45cm}\parbox{0.9\columnwidth}{\myuline{\textbf{Question}} (Locality)\vspace{0.05cm}\\ \textit{What is the maximal proportion of trials in a repeated Bell experiment that can be explained by a local model while still reproducing the observed correlations?}}&
\end{eqnarray}
Of course, in the remaining fraction of trials, some form of non-local mechanism must necessarily be involved. Our concern here is not with the particular mechanism, since the manifestation of non-locality may differ or even remain unknown across experimental contexts. What matters is that the question of \textit{"how often must non-locality minimally occur"} can still be posed in a clear and meaningful way.

To demonstrate its generality, this approach can also be adapted to quantify the extent of signaling present in a given behavior. Specifically, the signaling content of $\vec{P}$ can be characterized by considering convex decompositions of the form
\begin{eqnarray}\label{decompNS}
\vec{P}&=&p\cdot\vec{P}^{\scriptscriptstyle{\,NS}}+(1-p)\cdot\vec{P}^{\scriptscriptstyle{\,S}},
\end{eqnarray}
where $\vec{P}^{\scriptscriptstyle\,NS}\in\mathcal{P}_{\scriptscriptstyle NS}$ is a non-signaling behavior, $\vec{P}^{\scriptscriptstyle\,S}\in\mathcal{P}$ is any other (possibly signaling) behavior, and $0 \leqslant p \leqslant 1$. In direct analogy with the local case, we define:
\begin{definition}[Measure of non-signaling]\label{Def-muNS}\ \\
For a given observed behavior $\vec{P}$, we define the measure of its non-signaling component as
\begin{eqnarray}\label{muNS}
\mu_{\scriptscriptstyle  NS}(\vec{P})&:=&\underset{\scriptscriptstyle decomp.\,(\ref{decompNS})}{\textnormal{max}}\ p\,,
\end{eqnarray}
where the maximum is taken over all possible convex decompositions of the form in Eq.\,(\ref{decompNS}).
\end{definition}
\noindent Analogously, this may be termed the \textit{non-signaling fraction} or \textit{non-signaling content} of a behavior~$\vec{P}$. This case, too, admits a simple interpretation. If signaling is regarded as a costly resource, then the \textit{measure of non-signaling}~$\mu_{\scriptscriptstyle NS}$ answers the following question:\footnote{We note that, although both locality and signaling admit statistical definitions, via Eqs.\,(\ref{Bell-factorisation}) and (\ref{NS}), only locality allows for a deeper causal interpretation. This distinction leads to a subtle shift in emphasis: in the case of locality one may ask about the existence of a partial local model, whereas for non-locality one can only ask about the existence of a suitable statistical decomposition. Cf. footnote~\ref{footnote-1}.}
\begin{eqnarray}\nonumber
\hspace{0.35cm}\parbox{0.94\columnwidth}{\myuline{\textbf{Question}} (Non-signaling)\vspace{0.05cm}\\ \textit{What is the maximal proportion of trials in a repeated Bell experiment that can be accounted for by non-signaling statistics while reproducing the observed correlations?}}&
\end{eqnarray}

To recapitulate, the measures $\mu_{\scriptscriptstyle L}$ and $\mu_{\scriptscriptstyle NS}$ offer a unified operational framework for quantifying the indispensability of non-local and signaling resources in reproducing the observed behaviors. Within this framework, we derive explicit expressions for the general case of arbitrary, potentially signaling correlations in the two-setting, two-outcome Bell scenario.
\vspace{-0.3cm}

\subsection{\textsf{Main result}}\vspace{-0.3cm}

As a starting point, we recall a result from Ref.~\cite{BlPoYeGaBo21} concerning non-signaling statistics, presented here in a form adapted to our purposes. For clarity of exposition, all proofs and the specific vectors referenced in the following discussion are presented explicitly in the \textbf{Supplementary Information} (see also the \textbf{Methods} section).

\begin{theorem}\label{Theorem-muL-e}\ \\
Given a \myuline{non-signaling} behavior $\vec{P}$ in a two-setting, two-outcome Bell experiment, the \myuline{measure of locality} is linked to the Bell-CHSH expressions Eqs.\,(\ref{CHSH}) and (\ref{S1}-\ref{S4}), and takes the form
\begin{eqnarray}\hspace{-0.7cm}\mu_{\scriptscriptstyle L}(\vec{P})&=&\min\big\{1\,,\tfrac{1}{2}(4-S)\big\}\ =\  \min_{\scriptscriptstyle{i\text{\,=\,}1,...,8}}\big\{1\,,\,\vec{e}_i^{\scriptscriptstyle{\,T}}\!\cdot\vec{P}\,\big\}\,,\label{muL-e}
\end{eqnarray}
where the $\vec{e}_i$ are 8 specific vectors in $\mathbb{R}^{16}$, each with entries restricted to 0 or 1 (called the e-vectors).
\end{theorem}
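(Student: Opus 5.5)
The plan is to prove the two inequalities $\mu_{\scriptscriptstyle L}(\vec{P})\leqslant\min\{1,\tfrac12(4-S)\}$ and $\mu_{\scriptscriptstyle L}(\vec{P})\geqslant\min\{1,\tfrac12(4-S)\}$ separately. Afterwards we rewrite $\tfrac12(4\mp S_i)$ as scalar products with $0/1$ vectors.

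\textbf{Upper bound.} Take any decomposition (\ref{decompL}) with $p<1$, and set $\vec{P}^{\scriptscriptstyle\,NL}=(\vec{P}-p\vec{P}^{\scriptscriptstyle\,L})/(1-p)$. The conditions $\Delta_i=0$ are linear, and both $\vec{P}$ and $\vec{P}^{\scriptscriptstyle\,L}$ satisfy them. Hence $\vec{P}^{\scriptscriptstyle\,NL}$ also satisfies them, so it is non-signaling, although we never need that fact here. Each $S_i$ is linear, satisfies $|S_i|\leqslant 4$ on all of $\mathcal{P}$ (its algebraic maximum), and satisfies $|S_i|\leqslant2$ on $\mathcal{P}_{\scriptscriptstyle Loc}$ by Fine's theorem. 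Therefore
$|S_i(\vec{P})|\leqslant 2p+4(1-p)=4-2p$,
and so $p\leqslant\tfrac12(4-S)$. Together with the trivial bound $p\leqslant1$ this gives the upper bound.

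\textbf{Lower bound.} If $S\leqslant2$, then $\vec{P}\in\mathcal{P}_{\scriptscriptstyle Loc}$ by Fine's theorem and $\mu_{\scriptscriptstyle L}=1$. If $S>2$, we use the vertex description of $\mathcal{P}_{\scriptscriptstyle NS}$: the 16 local deterministic points together with the 8 PR boxes. Expanding $\vec{P}$ over these vertices, we want a decomposition that contains only \emph{one} PR box, namely the one maximizing the violated expression, say $S_1=+4$ (the other cases follow by relabeling outcomes and settings). The target is
$$\vec{P}=q\,\vec{P}_{\scriptscriptstyle PR}+(1-q)\,\vec{P}^{\scriptscriptstyle\,L},\qquad S_1(\vec{P}^{\scriptscriptstyle\,L})=2,$$
which forces $q=\tfrac12(S-2)$ and local weight $1-q=\tfrac12(4-S)$. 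Two approaches are available.

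\emph{Cited route.} Invoke the known fact (Barrett et al.) that any non-signaling box in the CHSH scenario can be written as a mixture of local deterministic boxes and a single PR box. Equivalently, the region $\{S_1\geqslant2\}\cap\mathcal{P}_{\scriptscriptstyle NS}$ is the pyramid with apex $\vec{P}_{\scriptscriptstyle PR}$ over the CHSH facet $\{S_1=2\}$ of $\mathcal{P}_{\scriptscriptstyle Loc}$.

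\emph{Direct route.} Define $\vec{P}^{\scriptscriptstyle\,L}:=(\vec{P}-q\vec{P}_{\scriptscriptstyle PR})/(1-q)$. It is normalized, non-signaling, and has $S_1=2$. Since at most one CHSH expression can exceed 2 in absolute value on $\mathcal{P}_{\scriptscriptstyle NS}$, the remaining $|S_i|$ should also stay $\leqslant2$.

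\textbf{Main obstacle.} The step I expect to be hardest is showing that $\vec{P}^{\scriptscriptstyle\,L}\geqslant0$, so that it is a genuine behavior and Fine's theorem places it in $\mathcal{P}_{\scriptscriptstyle Loc}$. The PR box puts weight $\tfrac12$ on each "winning" outcome pair, so subtracting $q\vec{P}_{\scriptscriptstyle PR}$ could in principle drive some entry negative. My first attempt is to parametrize the non-signaling $\vec{P}$ by its marginals and correlators and check positivity using $S_1>2$. If that gets messy, I will fall back on the pyramid/vertex-enumeration argument above.

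\textbf{Rewriting via e-vectors.} Using $\langle ab\rangle_{xy}=2P_{\scriptscriptstyle a=b|xy}-1$, write $S_1=\sum_{xy}(2P^{\rm win}_{xy}-1)$. Here "win" means $a\oplus b=xy$, i.e. $a=b$ except for $xy=11$, where it means $a\neq b$. Then, by normalization,
$$\tfrac12(4-S_1)=\sum_{xy}P^{\rm lose}_{xy},$$
which is a scalar product $\vec{e}^{\,T}\!\cdot\vec{P}$ with a $0/1$ vector that marks the losing entries. Repeating this for $\pm S_1,\dots,\pm S_4$ produces the 8 e-vectors. Since $\tfrac12(4-S)=\min_i\tfrac12(4\mp S_i)$, this yields the second equality in (\ref{muL-e}).
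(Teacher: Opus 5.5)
Your argument is correct, and it takes a genuinely different route from the paper.

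The paper's first proof only checks the identities $\vec e_i^{\,T}\!\cdot\vec P=2\mp\tfrac12 S_j$, which is your ``losing entries'' computation. It then imports $\mu_{\scriptscriptstyle L}=\min\{1,\tfrac12(4-S)\}$ from Ref.~\cite{BlPoYeGaBo21}. Its alternative proof deduces the formula from Theorem~\ref{theoremL}: on non-signaling behaviors the $f$-measures equal $1$, the $g$- and $t$-measures are $\geqslant1$, and every $s$-measure equals $\tfrac12(\vec e_i^{\,T}\!\cdot\vec P+1)\geqslant\min\{1,\vec e_i^{\,T}\!\cdot\vec P\}$. That is a purely dual argument, and it rests on the computer enumeration of the 128 dual vertices.

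You work on both sides of the LP directly. Your upper bound is just dual feasibility of the $e$-vectors (every local deterministic strategy loses at least one CHSH round), and it holds for arbitrary, even signaling, behaviors. Your lower bound is an explicit primal certificate. This makes the proof self-contained and independent of both the citation and the vertex enumeration. It also shows that the non-local part can be taken to be a single PR box, and it isolates the one place where $\Delta=0$ is needed. The paper's route buys something different: it explains why the signaling-specific $g$-, $t$-, $s$-vectors become inactive on $\mathcal P_{\scriptscriptstyle NS}$.

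\textbf{Positivity of the local part.} The step you flag as the main obstacle does close with the parametrization you propose. Write $P_{ab|xy}=\tfrac14\big(1+(-1)^aA_x+(-1)^bB_y+(-1)^{a\oplus b}C_{xy}\big)$ and $q=\tfrac12(S_1-2)$. Each winning entry then satisfies an identity such as
$P_{00|00}-\tfrac q2=P_{01|01}+P_{10|10}+P_{00|11}$,
$P_{11|00}-\tfrac q2=P_{10|01}+P_{01|10}+P_{11|11}$, or
$P_{01|11}-\tfrac q2=P_{10|00}+P_{01|01}+P_{01|10}$,
and the remaining entries follow by symmetry. Hence $\vec P-q\vec P_{\scriptscriptstyle PR}\geqslant0$. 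On $\mathcal P_{\scriptscriptstyle NS}$ these are equivalent to the paper's identities $\vec s^{\,T}\!\cdot\vec P=\tfrac12(\vec e^{\,T}\!\cdot\vec P+1)$, read on the primal side.

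\textbf{Bounding the other CHSH expressions.} ``At most one $|S_i|$ exceeds $2$'' is not enough here, because $S_1(\vec P^{\scriptscriptstyle\,L})=2$ exactly. Use instead $|S_i|+|S_j|\leqslant4$ for $i\neq j$. This holds on all of $\mathcal P$, since e.g.\ $S_1+S_2=2(\langle ab\rangle_{00}+\langle ab\rangle_{01})$ and $S_1-S_2=2(\langle ab\rangle_{10}-\langle ab\rangle_{11})$. Combined with $S_j(\vec P_{\scriptscriptstyle PR})=0$ for $j\neq1$, it gives $|S_j(\vec P^{\scriptscriptstyle\,L})|=|S_j(\vec P)|/(1-q)\leqslant(4-S_1)/(1-q)=2$.
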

\noindent This provides a clear interpretation of the degree of violation of the Bell-CHSH inequalities for non-signaling statistics. In the second equality, we recast the expression to underscore its resemblance to the subsequent analysis. The key observation is that the measure $\mu_{\scriptscriptstyle L}$ can be neatly expressed as the minimum over a family of eight simpler linear expressions, defined as projections of the behavior $\vec{P}$ onto specific directions (the \textit{e-vectors}) in the correlation space. In practice, each expression just selects and adds together certain terms of the behavior $\vec{P}$, as indicated by the non-zero entries of $\vec{e}_i$. These sums can be thought of as simple snapshots of $\vec{P}$ taken along particular directions in correlation space.

In a similar fashion, maximal signaling can be characterized by the minimum of simple linear expressions associated with specific directions (the \textit{f-vectors}) in the correlation space.

\begin{lemma}\label{Lemma-muNS-f}\ \\
The quantity of \myuline{maximal signaling} for a given behavior $\vec{P}$, as defined in Eqs.\,(\ref{NS}) and (\ref{NS-AB0}-\ref{NS-BA1}), can be written in the form
\begin{eqnarray}\label{NS-f}
\Delta&=&1-\min_{\scriptscriptstyle{j\text{\,=\,}1,...,8}}\ \vec{f}_j^{\scriptscriptstyle{\,T}}\!\cdot\vec{P}\,,
\end{eqnarray}
where the $\vec{f}_j$ are 8 specific vectors in $\mathbb{R}^{16}$, each with entries restricted to 0 or 1 (called the f-vectors).
\end{lemma}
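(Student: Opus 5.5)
The plan is to write each signed signaling quantity $\pm\Delta_i$ as $1-\vec{f}^{\,T}\!\cdot\vec{P}$ for a suitable 0/1 vector $\vec{f}$, using only normalization. Then $\Delta=\max_i|\Delta_i|=\max_{i,\pm}(\pm\Delta_i)$ is a maximum over $4\times 2=8$ linear expressions. This becomes $1-\min_j \vec{f}_j^{\,T}\!\cdot\vec{P}$ over eight f-vectors.

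Take $\Delta_1=P_{a=0|00}-P_{a=0|01}$. Normalization of the $xy=01$ distribution gives $P_{a=0|01}=1-P_{a=1|01}$. Hence
\begin{eqnarray}\nonumber
\Delta_1&=&P_{a=0|00}+P_{a=1|01}-1\,,\\\nonumber
-\Delta_1&=&1-\big(P_{a=0|00}+P_{a=1|01}\big)\,.
\end{eqnarray}
The second line already has the required form, with $\vec{f}$ the indicator of the four entries $P_{0b|00}$ and $P_{1b|01}$ for $b=0,1$. For the first line, normalize the other setting instead: $P_{a=0|00}=1-P_{a=1|00}$ gives
\begin{eqnarray}\nonumber
\Delta_1&=&1-\big(P_{a=1|00}+P_{a=0|01}\big)\,,
\end{eqnarray}
with $\vec{f}$ the indicator of $P_{1b|00}$ and $P_{0b|01}$.

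The same manipulation applies to $\Delta_2$ with settings $10,11$ on Alice's marginal. It applies to $\Delta_3$ and $\Delta_4$ with Bob's marginals, where the pairs are $(00,10)$ and $(01,11)$. Each $\Delta_i$ therefore yields two f-vectors: each has four unit entries, entries restricted to 0 or 1, and the two are complementary on the two relevant settings. This gives eight vectors in total.

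Finally, $|\Delta_i|=\max(\Delta_i,-\Delta_i)$. Taking the maximum over $i$ gives
\begin{eqnarray}\nonumber
\Delta=\max_{j}\big(1-\vec{f}_j^{\,T}\!\cdot\vec{P}\big)=1-\min_j\vec{f}_j^{\,T}\!\cdot\vec{P}\,.
\end{eqnarray}
There is no real obstacle. The only point requiring care is to use normalization on the opposite setting for each sign, so that every coefficient is $+1$ or $0$ and no $-1$ entries appear. The remaining work is to list the eight vectors explicitly in the ordering of $\vec{P}$.
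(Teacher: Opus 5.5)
Your proof is correct and is essentially the paper's argument. The paper verifies $\vec{f}_{2k-1}^{\,T}\!\cdot\vec{P}=1+\Delta_k$ and $\vec{f}_{2k}^{\,T}\!\cdot\vec{P}=1-\Delta_k$ for $k=1,\dots,4$, then concludes $\min_j\vec{f}_j^{\,T}\!\cdot\vec{P}=1-\max_k|\Delta_k|=1-\Delta$; your trick of normalizing on the opposite setting produces exactly these vectors (e.g.\ $\vec{f}_1=(1,1,0,0,0,0,1,1,0,\dots,0)$ and $\vec{f}_2=(0,0,1,1,1,1,0,0,0,\dots,0)$), the only difference being that you derive them rather than checking them against an explicit list.
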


Our aim is to move beyond non-signaling statistics in the Bell scenario and consider the full range of arbitrary behaviors $\vec{P}$. One might naively expect that \textbf{Theorem~\ref{Theorem-muL-e}} would suffice, with a modest adjustment to account for signaling in the spirit of \textbf{Lemma~\ref{Lemma-muNS-f}}. This intuition, however, proves to be misleading.

The main result of this work is a closed-form expression for the measure of locality in \textbf{Definition~\ref{Def-muL}}, which we present in the following theorem.

\begin{theorem}\label{theoremL}\ \\
For a given behavior $\vec{P}$ in a two-setting, two-outcome Bell experiment, the \myuline{measure of locality} takes the form
\begin{eqnarray}\label{muL-solution}
\mu_{\scriptscriptstyle  L}(\vec{P})&=&\min_{\vec{q}\,\in\,\mathbb{Q}}\ \vec{q}^{\scriptscriptstyle{\,T}}\!\cdot\vec{P}\,,
\end{eqnarray}
where $\mathbb{Q}$ denotes the set of 128 specific vectors in $\mathbb{R}^{16}$, each with entries restricted to 0 or 1.
\end{theorem}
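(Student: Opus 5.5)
We would treat Definition~\ref{Def-muL} as a linear program and obtain the closed form from its dual. Parametrize the local component as a conic combination of the 16 local deterministic vertices, writing $p\,\vec{P}^{\,L}=\sum_{k=1}^{16}\lambda_k\vec{D}_k$ with $\lambda_k\geqslant0$. The non-local remainder $(1-p)\vec{P}^{\,NL}$ can be any non-negative vector whose four blocks share the common sum $1-p$. Since $\vec{P}$ itself is normalized in each block, this remainder is automatically correctly normalized once $p=\sum_k\lambda_k$. The primal problem therefore becomes
\begin{equation*}
\mu_L(\vec{P})=\max\Big\{\textstyle\sum_k\lambda_k\;:\;\lambda\geqslant0,\ \sum_k\lambda_k\vec{D}_k\leqslant\vec{P}\Big\},
\end{equation*}
with the inequality read componentwise in $\mathbb{R}^{16}$. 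This reformulation is the key simplification: in the general signaling case the only constraint on the remainder is positivity.

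Next we write the dual program:
\begin{equation*}
\mu_L(\vec{P})=\min\big\{\vec{q}^{\,T}\vec{P}\;:\;\vec{q}\geqslant0,\ \vec{q}^{\,T}\vec{D}_k\geqslant1\ \ \forall k\big\}.
\end{equation*}
Strong duality holds because the primal problem is feasible (take $\lambda=0$) and bounded (since $\sum_k\lambda_k\leqslant1$). The dual feasible region $\mathcal{Q}=\{\vec{q}\geqslant0:\vec{q}^{\,T}\vec{D}_k\geqslant1\}$ is a pointed polyhedron, and $\vec{P}\geqslant0$, so the minimum is attained at one of its vertices. Hence $\mathbb{Q}$ should be taken to be the vertex set of $\mathcal{Q}$, or any subset that still attains the minimum for every $\vec{P}\in\mathcal{P}$.

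The remaining task, and the main obstacle, is to show that every vertex of $\mathcal{Q}$ that matters is a $0/1$ vector, and that there are exactly 128 of them. We would first reduce the vertex set by noting that $\vec{P}$ ranges over the cone generated by non-negative vectors. Any vertex that dominates another feasible point componentwise is therefore never the unique minimizer, so only the minimal elements of $\mathcal{Q}$ need to be kept.

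A $0/1$ vector $\vec{q}$ is feasible exactly when its support $T\subseteq\{(a,b,x,y)\}$ meets the support of every local deterministic strategy. In other words, $T$ must be a hitting set, or transversal, of the 16 four-element sets $\{(a(x),b(y),x,y)\}_{xy}$. The candidate $\mathbb{Q}$ is then the family of minimal such transversals.

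To show that fractional vertices are irrelevant, we would try to prove that the constraint system is totally dual integral, or more concretely exhibit an integral optimal dual solution for each $\vec{P}$. For the latter, we would use complementary slackness with an explicit primal decomposition adapted to the ordering of the entries of $\vec{P}$. A fallback is a direct computational vertex enumeration of $\mathcal{Q}$ (it sits in $\mathbb{R}^{16}$ with 32 constraints, so this is feasible). That enumeration would confirm that all non-dominated vertices are $0/1$, and would also count them.

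The enumeration or transversal count would be organized using the symmetry group of the scenario: relabeling of outcomes, relabeling of settings, and party exchange. This splits the 128 vectors into a few orbits. We expect these orbits to include the $\vec{e}_i$ of Theorem~\ref{Theorem-muL-e} and vectors related to the $\vec{f}_j$ of Lemma~\ref{Lemma-muNS-f}, which gives a consistency check on non-signaling inputs. Finally, we would verify that each of the 128 vectors is indispensable by exhibiting, for each one, a behavior $\vec{P}$ at which it alone attains the minimum.
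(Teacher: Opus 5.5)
Your route is the paper's. It uses the same primal LP over the 16 local deterministic strategies and the same dual $\min\{\vec q^{\,T}\vec P:\ L^{T}\vec q\geqslant\vec 1,\ \vec q\geqslant 0\}$. It then applies strong duality and reduces the problem to the vertices of this fixed polyhedron. The paper enumerates those vertices by computer (PORTA), exactly as in your fallback, and finds 132 vertices, all $0/1$. So no fractional vertices arise, and the TDI or complementary-slackness detour is not needed once the enumeration is done.

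One correction concerns your count. With the criterion you state (keep the non-dominated vertices, i.e.\ the minimal transversals, letting $\vec P$ range over the whole non-negative cone), you will get 132 vectors, not 128. The four block indicators $(1111\,0000\,0000\,0000)^{T},\dots,(0000\,0000\,0000\,1111)^{T}$ are also minimal transversals: removing any entry $(a,b\,|\,x,y)$ leaves the strategy with $f(x)=a$, $g(y)=b$ unhit, so they are genuine non-dominated vertices. They can be dropped only by using normalization. Each gives $\vec q^{\,T}\vec P=1$ identically. Meanwhile $\vec f_1+\vec f_2$ equals the sum of two of them, so $\min\{\vec f_1^{\,T}\vec P,\vec f_2^{\,T}\vec P\}\leqslant 1$ and they never lower the minimum. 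This is precisely how the paper passes from 132 vertices to the 128 vectors in $\mathbb{Q}$.
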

\noindent Accordingly, the full solution follows the structure outlined above, appearing as the minimum of a family of simpler linear measures, each aligned with a specific direction in the correlation space. Remarkably, however, the solution reveals an unexpected degree of complexity, as all 128 vectors in the set $\mathbb{Q}$ prove essential. 

It is very instructive to compare the above result with the parallel measure of non-signaling in \textbf{Definition~\ref{Def-muNS}}, providing a clearer perspective on the underlying structures in the problem.

\begin{theorem}\label{theoremNS}\ \\
For a given behavior $\vec{P}$ in a two-setting, two-outcome Bell experiment, the \myuline{measure of non-signaling} takes the form
\begin{eqnarray}\label{muNS-solution}
\mu_{\scriptscriptstyle  NS}(\vec{P})&=&\min_{\vec{q}\,\in\,\mathbb{S}}\ \vec{q}^{\scriptscriptstyle{\,T}}\!\cdot\vec{P}\,,
\end{eqnarray}
where $\mathbb{S}$ denotes a set of 120 special vectors in $\mathbb{R}^{16}$, each with entries restricted to 0, 1 or 2.
\end{theorem}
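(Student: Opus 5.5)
We treat $\mu_{\scriptscriptstyle NS}(\vec{P})$ as a linear program and read off the closed form from the vertices of its dual feasible region. Write $\vec{P}=\vec{u}+\vec{v}$, where $\vec{u}=p\,\vec{P}^{\scriptscriptstyle NS}$ and $\vec{v}=(1-p)\,\vec{P}^{\scriptscriptstyle S}$. Then $\mu_{\scriptscriptstyle NS}$ is the maximum of $p$ subject to the following constraints:
\begin{itemize}
\item $\vec{u}\geqslant0$ and $\vec{P}-\vec{u}\geqslant0$;
\item $\sum_{ab}u_{ab|xy}=p$ for every $x,y$;
\item the four linear non-signaling equalities $\Delta_i(\vec{u})=0$ from Eqs.\,(\ref{NS-AB0}-\ref{NS-BA1}), which are homogeneous and hence apply to the unnormalised $\vec{u}$.
\end{itemize}
Normalisation of $\vec{v}$ is then automatic. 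Since $\vec{u}=0$, $p=0$ is feasible and $p\leqslant1$, strong duality holds. The dual reads $\mu_{\scriptscriptstyle NS}(\vec{P})=\min \vec{q}^{\,T}\!\cdot\vec{P}$ over the dual polyhedron, where $\vec{q}\geqslant0$ is the multiplier vector of the constraint $\vec{u}\leqslant\vec{P}$. Eliminating the remaining multipliers (four normalisation weights $w_{xy}$ with $\sum w_{xy}=1$, and four free multipliers $\lambda_i$ for the NS equalities), the feasibility condition becomes
\[
q_{ab|xy}\ \geqslant\ w_{xy}+\textstyle\sum_i \lambda_i\, c^{(i)}_{ab|xy}\,,
\]
where $c^{(i)}\in\{0,\pm1\}$ are the coefficients of $\Delta_i$. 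The minimum of a linear function over a polyhedron is attained at a vertex (modulo recession directions, which are $\geqslant0$ and harmless since $\vec{P}\geqslant0$). Hence $\mu_{\scriptscriptstyle NS}=\min_{\vec{q}\in\mathbb{S}}\vec{q}^{\,T}\!\cdot\vec{P}$, with $\mathbb{S}$ the set of projections onto $\vec{q}$-space of the vertices of this dual polyhedron, and $\vec{q}=\max(0,\cdot)$ of the right-hand side at the optimum.

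The second step is to enumerate these vertices and show that there are exactly 120, with entries in $\{0,1,2\}$. I would exploit the symmetry group of the scenario: relabelling outcomes (conditioned on settings), relabelling settings, and swapping parties, under which both $\mathcal{P}$ and $\mathcal{P}_{\scriptscriptstyle NS}$ are invariant. It then suffices to find a few orbit representatives. Natural candidates are:
\begin{itemize}
\item the trivial vectors $\vec{q}$ equal to the indicator of a single setting pair $xy$ (all $w$ on one $xy$, $\lambda=0$), giving $\mu\leqslant1$;
\item the f-vectors of Lemma~\ref{Lemma-muNS-f}, giving $\mu_{\scriptscriptstyle NS}\leqslant1-|\Delta_i|$ by an obvious decomposition argument;
\item vectors mixing two or more $\Delta_i$ along a cycle $00\to01\to11\to10\to00$, where $\lambda$'s of magnitude one can force entries equal to $2$.
\end{itemize}
For integrality I would argue that the constraint matrix, built from settings-pair indicators and the signed marginal differences, yields vertices with $w_{xy}\in\{0,\tfrac12,1\}$ and $\lambda_i\in\{0,\pm\tfrac12,\pm1\}$. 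Multiplying out then gives $q$-entries in $\{0,1,2\}$; equivalently, one can rescale $\vec{q}$ so that the total weight is normalised.

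Finally, for tightness and minimality I would check two things. First, each candidate is dual feasible; this gives the upper bound $\mu_{\scriptscriptstyle NS}\leqslant\vec{q}^{\,T}\!\cdot\vec{P}$. Second, for every $\vec{P}$ there is a primal decomposition achieving the minimum. The latter follows from LP duality once the vertex list is complete, so completeness of the enumeration is the crux. To show that no $\vec{q}$ is redundant, for each orbit representative I would exhibit a behavior $\vec{P}$ (e.g.\ a suitable deterministic signaling strategy, one of the 256) where that $\vec{q}$ alone attains the minimum.

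The main obstacle is the exhaustive vertex enumeration. The dual lives in roughly 24 dimensions with 16 inequality families, so I would combine the symmetry reduction with a case split on the sign pattern of $(\lambda_1,\dots,\lambda_4)$ and on which $w_{xy}$ vanish, and cross-check the count of 120 with a computer vertex enumeration (cdd/lrs) to confirm that the orbit list is complete.
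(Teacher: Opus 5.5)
Your plan is, in substance, the paper's proof. You set up an LP for $\mu_{NS}$, invoke strong duality, note that the dual feasible region does not depend on $\vec P$, and rely on an exhaustive computer vertex enumeration as the decisive step; the paper uses PORTA for that step.

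The difference is how $\mathcal{P}_{NS}$ is encoded.
\begin{itemize}
\item The paper uses its vertices. It maximises $\vec 1^{\,T}\!\cdot\vec p$ subject to $N\vec p\leqslant\vec P$ and $\vec p\geqslant 0$, where $N$ collects the 16 local deterministic boxes and the 8 PR boxes. The dual region $\{\vec q\geqslant 0:\ N^{T}\vec q\geqslant\vec 1\}$ therefore already lives in $\mathbb{R}^{16}$ and is enumerated directly.
\item You use the defining equalities (normalisation and $\Delta_i=0$). This needs no knowledge of the PR boxes, so it transfers more easily to scenarios whose non-signaling vertices are unknown.
\item Your multipliers also give identities $\vec q^{\,T}\!\cdot\vec P=1+\sum_i\lambda_i\Delta_i+(\text{non-negative slack})$. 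The paper verifies these by hand in Suppl.\ B.4 (e.g.\ the $f$-measures equal $1\pm\Delta_i$).
\item The price is a lifted polyhedron in $(\vec q,w,\lambda)$.
\end{itemize}
The two formulations are equivalent: eliminating $(w,\lambda)$ via LP duality for $\min_{\vec u\in\mathcal{P}_{NS}}\vec q^{\,T}\!\cdot\vec u$ returns exactly the constraints $\vec q^{\,T}\!\cdot\vec N_i\geqslant 1$. However, what you must enumerate is the vertex set of this projection. The projections of all lifted vertices are only guaranteed to form a superset; that keeps the min formula valid but not the count.

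Three claims in your plan need correcting.

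\textbf{(i) The count.} The projected polyhedron has 124 vertices, not 120. Your block-indicator vectors are genuine vertices. They are dropped, as in the paper, only because $\vec f_{2k-1}^{\,T}\!\cdot\vec P+\vec f_{2k}^{\,T}\!\cdot\vec P=2$ makes the $f$-minimum at most $1$.

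\textbf{(ii) Integrality.} Your integrality heuristic is false, because dual certificates generally need negative $w_{xy}$. Take $\vec g_\beta$:
\begin{itemize}
\item The zeros $q_{11|00}=q_{11|11}=0$ force $w_{00},w_{11}\leqslant 0$.
\item If all $w_{xy}\geqslant 0$, then $w_{00}=w_{11}=0$, and the zeros at $01|00$, $10|00$, $01|01$, $10|10$ force $w_{01}\leqslant\lambda_1\leqslant 0$ and $w_{10}\leqslant\lambda_3\leqslant 0$.
\item This contradicts $\sum w_{xy}=1$.
\end{itemize}
A valid certificate is $(w_{00},w_{01},w_{10},w_{11})=(-1,1,1,0)$ with $\lambda=(1,0,1,0)$. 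Nor can a dual vertex be rescaled, since scaling changes both feasibility and the objective. The entries in $\{0,1,2\}$ must therefore be read off the enumeration, exactly as the paper does.

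\textbf{(iii) Non-redundancy.} This is outside the theorem proper, but your plan relies on it: deterministic behaviours cannot certify non-redundancy of any non-$f$ vector. For a deterministic $\vec D$ every $\Delta_i\in\{0,\pm1\}$. So either some $f$-measure equals $0$, or $\vec D$ is local deterministic and all $f$-measures equal $1=\mu_{NS}(\vec D)$. In both cases an $f$-vector attains the minimum. The paper instead uses the normalised complement of the support of $\vec q$.
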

\noindent Here again, the complexity of the solution persists, as all 120 vectors in the set $\mathbb{S}$ contribute to the result. This indicates that the challenge of finding the local fraction stems from the intricate internal structure of the non-signaling space itself. Geometrically, when examining the nested hierarchy in Eq.\,(\ref{PPR}), the difficulty of the problem lies in characterizing the embedding $\mathcal{P}_{\scriptscriptstyle NS} \subsetneq \mathcal{P}$ (\textbf{Theorem~\ref{theoremNS}}), which in turn extends to $\mathcal{P}_{\scriptscriptstyle Loc} \subsetneq \mathcal{P}$ (\textbf{Theorem~\ref{theoremL}}). By contrast, the embedding $\mathcal{P}_{\scriptscriptstyle Loc} \subsetneq \mathcal{P}_{\scriptscriptstyle NS}$ (\textbf{Theorem~\ref{Theorem-muL-e}}) is much simpler, illustrating how broader embeddings entail a richer underlying geometry.

A concise proof of \textbf{Theorem~\ref{theoremL}}, obtained via a linear-programming approach, is outlined in the \textbf{Methods} section. All proofs, including that of \textbf{Theorem~\ref{theoremNS}}, as well as the explicit listing of the vectors discussed, are provided in the \textbf{Supplementary Information}.
\vspace{-0.3cm}

\subsection{\textsf{Structure of the solutions}}\vspace{-0.3cm}
We now unfold the full structure of the solutions defined by the sets $\mathbb{Q}$ and $\mathbb{S}$, revealing the complexity of the problem, the underlying symmetry, and the connection to previous partial results.\vspace{0.2cm}

\myuline{\textbf{\textit{Classification}}}. The solution set $\mathbb{Q}$, associated with the \textit{measure of locality} $\mu_{\scriptscriptstyle  L}$ (\textbf{Theorem~\ref{theoremL}}), admits a natural partition into five categories (see Fig.~\ref{Fig-Solution-Summary}):
\begin{eqnarray}\nonumber
\hspace{-1cm}\!\mathbb{Q}&=&\{\vec{f}_1,...\,,\vec{f}_8\}\,\cup
\,\{\vec{g}_1,...\,,\vec{g}_{16}\}\,\cup\,\{\vec{t}_1,...\,,\vec{t}_{32}\}\,\cup\\\label{Q}
&&\{\vec{s}_1,...\,,\vec{s}_{64}\}\,\cup
\,\{\vec{e}_1,...\,,\vec{e}_8\}\,.
\end{eqnarray}
Similarly, for the \textit{non-signaling measure} $\mu_{\scriptscriptstyle  NS}$ (\textbf{Theorem~\ref{theoremNS}}), the corresponding solution set $\mathbb{S}$ splits into four categories (see Fig.~\ref{Fig-Solution-Summary}):
\begin{eqnarray}\nonumber
\hspace{-1cm}\!\mathbb{S}&=&\{\vec{f}_1,...\,,\vec{f}_8\}\,\cup
\,\{\vec{g}_1,...\,,\vec{g}_{16}\}\,\cup\,\{\vec{t}_1,...\,,\vec{t}_{32}\}\,\cup\\\label{S}
&&\{\vec{z}_1,...\,,\vec{z}_{64}\}\,.
\end{eqnarray}
The above grouping organizes the vectors according to the number of their \textit{non-zero entries}: the $\vec{f}$\,’s have 4, the $\vec{g}$\,’s have 5, the $\vec{t}$\,’s, $\vec{s}$\,’s, and $\vec{z}$\,’s each have 6, and the $\vec{e}$\,’s have 8 non-zero entries. 

Notably, the solution sets $\mathbb{Q}$ and $\mathbb{S}$ in Eqs.\,(\ref{Q}) and (\ref{S}) overlap on the $f$-, $g$-, and $t$-vectors. The difference between the $s$- and $z$-vectors is minor, as each $s$- vector differs from the corresponding $z$-vector only by replacing one of its 1’s with a 2. The essential distinction lies in the additional class of $e$-vectors, which appear exclusively in the solution set $\mathbb{Q}$. See Eq.\,(\ref{representatives}) for a representative of each class and \textbf{Suppl. Inf.} (Part A) for the complete list of vectors.\vspace{0.2cm}

\myuline{\textbf{\textit{Symmetries}}}. The problem considered in this paper is invariant under certain permutations of the indices $a$, $b$, $x$, and $y$ (or, equivalently, of the composite labels $ab|xy$). The elementary symmetry operations are as follows~\cite{Sc19,BrCaPiScWe14}:
\begin{eqnarray}\label{sym-1}
a&\ \leftrightarrow\ &1-a\,,\\\label{sym-2}
b&\ \leftrightarrow\ &1-b\,,\\\label{sym-3}
x&\ \leftrightarrow\ &1-x\,,\\\label{sym-4}
y&\ \leftrightarrow\ &1-y\,,\\\label{sym-5}
(x,a)&\ \leftrightarrow\ &(y,b)\,.
\end{eqnarray}
These transformations define 32 symmetries that reflect the fundamental invariances of the problem, namely, that both \textit{locality} and \textit{non-signaling} are preserved under outcome relabeling [Eqs.\,(\ref{sym-1})-(\ref{sym-2})], setting relabeling [Eqs.\,(\ref{sym-3})-(\ref{sym-4})], and party exchange (Alice-Bob symmetry) [Eq.\,(\ref{sym-5})].
Together, these operations constitute the automorphism group of both the \textit{local polytope} $\mathcal{P}_{\scriptscriptstyle Loc}$ and the \textit{non-signaling polytope} $\mathcal{P}_{\scriptscriptstyle NS}$.

Exploiting these symmetries allows the solution sets to be further simplified. Each class of \textit{f}-, \textit{g}-, and \textit{e}-vectors forms a single orbit under the group generated by Eqs.\,(\ref{sym-1})-(\ref{sym-5}). The class of \textit{t}-vectors consists of two orbits of size 16, while the \textit{s}- and \textit{z}-vector classes each consist of three orbits of sizes 16, 16, and 32.  
Corresponding representative vectors for each class are given below:
\begin{eqnarray}\nonumber
\!\!\!\!\text{$\vec{f}$\,'s:}&\!\!&\ \ \ \begin{array}{rcl}
\vec{f}_\alpha^{\scriptscriptstyle{\,T}} &=& (1,1,0,0,0,0,1,1,0,0,0,0,0,0,0,0)\,,
\end{array}\nonumber\\[5pt]\nonumber
\!\!\!\!\text{$\vec{g}$\,'s:}&\!\!&\ \ \ \begin{array}{rcl}
\vec{g}_\beta^{\scriptscriptstyle{\,T}} &=& (1, 0, 0, 0, 0, 0, 1, 1, 0, 1, 0, 1, 0, 0, 0, 0)\,, 
\end{array}\vspace{10pt}\nonumber\\[5pt]\nonumber
\!\!\!\!\text{$\vec{t}$\,'s:}&\!\!&\left\{\ \begin{array}{rcl}
\vec{t}_\beta^{\scriptscriptstyle{\,T}} &=& (1, 1, 0, 0, 0, 0, 1, 0, 1, 1, 0, 0, 0, 0, 0, 1)\,, \\[5pt]
\vec{t}_\gamma^{\scriptscriptstyle{\,T}} &=& (1, 1, 0, 0, 0, 0, 1, 0, 0, 0, 1, 1, 0, 1, 0, 0)\,, 
\end{array}\right.\nonumber\\[5pt]\nonumber
\!\!\!\!\text{$\vec{s}$\,'s:}&\!\!&\left\{\ \begin{array}{rcl}
\vec{s}_\beta^{\scriptscriptstyle{\,T}} &=& (1, 1, 1, 0, 0, 0, 1, 0, 0, 1, 0, 0, 0, 0, 0, 1)\,, \\[5pt]
\vec{s}_\gamma^{\scriptscriptstyle{\,T}} &=& (1, 1, 1, 0, 0, 0, 0, 1, 0, 0, 0, 1, 1, 0, 0, 0)\,, \\[5pt]
\vec{s}_\delta^{\scriptscriptstyle{\,T}} &=& (1, 1, 1, 0, 0, 0, 1, 0, 0, 0, 0, 1, 0, 1, 0, 0)\,,
\end{array}\right.\vspace{10pt}\nonumber\\[5pt]\nonumber
\!\!\!\!\text{$\vec{z}$\,'s:}&\!\!&\left\{\ \begin{array}{rcl}
\vec{z}_\beta^{\scriptscriptstyle{\,T}} &=& (2, 1, 1, 0, 0, 0, 1, 0, 0, 1, 0, 0, 0, 0, 0, 1)\,, \\[5pt]
\vec{z}_\gamma^{\scriptscriptstyle{\,T}} &=& (2, 1, 1, 0, 0, 0, 0, 1, 0, 0, 0, 1, 1, 0, 0, 0)\,, \\[5pt]
\vec{z}_\delta^{\scriptscriptstyle{\,T}} &=& (2, 1, 1, 0, 0, 0, 1, 0, 0, 0, 0, 1, 0, 1, 0, 0)\,,
\end{array}\right.\vspace{10pt}\nonumber\\[5pt]\nonumber
\!\!\!\!\text{$\vec{e}$\,'s:}&\!\!&\ \ \ \ \,\begin{array}{rcl}
\vec{e}_\alpha^{\scriptscriptstyle{\,T}} &=& (1, 0, 0, 1, 0, 1, 1, 0, 0, 1, 1, 0, 0, 1, 1, 0)\,, 
\end{array}\vspace{10pt}
\\\label{representatives}
\end{eqnarray}
where the labels $\alpha$, $\beta$, $\gamma$, and $\delta$ distinguish the orbits and denote their respective cardinalities of 8, 16, 16, and 32. 
\vspace{0.2cm}



\myuline{\textbf{\textit{Relation to previous results}}}. A closer inspection reveals that several components of the solution sets $\mathbb{Q}$ and $\mathbb{S}$ in Eqs.\,(\ref{Q}) and (\ref{S}) can be associated with well-known quantities.
Specifically, the minimum over the class of \textit{f-measures} coincides with the \textit{maximal signaling} $\Delta$ defined in Eq.\,(\ref{NS}); see \textbf{Lemma~\ref{Lemma-muNS-f}}. This follows from the fact that each individual \textit{f-measure} can be expressed in terms of the corresponding signaling signature $\Delta_{\scriptscriptstyle i}$ given in Eqs.~(\ref{NS-AB0})-(\ref{NS-BA1}); see Eqs.~(\ref{L1-f12})-(\ref{L1-f78}) in \textbf{Suppl. Inf.} Thus, these \textit{f-measures} provide an alternative characterization of the non-signaling polytope $\mathcal{P}_{\scriptscriptstyle NS}$.
Likewise, the minimum over the class of \textit{e-measures} is directly related to the maximal \textit{Bell-CHSH expression} $S$ defined in Eq.\,(\ref{CHSH}); see \textbf{Theorem~\ref{Theorem-muL-e}} and Ref.~\cite{BlPoYeGaBo21}. Moreover, each individual \textit{e-measure} can be expressed in terms of the corresponding $S_{\scriptscriptstyle i}$ given in Eqs.~(\ref{S1})-(\ref{S4}); see Eqs.~(\ref{E12-S4})-(\ref{E78-S1}) in \textbf{Suppl. Inf.} This reflects the characterization of the local polytope $\mathcal{P}_{\scriptscriptstyle Loc}$ as a subset of the larger non-signaling polytope $\mathcal{P}_{\scriptscriptstyle NS}$.

Importantly, all remaining types, namely the \textit{g}-, \textit{t}-, \textit{s}-, and \textit{z}-vectors, contribute non-trivially only in the presence of signaling. Thus, the full solution reveals a much richer structure than mere Bell-CHSH expressions and signaling signatures (see Fig.~\ref{Fig-Solution-Summary}).\vspace{0.2cm}

\begin{figure}[t]
\centering
\includegraphics[width=\columnwidth]{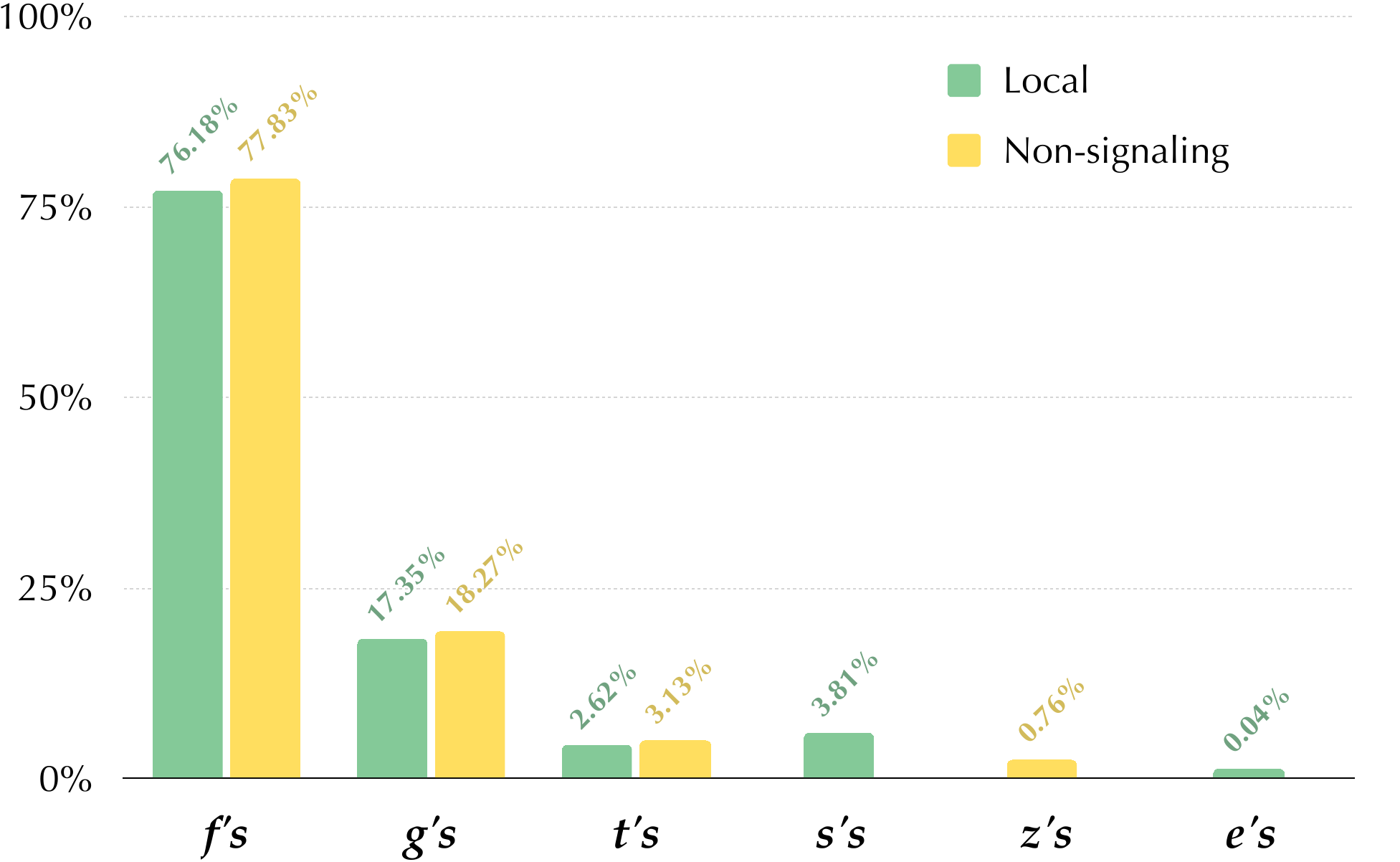}\caption
{\label{Fig-Weights}{\bf\textsf{Significance of the vector measures.}} Numerical results showing the proportion of randomly sampled behaviors for which the solutions in Eqs.\,(\ref{muL-solution}) and (\ref{muNS-solution}) are uniquely attained by each vector type within the respective subsets of $\mathbb{Q}$ (green) and $\mathbb{S}$ (yellow). Evidently, the \textit{f-measures} (maximal signaling) dominate for most behaviors, although a considerable fraction is accounted for by the remaining ones. Interestingly, for a random behavior, the \textit{e-measures} (Bell-CHSH expressions) are the least likely to contribute to the quantification of locality, while various forms of signaling corrections (\textit{f}-, \textit{g}-, \textit{t}-, \textit{s}-, and \textit{z-measures}) play a more prominent role.}
\end{figure}

\myuline{\textbf{\textit{Non-redundancy of the solutions}}}. As already noted, none of the vectors in the  solution sets, Eqs.~(\ref{Q}) and (\ref{S}), is superfluous. Specifically, for every vector in $\mathbb{Q}$ and $\mathbb{S}$ there exist behaviors for which the corresponding measure uniquely attains its minimum in Eqs.~(\ref{muL-solution}) and (\ref{muNS-solution}), respectively.
Beyond their non-redundancy, the relative importance of the different vector measures can be quantified by analyzing how often each one yields the minimum in Eqs.\,(\ref{muL-solution}) and (\ref{muNS-solution}) over randomly sampled behaviors~$\vec{P}$.
To this end, we numerically estimated the corresponding frequencies as shown in Fig.~\ref{Fig-Weights}. The results indicate that all vectors contribute to the solution, though with markedly different prevalence, thereby highlighting the internal structure of the solution space.\vspace{0.2cm}

\myuline{\textbf{\textit{Geometrical insight}}}. The fractional measure reflects the position of the observed behavior $\vec{P}$ relative to both the target set and the entire correlation space, i.e., its distance from the \textit{local polytope} $\mathcal{P}_{\scriptscriptstyle Loc}$ (or the \textit{non-signaling polytope} $\mathcal{P}_{\scriptscriptstyle NS}$) and, no less importantly, its proximity to the boundary of the full \textit{correlation polytope} $\mathcal{P}$. Thus, solving the problem requires insight into the geometry of how $\mathcal{P}_{\scriptscriptstyle Loc}$ (or $\mathcal{P}_{\scriptscriptstyle NS}$) is embedded within $\mathcal{P}$. Notably, the task can be cast as a linear optimization problem and strong duality then allows an explicit solution (see \textbf{Methods} section for details). The problem can also be tackled directly via an analytic construction (see \textbf{Suppl. Inf.}). Both approaches yield  the same two solution sets,  $\mathbb{Q}$ and $\mathbb{S}$, in Eqs.\,(\ref{Q}) and (\ref{S}). They fully encode, through simple linear expressions, the embeddings $\mathcal{P}_{\scriptscriptstyle Loc} \subsetneq \mathcal{P}$ and $\mathcal{P}_{\scriptscriptstyle NS} \subsetneq \mathcal{P}$, respectively, within the fractional measure framework (\textbf{Theorem~\ref{theoremL}} and \textbf{Theorem~\ref{theoremNS}}). This substantially extends Ref.~\cite{BlPoYeGaBo21}, which characterized the simpler case of the embedding $\mathcal{P}_{\scriptscriptstyle Loc} \subsetneq \mathcal{P}_{\scriptscriptstyle NS}$ (see \textbf{Theorem~\ref{Theorem-muL-e}}).

The geometry of $\mathcal{P}_{\scriptscriptstyle Loc}$ and $\mathcal{P}_{\scriptscriptstyle NS}$, in relation to the various solution vectors, is relatively straightforward; see Fig.~\ref{Fig-Polytopes}. For $\mathcal{P}_{\scriptscriptstyle NS}\subsetneq \mathcal{P}$, the eight $f$-vectors are normal vectors to the facets of this polytope, since the non-signaling condition can be re-expressed as $\vec{f}_j^{\scriptscriptstyle{\,T}}\!\cdot\vec{P}=1$ for all $j\text \,=\, 1,\ldots,8$ (see \textbf{Lemma~\ref{Lemma-muNS-f}}). Then, the embedding $\mathcal{P}_{\scriptscriptstyle Loc}\subsetneq \mathcal{P}_{\scriptscriptstyle NS}$ is characterized by the eight $e$-vectors, which are normal to the facets of $\mathcal{P}_{\scriptscriptstyle Loc}$ , which themselves are the $\mathcal{P}_{\scriptscriptstyle NS}$ facets, as truncated by the locality constraints $\vec{e}_i^{\,\scriptscriptstyle T}\!\cdot\vec{P}\ge 1$ for all $i=1,\ldots,8$ (see \textbf{Theorem~\ref{Theorem-muL-e}}). In other words, to define the non-signaling and local polytopes within $\mathcal{P}$, the \textit{f}- and \textit{e}-vectors are enough. Furthermore, $\mathcal{P}_{\scriptscriptstyle NS}$ has a particularly simple geometry relative to $\mathcal{P}_{\scriptscriptstyle Loc}$, with essentially one PR box per relevant local facet~\cite{PoRo94,Sc19,BrCaPiScWe14}. 
 The reason this does not suffice to compute the local (or non-signaling) fractions is that the answer depends on how a given behavior is situated between $\mathcal{P}_{\scriptscriptstyle Loc}$ (or $\mathcal{P}_{\scriptscriptstyle NS}$) and the full correlation space $\mathcal{P}$. Imagine extending a line segment through a behavior $\vec{P}$ until it intersects the boundaries of both $\mathcal{P}_{\scriptscriptstyle Loc}$ (or $\mathcal{P}_{\scriptscriptstyle NS}$) and $\mathcal{P}$. The fractional measure is obtained by choosing the direction along which $\vec{P}$ is closest to the target set $\mathcal{P}_{\scriptscriptstyle Loc}$ (or $\mathcal{P}_{\scriptscriptstyle NS}$) relative to its distance to the boundary of $\mathcal{P}$. Therefore, the geometric relationships are more intricate, which is why as many as 128 (or 120) vectors are needed to quantify this notion of proximity to the boundaries of  $\mathcal{P}_{\scriptscriptstyle Loc}$ (or $\mathcal{P}_{\scriptscriptstyle NS}$) within $\mathcal{P}$.

\section{\textsf{Discussion}}\vspace{-0.2cm}

The \textit{fractional measure} considered in this paper provides a principled way to quantify how strongly an observed behavior violates a given assumption. Beyond a binary pass/fail classification, this measure assigns a graded notion of “distance” from the corresponding constraint set by asking a simple question: \textit{what fraction of experimental runs can be accounted for by a model compatible with the assumption?} This approach is applicable to a wide range of problems, covering both causal notions and purely operational (statistical) ones, as long as the problem admits a convex-geometric characterization.

In this paper, we study the standard Bell scenario with two parties, two settings, and two outcomes. Our goal was to compare and explicitly compute for 
an arbitrary observed behavior $P_{\scriptscriptstyle ab|xy}$\,,\vspace{0.1cm}

\textit{\ (i)} the \textit{(fractional) measure of locality} $\mu_{\scriptscriptstyle L}$, and\vspace{0.1cm}

\textit{(ii)} the \textit{(fractional) measure of non-signaling} $\mu_{\scriptscriptstyle NS}$.\vspace{0.1cm}

\noindent Note, locality is inherently causal, while signaling is defined operationally from observed statistics. Yet both admit a convex-geometric formulation, as used in \textbf{Definition~\ref{Def-muL}} and \textbf{Definition~\ref{Def-muNS}}. The solution leverages the geometry of the local polytope $\mathcal{P}_{\scriptscriptstyle L}$ and the non-signaling polytope $\mathcal{P}_{\scriptscriptstyle NS}$ to quantify the closeness of an observed behavior $P_{\scriptscriptstyle ab|xy}$ to each of them, subject to probability constraints that define the full correlation space $\mathcal{P}$. Crucially, the measure depends not only on the target sets themselves but also on their embedding within the overall correlation space, 
thereby adding substantial geometric complexity to the solution.

We show that the solution can be written as the minimum over a family of simpler 
vector measures, each obtained by projecting the observed behavior onto a specific direction in the correlation space. In full generality, the required set of directions is comprised of
\vspace{0.1cm}

\textit{\ (i)} 128 vectors for the measure $\mu_{\scriptscriptstyle L}$ in \textbf{Theorem~\ref{theoremL}}, and\vspace{0.1cm}

\textit{(ii)} 120  vectors for the measure $\mu_{\scriptscriptstyle NS}$ in \textbf{Theorem~\ref{theoremNS}}.\vspace{0.1cm}

\noindent The number of vectors may seem surprising, but none of the 128/120 directions is redundant. While some of the associated vector measures coincide with familiar quantities, such as the \textit{Bell-CHSH expressions} (\textbf{Theorem~\ref{Theorem-muL-e}}) and the \textit{signaling signatures} (\textbf{Lemma~\ref{Lemma-muNS-f}}), 
the remaining ones contribute genuinely non-trivial terms that are essential for the full solution. 
These additional directions capture geometric features that only become relevant in the signaling regime.


In a nutshell, in the presence of signaling it is difficult to disentangle genuine non-locality from signaling-induced effects, since apparent non-locality may partly be driven by signaling. Operationally, signaling implies non-locality, but the  converse is not true; the two notions are conceptually distinct. These differences become transparent in our analysis, in which we provide complete solutions for both fractional measures, make the full structure explicit, and give a ready-to-use procedure for practical calculation.\footnote{For an alternative approach based on the contextuality-by-default framework, see Refs.~\cite{DzKuLa15,KuDzLa15}. We note, however, that our treatment is conceptually distinct, and the resulting conclusions are therefore markedly different.}
Notably, the full picture goes well beyond the Bell-CHSH expressions $S_{\scriptscriptstyle i}$ ($e$-\textit{measures}) in Eqs.~(\ref{S1})-(\ref{S4}), even when supplemented by the signaling signatures $\Delta_{\scriptscriptstyle i}$ ($f$-\textit{measures}) in Eqs.~(\ref{NS-AB0})-(\ref{NS-BA1}). A complete description further involves the $g$-, $t$-, and $s$-\textit{measures}. Interestingly, we also observe a similar pattern when assessing the degree of non-signaling itself, as a complete characterization involves far more than the signatures $\Delta_{\scriptscriptstyle i}$ ($f$-\textit{measures}) that define signaling of the behavior. Capturing the full structure requires the $g$-, $t$-, and $z$-\textit{measures} as well. 


Our focus has been on the local fraction, because of the fundamental role of locality in Bell-inspired experiments concerning quantum structure in nature. No less importantly, the analysis of the (fractional) measure of non-signaling might shed new light on communication complexity in signaling scenarios. The surprisingly rich geometric structure of the non-signaling fraction solution certainly helps explain the intricacy of the local fraction solution, since, as it turns out, a substantial part of the apparent violation of locality can be attributed to signaling, which accounts for most of the non-trivial components in the solution sets (as evident from comparing $\mathbb{Q}$ and $\mathbb{S}$).

More generally, the present proposal of the \textit{fractional measure} is applicable beyond locality and signaling problems, because of the way it lends itself to a natural resource theoretic interpretation: \textit{if violating a given assumption comes with a cost, then such violations constitute a resource that should be used sparingly.} In this sense, the \textit{fractional measure} $\mu$ quantifies the required (minimal) amount of this resource. Our solution can be viewed as a characterization of possible simulation strategies via the simpler vector measures, followed by selecting the class that attains the (minimal) fraction. Geometrically, the problem can be formulated as vertex enumeration of a suitably constructed dual polytope.

Our results are supported by a transparent methodological route: in the \textbf{Methods} section, we cast the problem as a linear program and highlight the role of strong duality in obtaining the solution. The framework easily extends to arbitrary numbers of parties, settings, and outcomes. More generally, the methodology applies to a broad class of constraints, that is, any set of constraints that admit a convex-geometric formulation, including (fractional) measures of the causal strength of a given arrow (or set of arrows) in a causal diagram.



A key significance and motivation of our result is that it eliminates reliance on the non-signaling assumption in Bell non-locality analyses, enabling a consistent treatment of general (possibly signaling) behaviors, thereby going beyond the scope of Ref.~\cite{BlPoYeGaBo21}. 
In doing so, we offer a principled framework for standard Bell non-locality tests, where signaling may arise from imperfections or noise, which can complicate experimental interpretation~\cite{AdKh17,HeKaBlDrReVeSc16,SmKlCaBo25}. Possible applications also extend to treatment of for finite-sample effects~\cite{Gi14c,ElWe16,ReRoMaGi17,Gi23} , whose handling in standard analyses is often simplified by imposing non-signaling. More importantly, our result enables a systematic framework for studying experimental implementations in which disturbance cannot be excluded and signaling is therefore expected, opening the door to investigations beyond the standard, arguably overly restrictive, non-signaling Bell framework. This becomes increasingly relevant in scenarios without 
spacelike separation~\cite{EmLaNo14,KoBr13,Fr10,BrKoMaPaPr15,RiGiChCoWhFe16,ChCaAgDiAoGiSc18,AgPoPoMiGaSuPo22}, including modern implementations~\cite{RyBiBaBe25}, and motivates extensions of device-independent frameworks~\cite{BrCaPiScWe14,Sc19}. Finally, both signaling and disturbance are commonplace when using Bell-type causal diagrams in other sciences~\cite{CeDz18,BrFeHoDeObGiMo23,GaPoBlYeWo23,PeMa18,Pe09,SpGlSc00,AnPi09,RuIm15,HeRo20}.

In sum, historically, the interpretation of apparent violations of locality had to rely on elaborate experimental or statistical safeguards. We offer a technique to disentangle violations of locality from 
violations of non-signaling, based primarily on the assumption that such resources are costly. The present analysis revealed a rich structure in how locality or non-signaling can be violated and showed, importantly, that such violations are not all equal, some are more costly than others. 






\section{\textsf{Methods}}\vspace{-0.3cm}

Here, we provide a sketch of the proof of \textbf{Theorem~\ref{theoremL}}, based on a linear programming approach. See the \textbf{Supplementary Information} for details and proofs of the remaining results.

\vspace{-0.3cm}



\subsection{\textsf{Notation and Preliminaries}}\vspace{-0.3cm}

\textit{Behaviors.} Our analysis concerns non-negative vectors $\vec{P} = (P_{\scriptscriptstyle ab|xy}) \in \mathbb{R}^{16}$, subject to the normalization conditions $\sum_{\scriptscriptstyle a,b} P_{\scriptscriptstyle ab|xy} = 1$ for all $x,y \in \{0,1\}$. These vectors, referred to as \textit{behaviors} in the two-setting, two-outcome Bell scenario, represent conditional probabilities of obtaining outcomes $a,b \in \{0,1\}$ given the choice of measurement settings $x,y \in \{0,1\}$. Accordingly, the sixteen-component column vector 
\begin{eqnarray}\nonumber
\vec{P}&=&(P_{\scriptscriptstyle{00|00}}\,,P_{\scriptscriptstyle{01|00}}\,,P_{\scriptscriptstyle{10|00}},P_{\scriptscriptstyle{11|00}}\,,P_{\scriptscriptstyle{00|01}}\,,P_{\scriptscriptstyle{01|01}}\,,P_{\scriptscriptstyle{10|01}}\,,P_{\scriptscriptstyle{11|01}}\,,\\\nonumber
&&\ \ P_{\scriptscriptstyle{00|10}}\,,P_{\scriptscriptstyle{01|10}}\,,P_{\scriptscriptstyle{10|10}}\,,P_{\scriptscriptstyle{11|10}}\,,P_{\scriptscriptstyle{00|11}}\,,P_{\scriptscriptstyle{01|11}}\,,P_{\scriptscriptstyle{10|11}}\,,P_{\scriptscriptstyle{11|11}})^{\scriptscriptstyle{T}}\\\label{Pab|xy}\vspace{-0.5cm}
\end{eqnarray}
encodes the distribution of outcomes for the four possible combinations of measurement settings in the Bell experiment.

\textit{Local polytope.} The essence of Bell’s theorem is that not all admissible behaviors can be explained in a local manner (Fig.\,\ref{Bell-DAGs}, left). As shown by A.~Fine~\cite{Fi82a}, the set of local behaviors forms the \textit{local polytope} $\mathcal{P}_{\scriptscriptstyle Loc}$, spanned by \textit{local deterministic strategies}. These strategies can be labeled by $\text{16 = 4} \cdot \text{4}$ pairs of functions $i = (f,g)$, where $f,g : \{0,1\} \rightarrow \{0,1\}$, and are explicitly given by
\begin{eqnarray}\label{LDS}
(\vec{L}_{\scriptscriptstyle{fg}})_{\scriptscriptstyle{ab}|\scriptscriptstyle{xy}}&=&\delta_{a\text{\,=\,}f(x)}\!\cdot\delta_{b\text{\,=\,}g(y)}\,.
\end{eqnarray}

We arrange these strategies into a matrix, denoted by $L\equiv(\vec{L}_{\scriptscriptstyle{1}},...\,,\vec{L}_{\scriptscriptstyle{16}})$, whose columns correspond to the sixteen \textit{local deterministic strategies} $\vec{L}_i$. The explicit enumeration of all strategies is provided in the \textbf{Suppl. Inf.} (Part B). Consequently, any local behavior $\vec{P}^{\scriptscriptstyle\,L}$ can be expressed as a convex decomposition of the form\begin{eqnarray}\label{Loc-det-decomp}
\vec{P}^{\scriptscriptstyle{\,L}}&=&\sum_{\scriptscriptstyle{i=1}}^{\scriptscriptstyle{16}}\, \vec{L}_i\,{q_i}\ =\ L\,\vec{q}\ ,
\end{eqnarray}
where $\vec{q}$ is a vector of non-negative weights satisfying the normalisation condition $\sum_{\scriptscriptstyle{i=1}}^{\scriptscriptstyle{16}} q_i=1$. 


\vspace{-0.3cm}

\subsection{\textsf{Formulation as a Linear Program}}\vspace{-0.3cm}

We show that determining the \textit{local fraction} $\mu_{\scriptscriptstyle{L}}(\vec{P})$, as defined in Eq.\,(\ref{muL}), is equivalent to solving the following linear program (LP):
\begin{eqnarray}\label{LP-1}
\text{maximize}&&{\vec{1}}^{\scriptscriptstyle{T}}\!\cdot\vec{p}\,,\\\label{LP-2}
\text{subject to}&&{L}\,\vec{p}\,\leqslant\,\vec{P}\,,\\\label{LP-3}
&&\ \ \ \,\vec{p}\geqslant\,0\,,
\end{eqnarray}
where $\vec{p}\in\mathbb{R}^{16}$, and $L$ is the matrix comprised of sixteen \textit{local deterministic strategies} discussed above. We denote $\vec{1}:=(1,1\,,...\,,1)^{\scriptscriptstyle{T}}\in\mathbb{R}^{16}$ as a column vector with ones.

\begin{proof}[Justification]

Let us rewrite the local part of the decomposition in Eq.\,(\ref{decompL}) using the convex expansion introduced in Eq.\,(\ref{Loc-det-decomp}), i.e.,
\begin{eqnarray}\label{condition-1}
\vec{P}&=&q\cdot \underbrace{\sum_{\scriptscriptstyle i=1}^{\scriptscriptstyle16}\ \vec{L}_i\,q_i}_{\vec{P}^{\scriptscriptstyle{\,L}}}\ +\ (1-q)\cdot\vec{P}^{\scriptscriptstyle{\,NL}}\ .
\end{eqnarray}
This implies the existence of coefficients $p_i$, defined by $p_i := q q_i$, such that
\begin{eqnarray}\label{condition-2}
\vec{P}\ -\ \sum_{\scriptscriptstyle i=1}^{\scriptscriptstyle 16}\ \vec{L}_i\,p_i\ \geqslant\ 0&\ \ \text{and}\ \ &p_i\geqslant0\ .
\end{eqnarray}

We find that the converse also holds: the existence of coefficients $p_i$ satisfying Eq.\,(\ref{condition-2}) implies the decomposition in Eq.\,(\ref{condition-1}). This can be seen by rewriting
\begin{eqnarray}
\vec{P}&=&\sum_{\scriptscriptstyle i=1}^{\scriptscriptstyle16}\ \vec{L}_i\,p_i\ +\ \vec{P}\ -\ \sum_{\scriptscriptstyle i=1}^{\scriptscriptstyle16}\ \vec{L}_i\,p_i\\\nonumber
&=&q\cdot\underbrace{\sum_{\scriptscriptstyle i=1}^{\scriptscriptstyle 16}\ \vec{L}_i\,q_i}_{\vec{P}^{\scriptscriptstyle{\,L}}}\ +\ (1-q)\cdot\underbrace{\tfrac{1}{1-q}\,\Big(\vec{P}\ -\ \sum_{\scriptscriptstyle i=1}^{\scriptscriptstyle16}\ \vec{L}_i\,p_i\Big)}_{\vec{P}^{\scriptscriptstyle{\,NL}}}\,,
\end{eqnarray}
with $q := \sum_i p_i$ and $q_i := \nicefrac{p_i}{q}$. It is straightforward to verify that the vectors defined in this way, $\vec{P}^{\scriptscriptstyle{\,L}}$ and $\vec{P}^{\scriptscriptstyle{\,NL}}$, are properly normalized behaviors, since both $\vec{P}$ and the $\vec{L}_i$'s are normalized behaviors by construction.

This allows us to formulate the task of finding $\mu_{\scriptscriptstyle L}(\vec{P})$ in Eq.\,(\ref{muL}) as the following linear program:
\begin{eqnarray}
\text{maximize}&&{\sum}_{\scriptscriptstyle i=1}^{\scriptscriptstyle 16}\ p_i\\
\text{subject to}&&\vec{P}-{\sum}_{\scriptscriptstyle i=1}^{\scriptscriptstyle 16}\ \vec{L}_i\ p_i\geqslant0\ ,\\
&&\qquad\qquad\quad\ \   p_i\geqslant0\ ,
\end{eqnarray}
which, in compact form, corresponds to the LP given in Eqs.\,(\ref{LP-1})-(\ref{LP-3}).
\end{proof}
\vspace{-0.3cm}

\subsection{\textsf{Dual Linear Program \& Solution}}\vspace{-0.3cm}
We remark that the LP in Eqs.\,(\ref{LP-1})-(\ref{LP-3}) is not straightforward to handle, as its feasible region depends on the chosen behavior $\vec{P}$. This difficulty can be overcome by considering the \textit{dual linear program} (DLP), which takes a more convenient form
\begin{eqnarray}\label{DLP-1}
\text{minimize}&&{\vec{P}}^{\scriptscriptstyle{\,T}}\!\cdot\vec{q}\,,\\\label{DLP-2}
\text{subject to}&&{L}^{\scriptscriptstyle{T}}\,\vec{q}\,\geqslant\,\vec{{1}}\,,\\\label{DLP-3}
&&\quad\ \vec{q}\,\geqslant\,0\,.
\end{eqnarray}

This reformulation offers two main advantages:

\noindent\textit{(i)} unlike in the primal LP, the chosen behavior $\vec{P}$ now appears \textit{only} in the objective function, Eq.\,(\ref{DLP-1}), and\\
\noindent\textit{(ii)} the constraints now define a \textit{fixed} polyhedron in $\mathbb{R}^{16}$, specified by Eqs.\,(\ref{DLP-2})-(\ref{DLP-3}), which is \textit{the same} for all behaviors $\vec{P}$.

We can now solve the problem. By the \textit{Strong Duality Theorem}~\cite{BoVa04}, the solutions of the primal LP and its dual DLP coincide. This is ensured because the primal LP in Eqs.\,(\ref{LP-1})-(\ref{LP-3}) has a solution, as it maximizes a linear function over a compact domain.

Hence, the \textit{local fraction} $\mu_{\scriptscriptstyle L}(\vec{P})$ can equivalently be obtained from the DLP in Eqs.\,(\ref{DLP-1})-(\ref{DLP-3}). Since the optimum of a linear program is achieved at one of the vertices of its feasible region, we obtain
\begin{eqnarray}\label{mu-solution}
\mu_{\scriptscriptstyle L}(\vec{P})&=&\min_{\vec{q}_i}\ \vec{q}_i^{\scriptscriptstyle{\,T}}\!\cdot\vec{P}\ ,
\end{eqnarray}
where $\{\vec{q}_i\}$ is the set of vertices of a polyhedron defined by $\text{32 = 16\,+\,16}$ inequalities:
\begin{eqnarray}\label{DLP-poly-1}
&&{L}^{\scriptscriptstyle{T}}\,\vec{q}\,\geqslant\,\vec{{1}}\,,\\\label{DLP-poly-2}
&&\quad\ \vec{q}\,\geqslant\,0\,.
\end{eqnarray}
The vertices can be explicitly enumerated using standard methods of convex analysis~\cite{BoVa04}. We find that the relevant set of vertices, denoted by $\mathbb{Q} \equiv {\vec{q}_i}$, comprises 128 binary vectors (with entries 0 or 1), as stated in \textbf{Theorem~\ref{theoremL}}. A complete list of these vertices, together with additional discussion, is provided in the \textbf{Suppl. Inf.}.

\vspace{0.4cm}
\noindent{\small\bf\textsf{{Acknowledgments}}}\\
We acknowledge funding support for EP from the European Office of Aerospace Research and Development (EOARD) grant FA8655-23-1-7220, for CG from LOEWE grant LOEWE/5/A004/519/06/00.006(0004)E36, and for TN from EPSRC grant 2919480.

\vspace{0.3cm}
\noindent{\small\bf\textsf{Author contributions}}\\
MB developed the analytic proofs; TN carried out the computational work; PB developed the LP proofs and wrote the initial draft of the manuscript; PB, EP, and CG conceptualized the project; all authors helped with the mathematical work and revisions of the manuscript.

\vspace{0.3cm}
\noindent{\small\bf\textsf{Competing interests}}\\
The authors declare no competing interests.
\vspace{0.2cm}



\bibliography{CombQuant}

@article{ElWe16,
	author = {Elkouss, D. and Wehner, S.},
	journal = {npj Quantum Inf.},
	pages = {16026},
	title = {(Nearly) optimal {P} values for all {B}ell inequalities},
	url = {https://doi.org/10.1038/npjqi.2016.26},
	volume = {2},
	year = {2016}}

@article{Gi23,
	author = {Gill, R. D.},
	journal = {Appl. Math.},
	pages = {446},
	title = {Optimal Statistical Analyses of Bell Experiments},
	url = {https://doi.org/10.3390/appliedmath3020023},
	volume = {3},
	year = {2023}}

@article{RyBiBaBe25,
	author = {Rybotycki, T. and Bialecki, T. and Batle, J. and Bednorz, Adam},
	journal = {Adv. Quantum Technol.},
	pages = {2400661},
	title = {{V}iolation of {N}o-{S}ignaling on a {P}ublic {Q}uantum {C}omputer},
	url = {https://doi.org/10.1002/qute.202400661},
	volume = {8},
	year = {2025}}

@article{SmKlCaBo25,
	author = {Smania, M. and Kleinmann, M. and Cabello, A. and Bourennane, M.},
	journal = {Quantum},
	pages = {1760},
	title = {How to avoid (apparent) signaling in {B}ell tests},
	url = {https://doi.org/10.22331/q-2025-06-04-1760},
	volume = {9},
	year = {2025}}

@article{BrFeHoDeObGiMo23,
	author = {Bruza, P. D. and Fell, L. and Hoyte, P. and Dehdashti, S. and Obeid, A. and Gibson, A. and Moreira, C.},
	journal = {Cogn. Psychol.},
	pages = {101529},
	title = {Contextuality and context-sensitivity in probabilistic models of cognition},
	url = {https://doi.org/10.1016/j.cogpsych.2022.101529},
	volume = {140},
	year = {2023}}

@unpublished{Ch97,
	author = {Christof, T.},
	note = {{A}viable at: https://porta.zib.de},
	title = {{PORTA} software system ({PO}lyhedron {R}epresentation {T}ransformation {A}lgorithm)},
	url = {https://porta.zib.de},
	year = {1997}}

@article{BrKoMaPaPr15,
	author = {Brierley, S. and Kosowski, A. and Markiewicz, M. and Paterek, T. and Przysiezna, A.},
	journal = {Phys. Rev. Lett.},
	pages = {120404},
	title = {Nonclassicality of {T}emporal {C}orrelations},
	url = {https://doi.org/10.1103/PhysRevLett.115.120404},
	volume = {115},
	year = {2015}}

@article{ViRaCa25,
	author = {Vieira, C. and Ramanathan, R. and Cabello, A.},
	journal = {Nat. Commun.},
	pages = {4390},
	title = {Test of the physical significance of {B}ell non-locality},
	url = {https://doi.org/10.1038/s41467-025-59247-7},
	volume = {16},
	year = {2025}}

@article{FrCl72,
	author = {Freedman, S. J. and Clauser, J. F.},
	journal = {Phys. Rev. Lett.},
	pages = {938},
	title = {Experimental {T}est of {L}ocal {H}idden-{V}ariable {T}heories},
	url = {https://doi.org/10.1103/PhysRevLett.28.938},
	volume = {28},
	year = {1972}}

@article{AdKh17,
	author = {Adenier, G. and Khrennikov, A. Y.},
	journal = {Fortschr. Phys.},
	pages = {1600096},
	title = {Test of the no-signaling principle in the {H}ensen loophole-free {CHSH} experiment},
	url = {https://doi.org/10.1002/prop.201600096},
	volume = {65},
	year = {2017}}

@article{HeKaBlDrReVeSc16,
	author = {Hensen, B. and Kalb, N. and Blok, M. S. and Dr{\'e}au, A. E. and Reiserer, A. and Vermeulen, R. F. L. and Schouten, R. N. and Markham, M. and Twitchen, D. J. and Goodenough, K. and Elkouss, D. and Wehner, S. and Taminiau, T. H. and Hanson, R.},
	journal = {Sci. Rep.},
	pages = {30289},
	title = {Loophole-free {B}ell test using electron spins in diamond: second experiment and additional analysis},
	url = {https://doi.org/10.1038/srep30289},
	volume = {6},
	year = {2016}}

@book{BoVa04,
	author = {Boyd, S. and Vandenberghe, L.},
	publisher = {Cambridge University Press},
	title = {Convex Optimization},
	year = {2004}}

@article{BlGa24,
	author = {Blasiak, P. and Gallus, C.},
	journal = {Phil. Trans. R. Soc. A},
	pages = {20230005},
	title = {Comparing the cost of violating causal assumptions in {B}ell experiments: locality, free choice and arrow-of-time},
	url = {https://doi.org/10.1098/rsta.2023.0005},
	volume = {382},
	year = {2024}}

@inproceedings{WiCa17,
	author = {Wiseman, H. M. and Cavalcanti, E. G.},
	booktitle = {Quantum {$[$}Un{$]$}Speakables II: Half a Century of Bell's Theorem},
	editor = {Bertlmann, R. and Zeilinger, A.},
	pages = {119--142},
	publisher = {Springer},
	title = {Causarum {I}nvestigatio and the {T}wo {B}ell's {T}heorems of {J}ohn {B}ell},
	url = {https://doi.org/10.1007/978-3-319-38987-5_6},
	year = {2017}}

@article{GaPoBlYeWo23,
	author = {Gallus, C. and Pothos, E. M. and Blasiak, P. and Yearsley, J. M. and Wojciechowski, B. W.},
	journal = {Sci. Rep.},
	pages = {4394},
	title = {Bell correlations outside physics},
	url = {https://doi.org/10.1038/s41598-023-31441-x},
	volume = {13},
	year = {2023}}

@article{AgPoPoMiGaSuPo22,
	author = {Agresti, I. and Poderini, D. and Polacchi, B. and Miklin, N. and Gachechiladze, M. and Suprano, A. and Polino, E. and Milani, G. and Carvacho, G. and Chaves, R. and Sciarrino, F.},
	journal = {Sci. Adv.},
	pages = {eabm1515},
	title = {Experimental test of quantum causal influences},
	url = {https://doi.org/10.1126/sciadv.abm1515},
	volume = {8},
	year = {2022}}

@book{AnPi09,
	author = {Angrist, J. D. and Pischke, J.-S.},
	publisher = {Princeton University Press},
	title = {Mostly Harmless Econometrics: An Empiricist's Companion},
	year = {2009}}

@book{RuIm15,
	author = {Rubin, D. B. and Imbens, G. W.},
	publisher = {Cambridge University Press},
	title = {Causal Inference for Statistics, Social, and Biomedical Sciences: An Introduction},
	year = {2015}}

@book{HeRo20,
	author = {Hern{\'a}n, M. A. and Robins, J. M.},
	publisher = {Chapman \& Hall/CRC},
	title = {Causal Inference: What If},
	year = {2020}}

@article{BlPoYeGaBo21,
	author = {Blasiak, P. and Pothos, E. M. and Yearsley, J. M. and Gallus, C. and Borsuk, E.},
	journal = {Proc. Natl. Acad. Sci. USA (PNAS)},
	note = {\\\href{https://www.eurekalert.org/news-releases/671079}{\textbf{EurekAlert! (2021):} \textit{We know the cost of free choice and locality - in physics and not only}}},
	pages = {e2020569118},
	title = {Violations of locality and free choice are equivalent resources in {B}ell experiments},
	url = {https://doi.org/10.1073/pnas.2020569118},
	volume = {118},
	year = {2021}}

@article{ChCaAgDiAoGiSc18,
	author = {Chaves, R. and Carvacho, G. and Agresti, I. and Di Giulio, V. and Aolita, L. and Giacomini, S. and Sciarrino, F.},
	journal = {Nature Phys.},
	pages = {291},
	title = {Quantum violation of an instrumental test},
	url = {https://doi.org/10.1038/s41567-017-0008-5},
	volume = {14},
	year = {2018}}

@article{Ha91,
	author = {Hardy, L.},
	journal = {Phys. Lett. A},
	pages = {21},
	title = {A new way to obtain {B}ell inequalities},
	url = {https://doi.org/10.1016/0375-9601(91)90537-I},
	volume = {161},
	year = {1991}}

@article{GiVeWeHaHoPhSt15,
	author = {Giustina, M. and Versteegh, M. A. M. and Wengerowsky, S. and Handsteiner, J. and Hochrainer, A. and Phelan, K. and Steinlechner, F. and Kofler, J. and Larsson, J.-A. and Abell{\'a}n, C. and Amaya, W. and Pruneri, V. and Mitchell, M. W. and Beyer, J. and Gerrits, T. and Lita, A. E. and Shalm, L. K. and Nam, S. W. and Scheidl, T. and Ursin, R. and Wittmann, B. and Zeilinger, A.},
	journal = {Phys. Rev. Lett.},
	pages = {250401},
	title = {Significant-{L}oophole-{F}ree {T}est of {B}ell's {T}heorem with {E}ntangled {P}hotons},
	url = {https://doi.org/10.1103/PhysRevLett.115.250401},
	volume = {115},
	year = {2015}}

@article{ShMeChBiWaStGe15,
	author = {Shalm, L. K. and Meyer-Scott, E. and Christensen, B. G. and Bierhorst, P. and Wayne, M. A. and Stevens, M. J. and Gerrits, T. and Glancy, S. and Hamel, D. R. and Allman, M. S. and Coakley, K. J. and Dyer, S. D. and Hodge, C. and Lita, A. E. and Verma, V. B. and Lambrocco, C. and Tortorici, E. and Migdall, A. L. and Zhang, Y. and Kumor, D. R. and Farr, W. H. and Marsili, F. and Shaw, M. D. and Stern, J. A. and Abell{\'a}n, C. and Amaya, W. and Pruneri, V. and Jennewein, T. and Mitchell, M. W. and Kwiat, P. G. and Bienfang, J. C. and Mirin, R. P. and Knill, E. and Nam, S. W.},
	journal = {Phys. Rev. Lett.},
	pages = {250402},
	title = {Strong {L}oophole-{F}ree {T}est of {L}ocal {R}ealism},
	url = {https://doi.org/10.1103/PhysRevLett.115.250402},
	volume = {115},
	year = {2015}}

@article{RaHaHoGaFrLeLi18,
	author = {Rauch, D. and Handsteiner, J. and Hochrainer, A. and Gallicchio, J. and Friedman, A. S. and Leung, C. and Liu, B. and Bulla, L. and Ecker, S. and Steinlechner, F. and Ursin, R. and Hu, B. and Leon, D. and Benn, C. and Ghedina, A. and Cecconi, M. and Guth, A. H. and Kaiser, D. I. and Scheidl, T. and Zeilinger, A.},
	journal = {Phys. Rev. Lett.},
	pages = {080403},
	title = {Cosmic {B}ell {T}est {U}sing {R}andom {M}easurement {S}ettings from {H}igh-{R}edshift {Q}uasars},
	url = {https://doi.org/10.1103/PhysRevLett.121.080403},
	volume = {121},
	year = {2018}}

@article{BIGBellCollaboration18,
	author = {{The BIG Bell test Collaboration}},
	journal = {Nature},
	pages = {212},
	title = {Challenging local realism with human choices},
	url = {https://doi.org/10.1038/s41586-018-0085-3},
	volume = {557},
	year = {2018}}

@article{Gi14c,
	author = {Gill, R. D.},
	journal = {Statist. Sci.},
	pages = {512},
	title = {Statistics, {C}ausality and {B}ell's {T}heorem},
	url = {https://doi.org/10.1214/14-STS490},
	volume = {29},
	year = {2014}}

@incollection{CoRe16,
	author = {Colbeck, R. and Renner, R.},
	booktitle = {Quantum Theory: Informational Foundations and Foils},
	editor = {Chiribella, G. and Spekkens, R. W.},
	pages = {497--528},
	publisher = {Springer},
	title = {The {C}ompleteness of {Q}uantum {T}heory for {P}redicting {M}easurement {O}utcomes},
	url = {https://doi.org/10.1007/978-94-017-7303-4_15},
	year = {2016}}

@article{CoRe08,
	author = {Colbeck, R. and Renner, R.},
	journal = {Phys. Rev. Lett.},
	pages = {050403},
	title = {Hidden {V}ariable {M}odels for {Q}uantum {T}heory {C}annot {H}ave {A}ny {L}ocal {P}art},
	url = {https://doi.org/10.1103/PhysRevLett.101.050403},
	volume = {101},
	year = {2008}}

@article{BaKePi06,
	author = {Barrett, J. and Kent, A. and Pironio, S.},
	journal = {Phys. Rev. Lett.},
	pages = {170409},
	title = {Maximally {N}onlocal and {M}onogamous {Q}uantum {C}orrelations},
	url = {https://doi.org/10.1103/PhysRevLett.97.170409},
	volume = {97},
	year = {2006}}

@article{ReRoMaGi17,
	author = {Renou, M. O. and Rosset, D. and Martin, A. and Gisin, N.},
	journal = {J. Phys. A: Math. Theor.},
	pages = {255301},
	title = {On the inequivalence of the {CH} and {CHSH} inequalities due to finite statistics},
	url = {http://dx.doi.org/10.1088/1751-8121/aa6f78},
	volume = {50},
	year = {2017}}

@article{PoBrGi12,
	author = {Portmann, S. and Branciard, C. and Gisin, N.},
	journal = {Phys. Rev. A},
	pages = {012104},
	title = {Local content of all pure two-qubit states},
	url = {https://doi.org/10.1103/PhysRevA.86.012104},
	volume = {86},
	year = {2012}}

@article{ElPoRo92,
	author = {Elitzur, A. C. and Popescu, S. and Rohrlich, D.},
	journal = {Phys. Lett. A},
	pages = {25},
	title = {Quantum nonlocality for each pair in an ensemble},
	url = {https://doi.org/10.1016/0375-9601(92)90952-I},
	volume = {162},
	year = {1992}}

@book{Sc19,
	author = {Scarani, V.},
	publisher = {Oxford University Press},
	title = {Bell Nonlocality},
	year = {2019}}

@article{Ts80,
	author = {Tsirelson, B. S.},
	journal = {Lett. Math. Phys.},
	pages = {93--100},
	title = {Quantum generalizations of {B}ell's inequality},
	url = {https://doi.org/10.1007/BF00417500},
	volume = {4},
	year = {1980}}

@article{CeDz18,
	author = {Cervantes, V. H. and Dzhafarov, E. N.},
	journal = {Decision},
	pages = {193},
	title = {Snow queen is evil and beautiful: {E}xperimental evidence for probabilistic contextuality in human choices.},
	url = {http://dx.doi.org/10.1037/dec0000095},
	volume = {5},
	year = {2018}}

@book{PeMa18,
	author = {Pearl, J. and Mackenzie, D.},
	publisher = {Basic Books},
	title = {The Book of Why: The New Science of Cause and Effect},
	year = {2018}}

@article{KuDzLa15,
	author = {Kujala, J. V. and Dzhafarov, E. N. and Larsson, J.-A.},
	journal = {Phys. Rev. Lett.},
	pages = {150401},
	title = {Necessary and {S}ufficient {C}onditions for an {E}xtended {N}oncontextuality in a {B}road {C}lass of {Q}uantum {M}echanical {S}ystems},
	url = {https://doi.org/10.1103/PhysRevLett.115.150401},
	volume = {115},
	year = {2015}}

@article{DzKuLa15,
	author = {Dzhafarov, E. N. and Kujala, J. V. and Larsson, J.-A.},
	journal = {Found. Phys.},
	number = {7},
	pages = {762--782},
	title = {Contextuality in {T}hree {T}ypes of {Q}uantum-{M}echanical {S}ystems},
	url = {http://dx.doi.org/10.1007/s10701-015-9882-9},
	volume = {45},
	year = {2015}}

@article{RiGiChCoWhFe16,
	author = {Ringbauer, M. and Giarmatzi, C. and Chaves, R. and Costa, F. and White, A. G. and Fedrizzi, A.},
	journal = {Sci. Adv.},
	pages = {e1600162},
	title = {Experimental test of nonlocal causality},
	url = {https://doi.org/10.1126/sciadv.1600162},
	volume = {2},
	year = {2016}}

@article{HeBeDrReKaBlRu15,
	author = {Hensen, B. and Bernien, H. and Dreau, A. E. and Reiserer, A. and Kalb, N. and Blok, M. S. and Ruitenberg, J. and Vermeulen, R. F. L. and Schouten, R. N. and Abellan, C. and Amaya, W. and Pruneri, V. and Mitchell, M. W. and Markham, M. and Twitchen, D. J. and Elkouss, D. and Wehner, S. and Taminiau, T. H. and Hanson, R.},
	journal = {Nature},
	pages = {682},
	title = {Loophole-free {B}ell inequality violation using electron spins separated by 1.3 kilometres},
	url = {http://dx.doi.org/10.1038/nature15759},
	volume = {526},
	year = {2015}}

@article{KoBr13,
	author = {Kofler, J. and Brukner, C.},
	journal = {Phys. Rev. A},
	pages = {052115},
	title = {Condition for macroscopic realism beyond the {L}eggett-{G}arg inequalities},
	url = {http://dx.doi.org/10.1103/PhysRevA.87.052115},
	volume = {87},
	year = {2013}}

@article{BrCaPiScWe14,
	author = {Brunner, N. and Cavalcanti, D. and Pironio, S. and Scarani, V. and Wehner, S.},
	journal = {Rev. Mod. Phys.},
	pages = {419},
	title = {Bell nonlocality},
	url = {http://dx.doi.org/10.1103/RevModPhys.86.419},
	volume = {86},
	year = {2014}}

@article{Fr10,
	author = {Tobias Fritz},
	isbn = {1367--2630},
	journal = {New J. Phys.},
	pages = {083055},
	title = {Quantum correlations in the temporal {C}lauser--{H}orne--{S}himony--{H}olt ({CHSH}) scenario},
	volume = {12},
	year = {2010}}

@article{EmLaNo14,
	author = {Emary, C. and Lambert, N. and Nori, F.},
	journal = {Rep. Prog. Phys.},
	pages = {016001},
	title = {Leggett--{G}arg inequalities},
	url = {http://dx.doi.org/10.1088/0034-4885/77/1/016001},
	volume = {77},
	year = {2014}}

@article{Be64,
	author = {Bell, J. S.},
	journal = {Physics},
	pages = {195},
	title = {On the {E}instein-{P}odolsky-{R}osen paradox},
	volume = {1},
	year = {1964}}

@article{ClHoShHo69,
	author = {Clauser, J. F. and Horne, M. A. and Shimony, A. and Holt, R. A.},
	journal = {Phys. Rev. Lett.},
	pages = {880--884},
	title = {Proposed {E}xperiment to {T}est {L}ocal {H}idden-{V}ariable {T}heories},
	url = {https://doi.org/10.1103/PhysRevLett.23.880},
	volume = {23},
	year = {1969}}

@article{Fi82a,
	author = {Fine, A.},
	journal = {Phys. Rev. Lett.},
	pages = {291--295},
	title = {Hidden {V}ariables, {J}oint {P}robability, and the {B}ell {I}nequalities},
	url = {https://doi.org/10.1103/PhysRevLett.48.291},
	volume = {48},
	year = {1982}}

@article{AsDaRo82,
	author = {Aspect, A. and Dalibard, J. and Roger, G.},
	journal = {Phys. Rev. Lett.},
	pages = {1804},
	title = {Experimental {T}est of {B}ell's {I}nequalities {U}sing {T}ime-{V}arying {A}nalyzers},
	url = {https://doi.org/10.1103/PhysRevLett.49.1804},
	volume = {49},
	year = {1982}}

@book{Pe09,
	author = {Pearl, J.},
	booktitle = {Causality: Models, Reasoning, and Inference},
	edition = {2nd},
	publisher = {Cambridge University Press},
	title = {Causality: Models, Reasoning, and Inference},
	year = {2009}}

@book{SpGlSc00,
	author = {Spirtes, P. and Glymour, C. and Scheines, R.},
	publisher = {The MIT Press},
	title = {Causation, Prediction, and Search},
	year = {2000}}

@article{PoRo94,
	author = {Popescu, S. and Rohrlich, D.},
	journal = {Found. Phys.},
	pages = {379},
	title = {Quantum {N}onlocality as an {A}xiom},
	url = {http://dx.doi.org/10.1007/BF02058098},
	volume = {24},
	year = {1994}}

@book{Be93,
	author = {Bell, J. S.},
	booktitle = {Speakable and unspeakable in quantum mechanics},
	publisher = {Cambridge University Press},
	title = {Speakable and unspeakable in quantum mechanics},
	year = {1987}}

\newpage\ \newpage


\setcounter{page}{1} 

\onecolumngrid
\begin{center}
{\large\text{\hypertarget{Supplementary-Information}{\myuline{\textit{Supplementary Information}}}}\vspace{0.4cm}\\\Large\textbf{
Measuring Bell non-locality in the presence of signaling }}\vspace{0.4cm}\\
Mark Broom$^{\text{1}}$, Talel Naccache$^{\text{1}}$,
Emmanuel M. Pothos$^{\text{1}}$,
Christoph Gallus$^{\text{2}}$, and
Pawel Blasiak$^{\text{3}}$\vspace{0.2cm}\\
\small\emph{$^{\text{1}}$City St George's, University of London, London EC1V 0HB, United Kingdom\\
$^{\text{2}}$Technische Hochschule Mittelhessen, D-35390 Gießen, Germany\\
$^{\text{3}}$Institute of Nuclear Physics, Polish Academy of Sciences, 31342 Krak\'ow, Poland}

\end{center}

\vspace{0.2cm}

\renewcommand{\theequation}{S\arabic{equation}}  
\renewcommand{\thefigure}{S\arabic{figure}}  

\setcounter{equation}{0}
\setcounter{figure}{0}

This Supplement provides detailed proofs of all statements presented in the main manuscript, as well as additional analytical insights into the structure of the solution. In \textbf{Part A}, we provide pointers to the full listings of the solution sets $\mathbb{Q}$ and $\mathbb{S}$ (see \textbf{Supplementary Data}). In \textbf{Part B}, we expand on the necessary technical details and give full proofs of \textbf{Theorems~\ref{Theorem-muL-e}}, \textbf{\ref{theoremL}}, \textbf{\ref{theoremNS}}, and \textbf{Lemma~\ref{Lemma-muNS-f}}. While the solution in \textbf{Part B} is obtained via a linear-programming approach, \textbf{Part C} provides an independent analytical treatment of key elements of the proof, based on explicit parametrization. Finally, \textbf{Part D} offers some additional insight into the  structure of the solution.

\vspace{0.6cm}

\textbf{\textsf{\underline{Table of contents}:}}
\begin{itemize}
\item[]{\textsf{\textbf{\underline{Part A}:} Solution sets $\mathbb{Q}$ and $\mathbb{S}$}}
\item[]{\textsf{\textbf{\underline{Part B}:} Linear programming approach}\vspace{-0.1cm}
\begin{itemize}
\item[]{B.1. \textsf{Technical background}}
\item[]{B.2. \textsf{Proof of \textbf{Theorem~\ref{theoremL}}}}
\item[]{B.3. \textsf{Proof of \textbf{Theorem~\ref{theoremNS}}}}
\item[]{B.4. \textsf{Proof of \textbf{Theorem~\ref{Theorem-muL-e}} and \textbf{Lemma~\ref{Lemma-muNS-f}}}}
\item[]{B.5. \textsf{Prevalence of the solution vectors}}
\end{itemize}}
\item[]{\textsf{\textbf{\underline{Part C}:} Analytical approach}\vspace{-0.1cm}
\begin{itemize}
\item[]{C.1. \textsf{The vector spaces}}
\item[]{C.2. \textsf{Vector space parametrizations}}
\item[]{C.3. \textsf{Maximum signaling}}
\item[]{C.4. \textsf{Vector measures}}
\item[]{C.5. \textsf{Finding the measure of non-signaling}}
\item[]{C.6. \textsf{Finding the measure of locality}}
\end{itemize}}
\item[]{\textsf{\textbf{\underline{Part D}:} Insights into solutions}}
\end{itemize}

\newpage

\noindent\textsf{\Large \textbf{\underline{Part A}}: Solution sets $\mathbb{Q}$ and $\mathbb{S}$}\vspace{0.5cm}

The main results of the paper, presented in \textbf{Theorem~\ref{theoremL}} and \textbf{Theorem~\ref{theoremNS}}, are based on the solution sets $\mathbb{Q}$ and $\mathbb{S}$, which contain 128 and 120 vectors, respectively. For clarity, we defined these sets in Eqs.~(\ref{Q}) and (\ref{S}) and described their structure without listing their elements explicitly, i.e.,
\begin{eqnarray}\label{Q-Suppl}
\mathbb{Q}&=&\{\vec{f}_1,...\,,\vec{f}_8\}\,\cup
\,\{\vec{g}_1,...\,,\vec{g}_{16}\}\,\cup\,\{\vec{t}_1,...\,,\vec{t}_{32}\}\,\cup\,\{\vec{s}_1,...\,,\vec{s}_{64}\}\,\cup
\,\{\vec{e}_1,...\,,\vec{e}_8\}\qquad\text{(see \textbf{Theorem~\ref{theoremL}})}\,,
\end{eqnarray}
and
\begin{eqnarray}\label{S-Suppl}
\mathbb{S}&=&\{\vec{f}_1,...\,,\vec{f}_8\}\,\cup
\,\{\vec{g}_1,...\,,\vec{g}_{16}\}\,\cup\,\{\vec{t}_1,...\,,\vec{t}_{32}\}\,\cup\,\{\vec{z}_1,...\,,\vec{z}_{64}\}\qquad\text{(see \textbf{Theorem~\ref{theoremNS}})}\,.
\end{eqnarray}
For completeness, we state them explicitly in two \textbf{Supplementary Data} files:
\begin{itemize}
\item[\textit{(i)}]\href{https://drive.google.com/file/d/1mDnEi6WXuOuXqPvZyFADXTj_KelZwqEO/view?usp=sharing}{\texttt{SolutionVectorsQ.txt}}\qquad(solution set $\mathbb{Q}$ in \textbf{Theorem~\ref{theoremL}})\,, and

\item[\textit{(ii)}]\href{https://drive.google.com/file/d/1cNTpB33FgRGV0ctaG4pJal9XaePZrNWj/view?usp=sharing}{\texttt{SolutionVectorsS.txt}}\qquad(solution set $\mathbb{S}$ in \textbf{Theorem~\ref{theoremNS}})\,.
\end{itemize}
These files are ready for direct use in numerical computations.

\vspace{0.3cm}

In the main text, we also discussed symmetries of both sets given in Eqs.~(\ref{sym-1})-(\ref{sym-5}). Those symmetries further partition the respective subsets of $\vec{f}$\,’s, $\vec{g}$\,’s, $\vec{t}$\,’s, $\vec{s}/\vec{z}$\,’s and $\vec{e}$\,’s into equivalence classes (orbits), with representatives given in Eq.~(\ref{representatives}). We also provide this classification in the accompanying \textbf{Supplementary Data} files.


\newpage

\noindent\textsf{\Large \textbf{\underline{Part B}}: Linear programming approach}\vspace{0.5cm}

\vspace{0cm}
\noindent\textbf{\textsf{\text{B.1. {Technical background}}}}\vspace{0.2cm}

We briefly review the technical notions and definitions required for the proofs presented in the manuscript.\vspace{0.3cm}

\noindent\textit{\textsf{\textit{B.1.1. {Local polytope $\mathcal{P}_{\scriptscriptstyle Loc}$}}}}\vspace{0.2cm}

The \textit{local polytope} $\mathcal{P}_{\scriptscriptstyle Loc}$ is the convex hull of sixteen \textit{local deterministic strategies}~\cite{Fi82a,BrCaPiScWe14,Sc19}. Each of these strategies can be identified with a pair of functions $(f,g)$, where $f,g:\{0,1\}\to\{0,1\}$, yielding a total of $\text{16 = 4}\cdot \text{4}$ distinct combinations. Using the notation of Eq.\,(\ref{Pab|xy}), these strategies are explicitly given by Eq.\,(\ref{LDS}), i.e.,
\begin{eqnarray}\label{LDS-Supplement}
(\vec{L}_{\scriptscriptstyle{fg}})_{\scriptscriptstyle{ab}|\scriptscriptstyle{xy}}&=&\delta_{a\text{\,=\,}f(x)}\!\cdot\delta_{b\text{\,=\,}g(y)}\,.
\end{eqnarray}
Identifying the sixteen function pairs $(f,g)$ with the labels $1,\dots,16$, we arrange the corresponding strategies as the columns of a matrix $L$, whose explicit form is given below for completeness:
\begin{eqnarray}\label{Local-matrix}
L&:=&(\vec{L}_{\scriptscriptstyle1}\,,\vec{L}_{\scriptscriptstyle2}\,,...\,,\vec{L}_{\scriptscriptstyle16})\ =\ 
{\color{gray}
\begin{array}{l}
\\
\scalebox{0.65}{
\text{${P}_{\scriptscriptstyle00|00}$}}\\[-2pt]
\scalebox{0.65}{
\text{${P}_{\scriptscriptstyle01|00}$}}\\[-2pt]
\scalebox{0.65}{
\text{${P}_{\scriptscriptstyle10|00}$}}\\[-2pt]
\scalebox{0.65}{
\text{${P}_{\scriptscriptstyle11|00}$}}\\[1pt]
\scalebox{0.65}{
\text{${P}_{\scriptscriptstyle00|01}$}}\\[-2pt]
\scalebox{0.65}{
\text{${P}_{\scriptscriptstyle01|01}$}}\\[-2pt]
\scalebox{0.65}{
\text{${P}_{\scriptscriptstyle10|01}$}}\\[-2pt]
\scalebox{0.65}{
\text{${P}_{\scriptscriptstyle11|01}$}}\\[1pt]
\scalebox{0.65}{
\text{${P}_{\scriptscriptstyle00|10}$}}\\[-2pt]
\scalebox{0.65}{
\text{${P}_{\scriptscriptstyle01|10}$}}\\[-2pt]
\scalebox{0.65}{
\text{${P}_{\scriptscriptstyle10|10}$}}\\[-2pt]
\scalebox{0.65}{
\text{${P}_{\scriptscriptstyle11|10}$}}\\[1pt]
\scalebox{0.65}{
\text{${P}_{\scriptscriptstyle00|11}$}}\\[-2pt]
\scalebox{0.65}{
\text{${P}_{\scriptscriptstyle01|11}$}}\\[-2pt]
\scalebox{0.65}{
\text{${P}_{\scriptscriptstyle10|11}$}}\\[-2pt]
\scalebox{0.65}{
\text{${P}_{\scriptscriptstyle11|11}$}}
\end{array}
}
\begin{array}{l}
{\color{gray}\ \ \ \ \scalebox{0.65}{\text{$\vec{L}_{\scriptstyle1}\ \ \,\vec{L}_{\scriptstyle2}$\ \ $\vec{L}_{\scriptstyle3}\ \ \vec{L}_{\scriptstyle4}$\ \ \,$\vec{L}_{\scriptstyle5}\,\ \ \vec{L}_{\scriptstyle6}$\ \ $\vec{L}_{\scriptstyle7}\ \ \vec{L}_{\scriptstyle8}$\ \ $\vec{L}_{\scriptstyle9}\ \ \vec{L}_{\scriptstyle10}$\ $\vec{L}_{\scriptstyle11}\ \vec{L}_{\scriptstyle12}$\ $\vec{L}_{\scriptstyle13}\ \vec{L}_{\scriptstyle14}$\ $\vec{L}_{\scriptstyle15}\ \vec{L}_{\scriptstyle16}$ }}
}\\[2.5pt]
\left(
\begin{array}{cccccccccccccccc}
 1 & 0 & 1 & 0 & 0 & 0 & 0 & 0 & 1 & 0 & 1 & 0 & 0 & 0 & 0 & 0 \\[-2pt]
 0 & 1 & 0 & 1 & 0 & 0 & 0 & 0 & 0 & 1 & 0 & 1 & 0 & 0 & 0 & 0 \\[-2pt]
 0 & 0 & 0 & 0 & 1 & 0 & 1 & 0 & 0 & 0 & 0 & 0 & 1 & 0 & 1 & 0 \\[-2pt]
 0 & 0 & 0 & 0 & 0 & 1 & 0 & 1 & 0 & 0 & 0 & 0 & 0 & 1 & 0 & 1 \\\hline
 1 & 0 & 0 & 1 & 0 & 0 & 0 & 0 & 1 & 0 & 0 & 1 & 0 & 0 & 0 & 0 \\[-2pt]
 0 & 1 & 1 & 0 & 0 & 0 & 0 & 0 & 0 & 1 & 1 & 0 & 0 & 0 & 0 & 0 \\[-2pt]
 0 & 0 & 0 & 0 & 1 & 0 & 0 & 1 & 0 & 0 & 0 & 0 & 1 & 0 & 0 & 1 \\[-2pt]
 0 & 0 & 0 & 0 & 0 & 1 & 1 & 0 & 0 & 0 & 0 & 0 & 0 & 1 & 1 & 0 \\\hline
 1 & 0 & 1 & 0 & 0 & 0 & 0 & 0 & 0 & 0 & 0 & 0 & 1 & 0 & 1 & 0 \\[-2pt]
 0 & 1 & 0 & 1 & 0 & 0 & 0 & 0 & 0 & 0 & 0 & 0 & 0 & 1 & 0 & 1 \\[-2pt]
 0 & 0 & 0 & 0 & 1 & 0 & 1 & 0 & 1 & 0 & 1 & 0 & 0 & 0 & 0 & 0 \\[-2pt]
 0 & 0 & 0 & 0 & 0 & 1 & 0 & 1 & 0 & 1 & 0 & 1 & 0 & 0 & 0 & 0 \\\hline
 1 & 0 & 0 & 1 & 0 & 0 & 0 & 0 & 0 & 0 & 0 & 0 & 1 & 0 & 0 & 1 \\[-2pt]
 0 & 1 & 1 & 0 & 0 & 0 & 0 & 0 & 0 & 0 & 0 & 0 & 0 & 1 & 1 & 0 \\[-2pt]
 0 & 0 & 0 & 0 & 1 & 0 & 0 & 1 & 1 & 0 & 0 & 1 & 0 & 0 & 0 & 0 \\[-2pt]
 0 & 0 & 0 & 0 & 0 & 1 & 1 & 0 & 0 & 1 & 1 & 0 & 0 & 0 & 0 & 0 
\end{array}
\right)
\end{array}.
\end{eqnarray}

Crucially, any local behaviour $\vec{P}^{\scriptscriptstyle \,L}$ admits a convex decomposition of the form given in Eq.\,(\ref{Loc-det-decomp}), i.e.,
\begin{eqnarray}\label{Loc-det-decomp-Supplement}
\vec{P}^{\scriptscriptstyle{\,L}}&=&\sum_{\scriptscriptstyle{i=1}}^{\scriptscriptstyle{16}}\, \vec{L}_i\,{q_i}\ =\ L\,\vec{q}\ ,
\end{eqnarray}
where $\vec{q}$ is a vector of non-negative weights satisfying the normalization condition $\sum_{\scriptscriptstyle{i=1}}^{\scriptscriptstyle{16}} q_i=1$.

One readily verifies that the set of \textit{local deterministic strategies} $\{\vec{L}_{\scriptscriptstyle1}\,,\vec{L}_{\scriptscriptstyle2}\,,...\,,\vec{L}_{\scriptscriptstyle16}\}$, which define the \textit{local polytope} $\mathcal{P}_{\scriptscriptstyle Loc}$, is invariant under the symmetry transformations given in Eqs.\,(\ref{sym-1})-(\ref{sym-5}).

\vspace{0.3cm}
\noindent\textit{\textsf{\textit{B.1.2. {Non-signaling polytope $\mathcal{P}_{\scriptscriptstyle NS}$}}}}\vspace{0.2cm}

The \textit{non-signaling polytope} $\mathcal{P}_{\scriptscriptstyle NS}$ is spanned by the sixteen \textit{local deterministic strategies} introduced above, together with the eight so-called \textit{PR boxes}~\cite{PoRo94,BrCaPiScWe14,Sc19}. These boxes correspond to non-deterministic, non-signaling behaviors that saturate the algebraic bound of the CHSH inequalities ($S = 4$). A canonical representative is
\begin{eqnarray}\label{PR-Supplement}
(\vec{R}_{\scriptscriptstyle{1}})_{\scriptscriptstyle{ab}|\scriptscriptstyle{xy}}&:=&\tfrac{1}{2}\,\delta_{a \oplus b\,\text{=}\, x\cdot y}\,,
\end{eqnarray}
where $\oplus$ denotes addition modulo~2. The remaining boxes are obtained by relabeling the measurement settings and outcomes.
Altogether, these $\text{24 = 16\,+\,8}$ column vectors can be collected into a matrix $N=(\vec{N}_{\scriptscriptstyle1}\,,...\,,\vec{N}_{\scriptscriptstyle24})$, whose explicit form is given below:

\begin{eqnarray}\label{NS-matrix}
\!\!\!\!\!\!\!\!\!N&:=&(\vec{L}_{\scriptscriptstyle1}\,,...\,,\vec{L}_{\scriptscriptstyle16}\,,\vec{R}_{\scriptscriptstyle1}\,,...\,,\vec{R}_{\scriptscriptstyle8})\ =\ 
{\color{gray}
\begin{array}{l}
\\
\scalebox{0.65}{
\text{${P}_{\scriptscriptstyle00|00}$}}\\[-2pt]
\scalebox{0.65}{
\text{${P}_{\scriptscriptstyle01|00}$}}\\[-2pt]
\scalebox{0.65}{
\text{${P}_{\scriptscriptstyle10|00}$}}\\[-2pt]
\scalebox{0.65}{
\text{${P}_{\scriptscriptstyle11|00}$}}\\[1pt]
\scalebox{0.65}{
\text{${P}_{\scriptscriptstyle00|01}$}}\\[-2pt]
\scalebox{0.65}{
\text{${P}_{\scriptscriptstyle01|01}$}}\\[-2pt]
\scalebox{0.65}{
\text{${P}_{\scriptscriptstyle10|01}$}}\\[-2pt]
\scalebox{0.65}{
\text{${P}_{\scriptscriptstyle11|01}$}}\\[1pt]
\scalebox{0.65}{
\text{${P}_{\scriptscriptstyle00|10}$}}\\[-2pt]
\scalebox{0.65}{
\text{${P}_{\scriptscriptstyle01|10}$}}\\[-2pt]
\scalebox{0.65}{
\text{${P}_{\scriptscriptstyle10|10}$}}\\[-2pt]
\scalebox{0.65}{
\text{${P}_{\scriptscriptstyle11|10}$}}\\[1pt]
\scalebox{0.65}{
\text{${P}_{\scriptscriptstyle00|11}$}}\\[-2pt]
\scalebox{0.65}{
\text{${P}_{\scriptscriptstyle01|11}$}}\\[-2pt]
\scalebox{0.65}{
\text{${P}_{\scriptscriptstyle10|11}$}}\\[-2pt]
\scalebox{0.65}{
\text{${P}_{\scriptscriptstyle11|11}$}}
\end{array}
}
\begin{array}{l}
{\color{gray}\ \ \ \ \scalebox{0.65}{\text{$\vec{L}_{\scriptstyle1}\ \ \,\vec{L}_{\scriptstyle2}$\ \ $\vec{L}_{\scriptstyle3}\ \ \vec{L}_{\scriptstyle4}$\ \ \,$\vec{L}_{\scriptstyle5}\,\ \ \vec{L}_{\scriptstyle6}$\ \ $\vec{L}_{\scriptstyle7}\ \ \vec{L}_{\scriptstyle8}$\ \ $\vec{L}_{\scriptstyle9}\ \ \vec{L}_{\scriptstyle10}$\ $\vec{L}_{\scriptstyle11}\ \vec{L}_{\scriptstyle12}$\ $\vec{L}_{\scriptstyle13}\ \vec{L}_{\scriptstyle14}$\ $\vec{L}_{\scriptstyle15}\ \vec{L}_{\scriptstyle16}$\ \,$\vec{R}_{\scriptstyle1}\ \ \ \vec{R}_{\scriptstyle2}$\ \ \ $\vec{R}_{\scriptstyle3}\ \ \ \,\vec{R}_{\scriptstyle4}$\ \ \ \,$\vec{R}_{\scriptstyle5}\ \ \ \vec{R}_{\scriptstyle6}$\ \ \ \,$\vec{R}_{\scriptstyle7}\ \ \ \vec{R}_{\scriptstyle8}$}}
}\\[2.5pt]
\left(
\begin{array}{cccccccccccccccc|cccccccc}
 1 & 0 & 1 & 0 & 0 & 0 & 0 & 0 & 1 & 0 & 1 & 0 & 0 & 0 & 0 & 0 &\text{\textonehalf}&\text{\textonehalf}&\text{\textonehalf}&0&0&0&0&\text{\textonehalf}\\[-2pt]
 0 & 1 & 0 & 1 & 0 & 0 & 0 & 0 & 0 & 1 & 0 & 1 & 0 & 0 & 0 & 0 &0&0&0&\text{\textonehalf}&\text{\textonehalf}&\text{\textonehalf}&\text{\textonehalf}&0\\[-2pt]
 0 & 0 & 0 & 0 & 1 & 0 & 1 & 0 & 0 & 0 & 0 & 0 & 1 & 0 & 1 & 0&0&0&0&\text{\textonehalf}&\text{\textonehalf}&\text{\textonehalf}&\text{\textonehalf}&0\\[-2pt]
 0 & 0 & 0 & 0 & 0 & 1 & 0 & 1 & 0 & 0 & 0 & 0 & 0 & 1 & 0 & 1&\text{\textonehalf}&\text{\textonehalf}&\text{\textonehalf}&0&0&0&0&\text{\textonehalf}\\\hline
 1 & 0 & 0 & 1 & 0 & 0 & 0 & 0 & 1 & 0 & 0 & 1 & 0 & 0 & 0 & 0&\text{\textonehalf}&\text{\textonehalf}&0&\text{\textonehalf}&0&0&\text{\textonehalf}&0\\[-2pt]
 0 & 1 & 1 & 0 & 0 & 0 & 0 & 0 & 0 & 1 & 1 & 0 & 0 & 0 & 0 & 0&0&0&\text{\textonehalf}&0&\text{\textonehalf}&\text{\textonehalf}&0&\text{\textonehalf}\\[-2pt]
 0 & 0 & 0 & 0 & 1 & 0 & 0 & 1 & 0 & 0 & 0 & 0 & 1 & 0 & 0 & 1&0&0&\text{\textonehalf}&0&\text{\textonehalf}&\text{\textonehalf}&0&\text{\textonehalf}\\[-2pt]
 0 & 0 & 0 & 0 & 0 & 1 & 1 & 0 & 0 & 0 & 0 & 0 & 0 & 1 & 1 & 0&\text{\textonehalf}&\text{\textonehalf}&0&\text{\textonehalf}&0&0&\text{\textonehalf}&0\\\hline
 1 & 0 & 1 & 0 & 0 & 0 & 0 & 0 & 0 & 0 & 0 & 0 & 1 & 0 & 1 & 0&\text{\textonehalf}&0&\text{\textonehalf}&\text{\textonehalf}&0&\text{\textonehalf}&0&0\\[-2pt]
 0 & 1 & 0 & 1 & 0 & 0 & 0 & 0 & 0 & 0 & 0 & 0 & 0 & 1 & 0 & 1&0&\text{\textonehalf}&0&0&\text{\textonehalf}&0&\text{\textonehalf}&\text{\textonehalf}\\[-2pt]
 0 & 0 & 0 & 0 & 1 & 0 & 1 & 0 & 1 & 0 & 1 & 0 & 0 & 0 & 0 & 0&0&\text{\textonehalf}&0&0&\text{\textonehalf}&0&\text{\textonehalf}&\text{\textonehalf}\\[-2pt]
 0 & 0 & 0 & 0 & 0 & 1 & 0 & 1 & 0 & 1 & 0 & 1 & 0 & 0 & 0 & 0&\text{\textonehalf}&0&\text{\textonehalf}&\text{\textonehalf}&0&\text{\textonehalf}&0&0\\\hline
 1 & 0 & 0 & 1 & 0 & 0 & 0 & 0 & 0 & 0 & 0 & 0 & 1 & 0 & 0 & 0 &0&\text{\textonehalf}&\text{\textonehalf}&\text{\textonehalf}&\text{\textonehalf}&0&0&0\\[-2pt]
 0 & 1 & 1 & 0 & 0 & 0 & 0 & 0 & 0 & 0 & 0 & 0 & 0 & 1 & 1 & 0&\text{\textonehalf}&0&0&0&0&\text{\textonehalf}&\text{\textonehalf}&\text{\textonehalf}\\[-2pt]
 0 & 0 & 0 & 0 & 1 & 0 & 0 & 1 & 1 & 0 & 0 & 1 & 0 & 0 & 0 & 0&\text{\textonehalf}&0&0&0&0&\text{\textonehalf}&\text{\textonehalf}&\text{\textonehalf}\\[-2pt]
 0 & 0 & 0 & 0 & 0 & 1 & 1 & 0 & 0 & 1 & 1 & 0 & 0 & 0 & 0 & 0&0&\text{\textonehalf}&\text{\textonehalf}&\text{\textonehalf}&\text{\textonehalf}&0&0&0
\end{array}
\right)
\end{array}.
\end{eqnarray}

This means that any non-signaling behavior $\vec{P}^{\scriptscriptstyle{\,NS}}$ can be expressed as a convex mixture of the extremal points $\vec{N}_i$ given above, i.e.,
\begin{eqnarray}\label{LS-decomp-Supplement}
\vec{P}^{\scriptscriptstyle{\,NS}}&=&\sum_{\scriptscriptstyle{i=1}}^{\scriptscriptstyle{24}}\, \vec{N}_i\,{q_i}\ =\ N\,\vec{q}\ ,
\end{eqnarray}
where $\vec{q}$ is a vector of non-negative weights satisfying the normalization condition $\sum_{\scriptscriptstyle{i=1}}^{\scriptscriptstyle{24}} q_i=1$.

Again, it can be checked that the set of vertices $(\vec{L}_{\scriptscriptstyle1}\,,...\,,\vec{L}_{\scriptscriptstyle16}\,,\vec{R}_{\scriptscriptstyle1}\,,...\,,\vec{R}_{\scriptscriptstyle8})$, defining the \textit{non-signaling polytope} $\mathcal{P}_{\scriptscriptstyle NS}$, remains invariant under the symmetry transformations given in Eqs.\,(\ref{sym-1})-(\ref{sym-5}).





\vspace{0.3cm}

\noindent\textbf{\textsf{\text{B.2. {\textbf{Proof of Theorem~\ref{theoremL}}}}}}\vspace{0.2cm}

As shown in the \textbf{Methods} section, evaluating the \textit{local fraction} $\mu_{\scriptscriptstyle L}$ reduces to finding the vertices of the polyhedron defined by Eqs.\,(\ref{DLP-poly-1})-(\ref{DLP-poly-2}). The \textit{local fraction} $\mu_{\scriptscriptstyle L}(\vec{P})$ is then obtained as the minimum of the scalar products between $\vec{P}$ and the vertex directions; see Eq.\,(\ref{mu-solution}).

The task of vertex enumeration is effectively performed using computer algorithms. In this work, we used \href{https://porta.zib.de}{\textsc{PORTA}} software system (POlyhedron Representation Transformation Algorithm)~\cite{Ch97}, which implements Fourier-Motzkin elimination to convert between the $V$- and $H$-representations of polyhedra.

It follows that the polyhedron defined by Eqs.\,(\ref{DLP-poly-1})-(\ref{DLP-poly-2}) is the convex hull of 132 vertices together with a cone. The latter corresponds to the non-negative orthant, $\vec{q} \geqslant 0$, which is irrelevant for the solution. Among the 132 vertices, four are uninteresting for our purposes, namely $(1111000000000000)^T$, $(0000111100000000)^T$, $(0000000011110000)^T$, and $(0000000000001111)^T$, since for these vectors the expression in Eq.\,(\ref{mu-solution}) evaluates to 1. This follows directly from the normalisation condition in Eq.\,(\ref{Pab|xy}), which requires that $\sum_{\scriptscriptstyle a,b} P_{\scriptscriptstyle ab|xy} = 1$ for all $x, y \in {0,1}$. We are thus left with 128 non-trivial vertices contributing to the minimum in Eq.\,(\ref{mu-solution}); these define the solution set $\mathbb{Q}$ stated in \textbf{Theorem~\ref{theoremL}}, thereby completing its proof. The explicit list of all relevant vertices is provided in~\textbf{Part A} (\texttt{SolutionVectorsA.txt}).

\vspace{0.3cm}

\noindent\textbf{\textsf{B.3. {\textbf{Proof of Theorem~\ref{theoremNS}}}}}\vspace{0.2cm}

The idea of the proof follows the same route as that of \textbf{Theorem~\ref{theoremL}}, presented in the \textbf{Methods} section of the paper and completed in \textbf{Section~B.2} above. This highlights the generality of our method of computing \textit{fractional measures}.

Accordingly, in the first step, we show that the \textit{non-signaling fraction} $\mu_{\scriptscriptstyle{NS}}(\vec{P})$, as defined in Eq.\,(\ref{muNS}), is equivalent to solving the following linear program (LP):
\begin{eqnarray}\label{LP-NS-1}
\text{maximize}&&{\vec{1}}^{\scriptscriptstyle{T}}\!\cdot\vec{p}\,,\\\label{LP-NS-2}
\text{subject to}&&{N}\,\vec{p}\,\leqslant\,\vec{P}\,,\\\label{LP-NS-3}
&&\ \ \ \ \vec{p}\geqslant\,0\,.
\end{eqnarray}
Here $\vec{p}\in\mathbb{R}^{24}$, and $N \in\mathbb{R}^{16\times24}$ is the matrix consisting of the 24 vertices of the \textit{non-signaling polytope} defined in Eq.\,(\ref{NS-matrix}). The vector $\vec{1}:=(1,1,\ldots,1)^{\scriptscriptstyle{T}}\in\mathbb{R}^{24}$ denotes a column vector with all entries equal to one.

\begin{proof}[\myuline{Justification}]

The key observation is that, using Eq.\,(\ref{LS-decomp-Supplement}), the decomposition in Eq.\,(\ref{decompNS}) can be written as follows:
\begin{eqnarray}\label{condition-NS-1}
\vec{P}&=&q\cdot \underbrace{\sum_{\scriptscriptstyle i=1}^{\scriptscriptstyle24}\ \vec{N}_i\,q_i}_{\vec{P}^{\scriptscriptstyle{\,NS}}}\ +\ (1-q)\cdot\vec{P}^{\scriptscriptstyle{\,S}}\ .
\end{eqnarray}
Note, that we assume $0 \leqslant q < 1$, because the case  $q=1$ corresponds to the trivial situation where $\vec{P}$ is a non-signaling behavior. Consequently, there exist coefficients $p_i$, defined as $p_i := q q_i$, such that
\begin{eqnarray}\label{condition-NS-2}
\vec{P}\ -\ \sum_{\scriptscriptstyle i=1}^{\scriptscriptstyle 24}\ \vec{N}_i\,p_i\ \geqslant\ 0&\ \ \text{and}\ \ &p_i\geqslant0\ .
\end{eqnarray}

Conversely, the existence of coefficients $p_i$ fulfilling Eq.\,(\ref{condition-NS-2}) leads to the decomposition in Eq.\,(\ref{condition-NS-1}), as can be seen by rewriting
\begin{eqnarray}
\vec{P}&=&\sum_{\scriptscriptstyle i=1}^{\scriptscriptstyle24}\ \vec{N}_i\,p_i\ +\ \vec{P}\ -\ \sum_{\scriptscriptstyle i=1}^{\scriptscriptstyle24}\ \vec{N}_i\,p_i\ =\ q\cdot\underbrace{\sum_{\scriptscriptstyle i=1}^{\scriptscriptstyle 24}\ \vec{N}_i\,q_i}_{\vec{P}^{\scriptscriptstyle{\,NS}}}\ +\ (1-q)\cdot\underbrace{\tfrac{1}{1-q}\,\Big(\vec{P}\ -\ \sum_{\scriptscriptstyle i=1}^{\scriptscriptstyle24}\ \vec{N}_i\,p_i\Big)}_{\vec{P}^{\scriptscriptstyle{\,S}}}\,,
\end{eqnarray}
with $q := \sum_i p_i$ and $q_i := \nicefrac{p_i}{q}$. It remains to observe that the vectors defined in this way, $\vec{P}^{\scriptscriptstyle{\,NS}}$ and $\vec{P}^{\scriptscriptstyle{\,N}}$, are properly normalized behaviors, since both $\vec{P}$ and the $\vec{N}_i$’s are normalized behaviors by construction.

Therefore, finding $\mu_{\scriptscriptstyle NS}(\vec{P})$ in Eq.\,(\ref{muNS}) is equivalent to solving the following linear program:
\begin{eqnarray}
\text{maximize}&&{\sum}_{\scriptscriptstyle i=1}^{\scriptscriptstyle 24}\ p_i\\
\text{subject to}&&\vec{P}-{\sum}_{\scriptscriptstyle i=1}^{\scriptscriptstyle 24}\ \vec{N}_i\ p_i\geqslant0\ \ \  \text{ (16 inequalities) },\\
&&\qquad\ p_1,...,p_{24} \ \geqslant0\ \ \  \text{ (24 inequalities) },
\end{eqnarray}
which, in compact form, is given by the LP in Eqs.\,(\ref{LP-NS-1})-(\ref{LP-NS-3}).

\end{proof}

In the second step, we invoke the \textit{Strong Duality Theorem}, which guarantees the equivalence between the primal LP and its \textit{dual linear program} (DLP), given by
\begin{eqnarray}\label{DLP-NS-1}
\text{minimize}&&{\vec{P}}^{\scriptscriptstyle{\,T}}\!\cdot\vec{q}\,\\\label{DLP-NS-2}
\text{subject to}&&{N}^{\scriptscriptstyle{T}}\,\vec{q}\,\geqslant\,\vec{{1}}\ \ \  \text{ (24 inequalities) },\\\label{DLP-NS-3}
&&\quad\ \,\vec{q}\,\geqslant\,0 \ \ \   \text{ (16 inequalities) }\,,
\end{eqnarray}
with $\vec{q}\in\mathbb{R}^{16}$. The validity of this equivalence is justified by the existence of the solution to the primal LP in Eqs.\,(\ref{LP-NS-1})-(\ref{LP-NS-3}), since it maximizes a linear function over a compact set. Again, we observe that, if the problem is formulated in this way:

\noindent\textit{(i)} the chosen behavior $\vec{P}$ now appears \textit{only} in the objective function, Eq.\,(\ref{DLP-NS-1}), and

\noindent\textit{(ii)} the polyhedron in $\mathbb{R}^{16}$ defined by Eqs.\,(\ref{DLP-NS-2})-(\ref{DLP-NS-3}) is \textit{the same} for all behaviors $\vec{P}$.

To conclude, since the solution to he DLP in Eqs.\,(\ref{DLP-NS-1})-(\ref{DLP-NS-3}) is attained at the vertices of its feasible region, we have shown that the \textit{non-signaling fraction} $\mu_{\scriptscriptstyle NS}(\vec{P})$ can be computed as
\begin{eqnarray}\label{mu-NS-solution}
\mu_{\scriptscriptstyle NS}(\vec{P})&=&\min_{\vec{q}_i}\ \vec{q}_i^{\scriptscriptstyle{\,T}}\!\cdot\vec{P}\ ,
\end{eqnarray}
where $\{\vec{q}_i\}$ is the set of vertices of a polyhedron defined by $\text{40 = 24\,+\,16}$ inequalities:
\begin{eqnarray}\label{DLP-NS-poly-1}
&&{N}^{\scriptscriptstyle{T}}\,\vec{q}\,\geqslant\,\vec{{1}}\,,\\\label{DLP-NS-poly-2}
&&\quad\ \vec{q}\,\geqslant\,0\,.
\end{eqnarray}

What remains is to enumerate the relevant set of vertices, which we denote in this case by $\mathbb{S}\equiv\{\vec{q}_i\}$. Again, the problem can be efficiently solved using the \href{https://porta.zib.de}{\textsc{PORTA}} software system (POlyhedron Representation Transformation Algorithm)~\cite{Ch97}.
As a result, we find that the polyhedron defined by Eqs.\,(\ref{DLP-NS-poly-1})-(\ref{DLP-NS-poly-2}) is a convex hull of 124 vertices together with a cone. A closer inspection shows that the cone corresponds to the non-negative orthant, $\vec{q}\geqslant0$, and is therefore irrelevant for the solution. Among the 124 vertices, we may again neglect four uninteresting ones, namely $(1111000000000000)^{\scriptscriptstyle T}$, $(0000111100000000)^{\scriptscriptstyle T}$, $(0000000011110000)^{\scriptscriptstyle T}$, and $(0000000000001111)^{\scriptscriptstyle T}$, since for these vectors the expression in Eq.\,(\ref{mu-NS-solution}) evaluates to 1. This follows from the normalization of the behavior $\vec{P}$ in Eq.\,(\ref{Pab|xy}), which satisfies $\sum_{\scriptscriptstyle a,b} P_{\scriptscriptstyle ab|xy} = 1$ for all $x, y \in {0,1}$.
We are therefore left with 120 non-trivial vertices contributing to the minimum in Eq.\,(\ref{muNS-solution}), defining the solution set $\mathbb{S}$ stated in \textbf{Theorem~\ref{theoremNS}}, thereby completing its proof. For the explicit list of all relevant vertices, see \textbf{Part~A} (\texttt{SolutionVectorsS.txt}).
\vspace{0.3cm}

\noindent\textbf{\textsf{B.4. {\textbf{Proof of Theorem~\ref{Theorem-muL-e} and \textbf{Lemma~\ref{Lemma-muNS-f}}}}}}

\begin{proof}[\myuline{Proof of \textbf{Lemma~\ref{Lemma-muNS-f}}}]\ \\
It is straightforward to verify, using the explicit list of $f$-vectors in \textbf{Part A} (or \textbf{Part D}) and the definitions in Eqs.~(\ref{NS-AB0})-(\ref{NS-BA1}), that for a given behaviour $\vec{P}$ we have
\begin{eqnarray}\label{L1-f12}
\vec{f}^{\scriptscriptstyle{\,T}}_{\scriptscriptstyle  1}\!\cdot\vec{P} &=&1+\Delta_{\scriptscriptstyle  1}\qquad\&\qquad\vec{f}^{\scriptscriptstyle{\,T}}_{\scriptscriptstyle  2}\!\cdot\vec{P} \ =\ 1-\Delta_{\scriptscriptstyle  1}\ ,\\\label{L1-f34}
\vec{f}^{\scriptscriptstyle{\,T}}_{\scriptscriptstyle  3}\!\cdot\vec{P} &=&1+\Delta_{\scriptscriptstyle  2}\qquad\&\qquad\vec{f}^{\scriptscriptstyle{\,T}}_{\scriptscriptstyle  4}\!\cdot\vec{P} \ =\ 1-\Delta_{\scriptscriptstyle  2}\ ,\\\label{L1-f56}
\vec{f}^{\scriptscriptstyle{\,T}}_{\scriptscriptstyle  5}\!\cdot\vec{P} &=&1+\Delta_{\scriptscriptstyle  3}\qquad\&\qquad\vec{f}^{\scriptscriptstyle{\,T}}_{\scriptscriptstyle  6}\!\cdot\vec{P} \ =\ 1-\Delta_{\scriptscriptstyle  3}\ ,\\\label{L1-f78}
\vec{f}^{\scriptscriptstyle{\,T}}_{\scriptscriptstyle  7}\!\cdot\vec{P} &=&1+\Delta_{\scriptscriptstyle  4}\qquad\&\qquad\vec{f}^{\scriptscriptstyle{\,T}}_{\scriptscriptstyle  8}\!\cdot\vec{P} \ =\ 1-\Delta_{\scriptscriptstyle  4}\ .
\end{eqnarray}
This allows us to write
\begin{eqnarray}
\min_{\scriptscriptstyle{i\text{\,=\,}1,...,8}}\ \vec{f}_i^{\scriptscriptstyle{\,T}}\!\cdot\vec{P}&=&1+\min_{\scriptscriptstyle{i\text{\,=\,}1,...,4}}\ \pm\Delta_i\ =\ 1-\max_{\scriptscriptstyle{i\text{\,=\,}1,...,4}}\ |\Delta_i|\ \stackrel{(\ref{NS})}{=}\ 1-\Delta\ .
\end{eqnarray}
which, by definition in Eq.\,(\ref{NS}), proves \textbf{Lemma~\ref{Lemma-muNS-f}}.

\end{proof}

\begin{proof}[\myuline{Proof of \textbf{Theorem~\ref{Theorem-muL-e}}}]\ \\
For any behavior $\vec{P}$, the explicit list of $e$-vectors in \textbf{Part A} (or \textbf{Part D}), together with Eqs.~(\ref{S1})-(\ref{S4}), implies the following identities:
\begin{eqnarray}\label{E12-S4}
\vec{e}^{\scriptscriptstyle{\,T}}_{\scriptscriptstyle  1}\!\cdot\vec{P} &=&2-\tfrac{1}{2}S_{\scriptscriptstyle  4}\qquad\&\qquad\vec{e}^{\scriptscriptstyle{\,T}}_{\scriptscriptstyle  2}\!\cdot\vec{P} \ =\ 2+\tfrac{1}{2}S_{\scriptscriptstyle  4}\ ,\\\label{E34-S3}
\vec{e}^{\scriptscriptstyle{\,T}}_{\scriptscriptstyle  3}\!\cdot\vec{P} &=&2-\tfrac{1}{2}S_{\scriptscriptstyle  3}\qquad\&\qquad\vec{e}^{\scriptscriptstyle{\,T}}_{\scriptscriptstyle  4}\!\cdot\vec{P} \ =\ 2+\tfrac{1}{2}S_{\scriptscriptstyle  3}\ ,\\\label{E56-S2}
\vec{e}^{\scriptscriptstyle{\,T}}_{\scriptscriptstyle  5}\!\cdot\vec{P} &=&2-\tfrac{1}{2}S_{\scriptscriptstyle  2}\qquad\&\qquad\vec{e}^{\scriptscriptstyle{\,T}}_{\scriptscriptstyle  6}\!\cdot\vec{P} \ =\ 2+\tfrac{1}{2}S_{\scriptscriptstyle  2}\ ,\\\label{E78-S1}
\vec{e}^{\scriptscriptstyle{\,T}}_{\scriptscriptstyle  7}\!\cdot\vec{P} &=&2-\tfrac{1}{2}S_{\scriptscriptstyle  1}\qquad\&\qquad\vec{e}^{\scriptscriptstyle{\,T}}_{\scriptscriptstyle  8}\!\cdot\vec{P} \ =\ 2+\tfrac{1}{2}S_{\scriptscriptstyle  1}\ .
\end{eqnarray}
This, in turn allows us to write
\begin{eqnarray}\label{eP=S}
\min_{\scriptscriptstyle{i\text{\,=\,}1,...,8}}\ \vec{e}_i^{\scriptscriptstyle{\,T}}\!\cdot\vec{P}&=&2+\tfrac{1}{2}\min_{\scriptscriptstyle{i\text{\,=\,}1,...,4}}\ \pm S_i\ =\ 2-\tfrac{1}{2}\max_{\scriptscriptstyle{i\text{\,=\,}1,...,4}}\ |S_i|\ \stackrel{(\ref{CHSH})}{=}\ 2-\tfrac{1}{2}S\ . 
\end{eqnarray}
Now, under the assumption of a \textit{non-signaling} behavior $\vec{P}$, we have from the result derived in Ref.~\cite{BlPoYeGaBo21} that
\begin{eqnarray}\label{eq-2}
\mu_{\scriptscriptstyle L}(\vec{P})&\stackrel{\text{\cite{BlPoYeGaBo21}}}{=}&\min\big\{1\,,\tfrac{1}{2}(4-S)\big\}\,\stackrel{(\ref{eP=S})}{=}\,\min\big\{1\,,\min_{\scriptscriptstyle{i\text{\,=\,}1,...,8}}\ \vec{e}_i^{\scriptscriptstyle{\,T}}\!\cdot\vec{P}\,\big\}\,,
\end{eqnarray}
which completes the proof of \textbf{Theorem~\ref{Theorem-muL-e}}.

\end{proof}

As an aside, we note that the difficult part of the above derivation, Eq.\,(\ref{eq-2}), hinges on the result in Ref.~\cite{BlPoYeGaBo21}. However, we could have formulated this argument by referring instead to the stronger result, \textbf{Theorem \ref{theoremL}}, proved in this paper, which holds for arbitrary (including signaling) behaviors. We therefore now give a second, independent proof of \textbf{Theorem~\ref{Theorem-muL-e}}.

\begin{proof}[\myuline{Proof of \textbf{Theorem~\ref{Theorem-muL-e}} from \textbf{Theorem~\ref{theoremL}}}]\ \\
For completeness, let us justify the result in Ref.~\cite{BlPoYeGaBo21} using our \textbf{Theorem \ref{theoremL}}. In view of the property established in Eq.~(\ref{eP=S}), it suffices to show that for \textit{non-signaling} behaviors $\vec{P}$ ($\Delta=0$), the minimization over $\mathbb{Q}$ in Eq.~(\ref{muL-solution}) can be restricted to the $\vec{e}_i$’s. That is, we have
\begin{eqnarray}\label{minQ-e}
\min_{\scriptscriptstyle{\vec{q}\,\in\,\mathbb{Q}}}\,\vec{q}^{\scriptscriptstyle{\,T}}\!\cdot\vec{P}&=&\min_{\scriptscriptstyle{i\text{\,=\,}1,...,8}}\big\{1\,,\,\vec{e}_i^{\scriptscriptstyle{\,T}}\!\cdot\vec{P}\,\big\}\,, \quad\qquad\text{for \textit{non-signaling} behaviors $\vec{P}$ ($\Delta=0$)}\,.
\end{eqnarray}

We start by observing that, in the \textit{non-signaling} case, \textbf{Lemma~\ref{Lemma-muNS-f}} implies that the \textit{f-measures} are equal to 1.
Therefore, to establish Eq.~(\ref{minQ-e}), it suffices to show that for \textit{non-signaling} behaviors $\vec{P}$, the $g$-, $t$-, and $s$-measures are always bounded below by $\min_{\scriptscriptstyle{i\text{\,=\,}1,...,8}}\big\{1\,,\,\vec{e}_i^{\scriptscriptstyle{\,T}}\!\cdot\vec{P}\,\big\}$. 

We first demonstrate this for the representative vectors $\vec{g}_{\beta}$, $\vec{t}_{\beta}$ and $\vec{t}_{\gamma}$ in Eq.\,(\ref{representatives}), for which we can verify
\begin{eqnarray}\label{ineq-1}
\vec{g}_{\alpha}^{\scriptscriptstyle{\,T}}\!\cdot\vec{P}&=&\Delta_{\scriptscriptstyle  1}+\Delta_{\scriptscriptstyle  4}+P_{\scriptscriptstyle01|01}+1\ \geqslant1\,,\\\label{ineq-2}
\vec{t}_{\beta}^{\scriptscriptstyle{\,T}}\!\cdot\vec{P}&=&\Delta_{\scriptscriptstyle  1}+\Delta_{\scriptscriptstyle  2}+\Delta_{\scriptscriptstyle  4}+P_{\scriptscriptstyle01|01}+P_{\scriptscriptstyle00|11}+1\ \geqslant1\,,\\\label{ineq-3}
\vec{t}_{\gamma}^{\scriptscriptstyle{\,T}}\!\cdot\vec{P}&=&\Delta_{\scriptscriptstyle  1}+\Delta_{\scriptscriptstyle  2}-\Delta_{\scriptscriptstyle  4}+P_{\scriptscriptstyle01|01}+P_{\scriptscriptstyle10|11}+1\ \geqslant1\,,
\end{eqnarray}
where $\Delta_i$ denote signaling signatures defined in Eqs.\,(\ref{NS-AB0})-(\ref{NS-BA1}). The last inequality in each of the three cases holds for \textit{non-signaling} behaviours, since in that case $\Delta_{\scriptscriptstyle  i}=0$ for all $i=1,...\,,8$.

For the representatives $\vec{s}_{\beta}$, $\vec{s}_{\gamma}$, and $\vec{s}_{\delta}$ in Eq.\,(\ref{representatives}), we can write
\begin{eqnarray}
\label{ineq-4}
\vec{s}_{\beta}^{\scriptscriptstyle{\,T}}\!\cdot\vec{P}&=&\tfrac{1}{2}\big(\,\vec{e}_7^{\scriptscriptstyle{\,T}}\!\cdot\vec{P}+\Delta_{\scriptscriptstyle  1}+\Delta_{\scriptscriptstyle  2}+\Delta_{\scriptscriptstyle  3}+\Delta_{\scriptscriptstyle  4}+1\,\big) \stackrel{\scriptscriptstyle{\Delta\text{\,=\,}0}}{=}\ \tfrac{1}{2}\big(\,\vec{e}_7^{\scriptscriptstyle{\,T}}\!\cdot\vec{P}+1\,\big)\ \geqslant\ \min\big\{1\,,\,\vec{e}_7^{\scriptscriptstyle{\,T}}\!\cdot\vec{P}\,\big\}\ \geqslant\ \min_{\scriptscriptstyle{i\text{\,=\,}1,...,8}}\big\{1\,,\,\vec{e}_i^{\scriptscriptstyle{\,T}}\!\cdot\vec{P}\,\big\}\,,\\\label{ineq-5}
\vec{s}_{\gamma}^{\scriptscriptstyle{\,T}}\!\cdot\vec{P}&=&\tfrac{1}{2}\big(\,\vec{e}_2^{\scriptscriptstyle{\,T}}\!\cdot\vec{P}+\Delta_{\scriptscriptstyle  1}-\Delta_{\scriptscriptstyle  2}+\Delta_{\scriptscriptstyle  3}-\Delta_{\scriptscriptstyle  4}+1\,\big) \stackrel{\scriptscriptstyle{\Delta\text{\,=\,}0}}{=}\ \tfrac{1}{2}\big(\,\vec{e}_2^{\scriptscriptstyle{\,T}}\!\cdot\vec{P}+1\,\big)\ \geqslant\ \min\big\{1\,,\,\vec{e}_2^{\scriptscriptstyle{\,T}}\!\cdot\vec{P}\,\big\}\ \geqslant\ \min_{\scriptscriptstyle{i\text{\,=\,}1,...,8}}\big\{1\,,\,\vec{e}_i^{\scriptscriptstyle{\,T}}\!\cdot\vec{P}\,\big\}\,,
\\\label{ineq-6}
\vec{s}_{\delta}^{\scriptscriptstyle{\,T}}\!\cdot\vec{P}&=&\tfrac{1}{2}\big(\,\vec{e}_5^{\scriptscriptstyle{\,T}}\!\cdot\vec{P}+\Delta_{\scriptscriptstyle  1}-\Delta_{\scriptscriptstyle  2}+\Delta_{\scriptscriptstyle  3}+\Delta_{\scriptscriptstyle  4}+1\,\big) \stackrel{\scriptscriptstyle{\Delta\text{\,=\,}0}}{=}\ \tfrac{1}{2}\big(\,\vec{e}_5^{\scriptscriptstyle{\,T}}\!\cdot\vec{P}+1\,\big)\ \geqslant\ \min\big\{1\,,\,\vec{e}_5^{\scriptscriptstyle{\,T}}\!\cdot\vec{P}\,\big\}\ \geqslant\ \min_{\scriptscriptstyle{i\text{\,=\,}1,...,8}}\big\{1\,,\,\vec{e}_i^{\scriptscriptstyle{\,T}}\!\cdot\vec{P}\,\big\}\,,
\end{eqnarray}
where the last equality in each of the three cases uses the \textit{non-signaling} assumption. The penultimate inequality in each case follows from the elementary bound $x+1\,\geqslant\,2\,\min\,\{1\,,\,x\}$.

Now, we recall that these representative vectors $\vec{g}_{\beta}$, $\vec{t}_{\beta}$, $\vec{t}_{\gamma}$, $\vec{s}_{\beta}$, $\vec{s}_{\gamma}$, and $\vec{s}_{\delta}$ generate all remaining \textit{g}-, \textit{t}-, and \textit{s}-type vectors under the symmetry group defined by Eqs.\,(\ref{sym-1})-(\ref{sym-5}). Since \textit{non-signaling} is preserved under the symmetry transformations, the inequalities in Eqs.\,(\ref{ineq-1})-(\ref{ineq-6}) extend directly to the full orbits\footnote{As a technical remark, recall that the inverse of a permutation matrix is its transpose. Indeed, any permutation can be written as a product of transpositions, and the claim holds for each transposition. Clearly, this applies to the symmetries in Eqs.~(\ref{sym-1})-(\ref{sym-5}) since they act as permutations on $\vec{P}$.\\ \\ \\ \\ \\}, namely,
\begin{eqnarray}
\vec{g}_{\scriptscriptstyle  i}^{\scriptscriptstyle{\,T}}\!\cdot\vec{P}&\geqslant&1\,,\qquad\qquad\qquad\qquad\qquad\ \ \text {for \ $i=1,...\,,16$}\,,\\
\vec{t}_{\scriptscriptstyle  j}^{\scriptscriptstyle{\,T}}\!\cdot\vec{P}&\geqslant&1\,,\qquad\qquad\qquad\qquad\qquad\ \ \text {for \ $j=1,...\,,32$}\,,\\
\vec{s}_{\scriptscriptstyle  k}^{\scriptscriptstyle{\,T}}\!\cdot\vec{P}&\geqslant&\min_{\scriptscriptstyle{i\text{\,=\,}1,...,8}}\big\{1\,,\,\vec{e}_i^{\scriptscriptstyle{\,T}}\!\cdot\vec{P}\,\big\}\,,\qquad\qquad\text {for \ $k=1,...\,,64$}\,,
\end{eqnarray}
for all \textit{non-signaling} $\vec{P}$.
We have thus established Eq.\,(\ref{minQ-e}), thereby completing the proof.

\end{proof}

\vspace{0.3cm}

\noindent\textbf{\textsf{\text{B.5. Prevalence of the solution vectors}}}\vspace{0.3cm}

\noindent\textit{\textsf{\textit{B.5.1. {Non-redundancy of the solution sets $\mathbb{Q}$ and $\mathbb{S}$}}}}\vspace{0.2cm}

It is natural to ask whether all 128 and 120 measures in the respective solution sets $\mathbb{Q}$ and $\mathbb{S}$ are indeed essential, that is, whether any of the vectors can be discarded without loss of generality. We find that each vector plays a necessary role, since for every one of them there exists a behavior $\vec{P}$ for which the minimum in Eqs.\,(\ref{muL-solution}) or (\ref{muNS-solution}) is attained uniquely with that vector.

Consider the solution set $\mathbb{Q}$ in Eq.\,(\ref{muL-solution})/(\ref{Q}), which consists of 128 vectors, each with entries equal to 0 or 1. For every vector $\vec{q} \in \mathbb{Q}$, construct a corresponding behavior $\vec{P}^{\scriptscriptstyle (q)}$ by replacing all 0’s with 1’s and all 1’s with 0’s in $\vec{q}$, and uniformly normalizing the result such that $\sum_{\scriptscriptstyle ab}\vec{P}^{\scriptscriptstyle (q)}_{\scriptscriptstyle ab|xy}=1$ for each $x,y=0,1$. This construction ensures that for each such behavior
$\vec{q}^{\scriptscriptstyle\,T}\!\cdot \vec{P}^{\scriptscriptstyle (q)} = 0$,
whereas for any other vector $\vec{q}' \neq \vec{q}$ one has
$\vec{q}'^{\scriptscriptstyle\,T}\!\cdot \vec{P}^{\scriptscriptstyle (q)} > 0$ (since some entry satisfies $q_i=0$ and $q’_i=1$).
Hence, the minimum in Eq.\,(\ref{muL-solution}) is attained uniquely for the vector $\vec{q}$, demonstrating that every element of the solution set $\mathbb{Q}$ is necessary for the completeness of the solution structure.

For the solution set $\mathbb{S}$ in Eqs.\,(\ref{muNS-solution})/(\ref{S}), we obtain a similar structure, except that in the \textit{z}-vectors one of the entries equal to 1 is replaced with a 2. Applying the same construction as above (with 2’s also replaced by 0’s), one can verify that for each vector $\vec{q} \in \mathbb{S}$ there exists a behavior $\vec{P}^{\scriptscriptstyle (q)}$ for which the minimum in Eq.\,(\ref{muNS-solution}) is attained uniquely with that vector.
Hence, all 120 vectors in the solution set $\mathbb{S}$ are likewise essential.

\vspace{0.3cm}
\noindent\textit{\textsf{\textit{B.5.2. {Details of sampling method in Fig.~\ref{Fig-Weights}}}}}\vspace{0.2cm}

The relevance of particular vectors within the solution sets $\mathbb{Q}$ and $\mathbb{S}$ can be analyzed by estimating, over randomly chosen behaviors, the frequency with which each measure uniquely attains the minimum in Eqs.~(\ref{muL-solution}) and (\ref{muNS-solution}), respectively. We numerically evaluated those frequencies and collected the results for each vector class, as presented in Fig.~\ref{Fig-Weights}.

In our analysis, we generated 500,000 random behaviors $\vec{P}\in\mathcal{P}$, obtaining the following results (note, the locality and non-signaling calculations are based on the same set of vectors):

\begin{table}[hbt!]
\begin{minipage}[t]{60mm}
\textbf{Locality}\vspace{0.5pt}\\
\begin{tabular}{||c|c|c||}
\hline
\ Class\ \ &\ No. of cases\ \ &\ \%\ \ \\
\hline\hline
$f$'s&380,881&76.18\\
$g$'s&86,744 &17.35\\
$t$'s&13,096  & 2.62\\
$s$'s&19,069 &3.81\\
$e$'s&210 &0.04\\
\hline
\end{tabular}
\end{minipage}
\begin{minipage}[t]{60mm}
\textbf{Non-signaling}\vspace{0.5pt}\\
\begin{tabular}{||c|c|c||}
\hline
\ Class\ \ &\ No. of cases\ \ &\ \%\ \ \\
\hline\hline
$f$'s&389,161&77.83\\
$g$'s&91,363 &18.27\\
$t$'s&15,664  & 3.13\\
$z$'s&3,812 &0.76\\
\hline
\end{tabular}
\end{minipage}
\end{table}

The generated sixteen-component vectors $\vec{P}$ were normalised so that $\sum_{ab} P_{\scriptscriptstyle ab|xy}=1$ for each $x,y$ (see Eq.~(\ref{Pab|xy})). To implement this constraint, we sampled four independent uniform Dirichlet distributions, \texttt{Dir}(1,1,1,1), one for each quartet $\{P_{\scriptscriptstyle ab|00}\}$, $\{P_{\scriptscriptstyle ab|01}\}$, $\{P_{\scriptscriptstyle ab|10}\}$, and $\{P_{\scriptscriptstyle ab|11}\}$. In practice, we used the function $\texttt{'np.random.dirichlet'}$ from the \texttt{NumPy} package in \texttt{Python}.

The percentages, in general, follow expectation. Note that, regarding the percentages of $z$- versus $s$-solutions, the $z$- and $s$-vectors are in one-to-one correspondence and have the same form, except that one of the 1’s in an $s$-vector becomes a 2 in the corresponding $z$-vector. As a result, a $z$-measure can evaluate to a larger value than its $s$-counterpart, making $z$-measures less likely to attain the minimum and hence the minimizing class tends to migrate from the $s$- to the $f$-, $g$-, $t$-, or $z$-\textit{measures}. Clearly, in the case of the $e$-measures, the minimizer must shift to other classes altogether. For the generated dataset, the following table gives the exact counts of behaviors for which the minimizing vector measure changes class when switching from $\mathbb{Q}$ (measuring locality) to $\mathbb{S}$ (measuring non-signaling).

\begin{table}[hbt!]
\begin{minipage}[t]{60mm}
\textbf{Migration between classes\\when switching form $\mathbb{Q}$ to $\mathbb{S}$}\vspace{0.5pt}\\
\begin{tabular}{||c|c||}
\hline
\ Class change\ \ &\ No. of cases\ \ \\
\hline\hline
$s$'s to $f$'s&8,131\\
$s$'s to $g$'s&4,576\\
$s$'s to $t$'s&2,557\\
$s$'s to $z$'s&3,805\\
$e$'s to $f$'s&149\\
$e$'s to $g$'s&43\\
$e$'s to $t$'s&11\\
$e$'s to $z$'s&7\\
\hline
\end{tabular}
\end{minipage}
\end{table}

\vspace{1cm}


\newpage

\newpage

\noindent\textsf{\Large \textbf{\underline{Part C}}: Analytical approach}\vspace{0.5cm}


\noindent\textbf{\textsf{\text{C.1. {The vector spaces}}}}\vspace{0.2cm}

In this section, we provide an alternative method for understanding \textbf{Theorems~\ref{theoremL}} and \textbf{\ref{theoremNS}}, based on an explicit construction of the solutions. This would be relevant to readers who wish to know how one could go about identifying the solutions one by one and to explicitly understand the form of the local/non-signalling component of any vector (which might inform extensions of the present work in which the linear programming approach is inapplicable). Even though this method does not constitute a complete proof (because of occasional corrections needed at various points), it offers insight into the structure of the solution space. Note, relative to the main text, we alter the order of the derivation of the various results here, since the construction approach we employ is easier to illustrate in the different order we adopt.

Consider a scenario as described at the start of the \textbf{Results} section in the main paper.  We have four possible states of the space, $i=1,\ldots,4$ and each state yields four possible outcomes $j=1,\ldots,4$. Denote by $v_{ij}$ the probability of outcome $j$ given that we are in state $i$. We imagine a $4 \times 4$ matrix and call the $i$ index the row number and the $j$ index the column number. Here, the conditional probabilities $P_{ab|xy}$ are denoted $v_{ij}$, with $i=1, 2, 3$ and $4$ corresponding to ($x=0, y=0$), ($x=0, y=1$), ($x=1, y=0$) and ($x=1, y=1$) respectively and $j=1, 2, 3$ and $4$ corresponding to ($a=0, b=0$), ($a=0, b=1$), ($a=1, b=0$) and ($a=1, b=1$) respectively.

Now consider the vector space $\mathcal{P}$ denoted by all possible 16-element vectors ${\bf v}=(v_{ij})_{i,j=1,\ldots,4}$. It is clear that all 
\begin{equation}\label{eq:generalgreat}
v_{ij} \geq 0
\end{equation}
 and
\begin{equation}\label{eq:generalsum}
\sum_{j} v_{ij}=1 \hspace{0.5cm} i=1,\ldots,4.
\end{equation}
These are the only restrictions governing the general correlation space $\mathcal{P}$.

We consider two subspaces of $\mathcal{P}$. The first of these is the non-signaling space $\mathcal{P}_{\scriptscriptstyle NS}$. Here vectors, denoted in an analogous way by ${\bf u}=(u_{ij})_{i,j=1,\ldots,4}$, satisfy the above conditions and in addition the four non-signaling conditions from equations (\ref{NS-AB0}-\ref{NS-BA1}) from the main paper, which we write below as
\begin{eqnarray}\label{eq:nonsignaling}
u_{11}+u_{12}=u_{21}+u_{22}, \hspace{0.2cm} u_{31}+u_{32}=u_{41}+u_{42}, \nonumber \\
u_{11}+u_{13}=u_{31}+u_{33}, \hspace{0.2cm} u_{21}+u_{23}=u_{41}+u_{43}.
\end{eqnarray}

The second subspace is the local space $\mathcal{P}_{\scriptscriptstyle{Loc}}$. Here the vectors satisfy the general conditions and the non-signaling conditions, and in addition they satisfy the four locality conditions summarized in inequality (\ref{CHSH}), using the Bell-CHSH expressions from equations (\ref{S1}-\ref{S4}) of the main paper, which we write below as:
\begin{eqnarray}\label{eq:locality}
0 \leq u_{12}+u_{13}+u_{22}+u_{23}+u_{32}+u_{33}-u_{42}-u_{43} \leq 2, \nonumber \\
0 \leq u_{12}+u_{13}+u_{22}+u_{23}-u_{32}-u_{33}+u_{42}+u_{43} \leq 2, \nonumber \\
0 \leq u_{12}+u_{13}-u_{22}-u_{23}+u_{32}+u_{33}+u_{42}+u_{43} \leq 2, \nonumber \\
0 \leq -u_{12}-u_{13}+u_{22}+u_{23}+u_{32}+u_{33}+u_{42}+u_{43} \leq 2.
\end{eqnarray}

The {\it closeness} from vector {\bf v} to vector {\bf u} is defined as the maximal value of $p \in [0,1]$ for which
\begin{equation}\label{eq:pdistance}
v_{ij}=pu_{ij}+(1-p) w_{ij} \hspace{0.5cm} i,j=1,\ldots,4
\end{equation}
for some behavior ${\bf w}=(w_{ij})_{i,j=1,\ldots,4} \in \mathcal{P}$,  $p \in [0, 1]$ 

Note that since $\sum_{j} v_{ij}=1$ and $\sum_{j} u_{ij}=1$ we automatically have that $\sum_{j} w_{ij}=1$ for any vectors satisfying (\ref{eq:pdistance}). We thus just need to find the largest value of $p$ for which all $w_{ij}$ terms are non-negative. Rearranging (\ref{eq:pdistance}) gives 
\begin{equation}
w_{ij}=\frac{v_{ij}-pu_{ij}}{1-p} \geq 0 \Rightarrow
\end{equation}
\begin{equation}\label{eq:plessvoveru}
p \leq \frac{v_{ij}}{u_{ij}}
\end{equation}
for all positive $u_{ij}$. For $u_{ij}=0$ (or even taking negative values, as we consider later in the extended space $\mathcal{P}_{\scriptscriptstyle e}$) the $w_{ij}$ terms in (\ref{eq:pdistance}) are clearly non-negative.

The largest such value of $p$, the \textit{closeness} from {\bf v} to {\bf u}, is thus given by
\begin{equation}\label{eq:pvuasmin}
p({\bf v}, {\bf u})= \min_{i,j} \left( \frac{v_{ij}}{u_{ij}} \right).
\end{equation}

Similarly we can define the \textit{closeness} of a vector to a vector space by maximizing the closeness from the focal vector over all elements of the target space. For example, the closeness of {\bf v} to $\mathcal{P}_{\scriptscriptstyle NS}$ is
\begin{equation}
p({\bf v}, \mathcal{P}_{\scriptscriptstyle NS})=\max_{{\bf u} \in \mathcal{P}_{\scriptscriptstyle NS}} p({\bf v}, {\bf u}) .
\end{equation}
The closeness of a vector to the non-signaling space (or the local space) is then the non-signaling (or local) fraction as defined in the main paper.

The non-signaling fraction $\mu_{\scriptscriptstyle  NS}$ (see \textbf{Definition~\ref{Def-muNS}} of the main paper) is then simply 
\begin{equation}
\mu_{\scriptscriptstyle  NS}=p({\bf v}, \mathcal{P}_{\scriptscriptstyle NS}).
\end{equation}
Similarly the local fraction (see \textbf{Definition~\ref{Def-muL}} of the main paper) is 
\begin{equation}
\mu_{\scriptscriptstyle  L}=p({\bf v}, \mathcal{P}_{\scriptscriptstyle Loc}).
\end{equation}
Note that the closeness of a vector to a space that the vector is an element of, for example $p({\bf v}, \mathcal{P})$ for any vector $v$ (or $p({\bf v}, \mathcal{P}_{\scriptscriptstyle NS})$ if ${\bf v}$ is local), is 1.

\vspace{0.3cm}
\noindent\textbf{\textsf{\text{C.2. {Vector space parametrizations}}}}
\vspace{0.2cm}

The space $\mathcal{P}$ is easy to represent generally, this being all vectors that are non-negative and satisfy the four summations from (\ref{eq:generalsum}). This gives a 12 dimensional space (three free entries in each row, since the fourth is determined from the others to ensure a sum of one).

The spaces $\mathcal{P}_{\scriptscriptstyle NS}$ and $\mathcal{P}_{\scriptscriptstyle Loc}$ are more complicated and we find it useful to use the following parametrization.

For $\mathcal{P}_{\scriptscriptstyle NS}$ we have the four additional constraints from (\ref{eq:nonsignaling}). These are equality constraints, so that they reduce the dimension of the space by a further 4 dimensions to 8. Our parametrization is then;
\begin{eqnarray}\label{eq:parametrization}
u_{11}=1-\lambda_{2}-\lambda_{4}-\lambda_{8}, \hspace{0.2cm} u_{12}=\lambda_{2}+\lambda_{4}, \hspace{0.2cm} u_{13}=\lambda_{8}-\lambda_{6}, \hspace{0.2cm} u_{14}=\lambda_{6}; \nonumber \\
u_{21}=1-\lambda_{1}-\lambda_{3}-\lambda_{8}, \hspace{0.2cm} u_{22}=\lambda_{1}+\lambda_{3}, \hspace{0.2cm} u_{23}=\lambda_{8}-\lambda_{5}, \hspace{0.2cm} u_{24}=\lambda_{5}; \nonumber \\
u_{31}=1-\lambda_{2}-\lambda_{6}-\lambda_{7}, \hspace{0.2cm} u_{32}=\lambda_{2}+\lambda_{6}, \hspace{0.2cm} u_{33}=\lambda_{7}-\lambda_{4}, \hspace{0.2cm} u_{34}=\lambda_{4}; \nonumber \\
u_{41}=1-\lambda_{1}-\lambda_{5}-\lambda_{7}, \hspace{0.2cm} u_{42}=\lambda_{1}+\lambda_{5}, \hspace{0.2cm} u_{43}=\lambda_{7}-\lambda_{3}, \hspace{0.2cm} u_{44}=\lambda_{3}.
\end{eqnarray}
We see that, with this parametrization, all the conditions from (\ref{eq:generalsum}) and (\ref{eq:nonsignaling}) are automatically satisfied, and the only additional constraints that define the space are the non-negativeness constraints,
\begin{equation}\label{eq:ugreat}
u_{ij} \geq 0.
\end{equation}
We note that it is also convenient to define a superspace of $\mathcal{P}_{\scriptscriptstyle NS}$, denoted by $\mathcal{P}_{\scriptscriptstyle e}$, which is the above space, but without the constraints (\ref{eq:ugreat}), allowing the $u_{ij}$ vectors to break the standard probabilistic restrictions. This is useful when considering maximum signaling and its relationship to the spatial arguments developed in the main paper and below (see Lemma 1).

For $\mathcal{P}_{\scriptscriptstyle Loc}$ we have, as well, the four constraints from (\ref{eq:nonsignaling}). These are inequality constraints, so while they reduce the size of the space, they do not reduce its dimension. We simply then have the same parametrization from (\ref{eq:parametrization}) with the constraints (\ref{eq:locality}) rewritten in terms of our parameters as
\begin{eqnarray}\label{eq:parametrizationlocal}
0 \leq \lambda_{1}+\lambda_{4}+\lambda_{8}-\lambda_{6} \leq 1, \nonumber \\
0 \leq \lambda_{2}+\lambda_{3}+\lambda_{8}-\lambda_{5} \leq 1, \nonumber \\
0 \leq \lambda_{1}+\lambda_{6}+\lambda_{7}-\lambda_{4} \leq 1, \nonumber \\
0 \leq \lambda_{2}+\lambda_{5}+\lambda_{7}-\lambda_{3} \leq 1,
\end{eqnarray}
as well as the (\ref{eq:ugreat}) constraints.

\vspace{0.3cm}
\noindent\textbf{\textsf{\text{C.3. {Maximum signaling}}}}
\vspace{0.2cm}

By combining the non-signaling equations (\ref{eq:nonsignaling}) with the row summation equations, we obtain precisely eight collections of vector elements whose summations are guaranteed to be equal to 1 when we have non-signaling (these are easy to verify using the parametrization (\ref{eq:parametrization})). These are:
\begin{eqnarray}\label{eq:eightfours}
f_{1}=v_{11}+v_{12}+v_{23}+v_{24}, \nonumber \\
f_{2}=v_{13}+v_{14}+v_{21}+v_{22}, \nonumber \\
f_{3}=v_{31}+v_{32}+v_{43}+v_{44}, \nonumber \\
f_{4}=v_{33}+v_{34}+v_{41}+v_{42}, \nonumber \\
f_{5}=v_{11}+v_{13}+v_{32}+v_{34}, \nonumber \\
f_{6}=v_{12}+v_{14}+v_{31}+v_{33}, \nonumber \\
f_{7}=v_{21}+v_{23}+v_{42}+v_{44}, \nonumber \\
f_{8}=v_{22}+v_{24}+v_{41}+v_{43}.
\end{eqnarray}
These vector sums are the eight $f$-measures, as introduced in the main paper. Recall that the corresponding operators are referred to as $f$- vectors etc. 

We note that from the above 
\begin{equation}
f_{1}+f_{2}=f_{3}+f_{4}=f_{5}+f_{6}=f_{7}+f_{8}=2.
\end{equation}

It is easy to see that there is one general form and that each can be obtained as a simple renumbering of $f_{1}$. We demonstrate this in the final section of this \textbf{Supplementary Information} (Part D).

Thus, we can consider any single four element measure, without loss of generality, to show Lemma 1 below. \\

When we have non-signaling, maximum signaling ($\Delta$ from the main paper, see equation (\ref{NS})) is 0. When we have signaling, it is greater than 0 and is given from the first statement from the following theorem. The second statement then shows the equivalence of maximal signaling to one minus the closeness to the extended non-signaling space $\mathcal{P}_{\scriptscriptstyle e}$.\vspace{0.1cm}

\noindent\textbf{Lemma 1a} (Maximal signaling)\textbf{.} \\
$\Delta = 1-\min \{f_{1}, f_{2}, f_{3}, f_{4}, f_{5}, f_{6}, f_{7}, f_{8} \}$
\vspace{0.3cm}

\noindent\textit{Proof.}\ \\
From equations (\ref{NS-AB0}-\ref{NS-BA1}) in the main paper we have that:
\begin{eqnarray}\label{eq:Deltasandfs}
\Delta_{1}=v_{11}+v_{12}-v_{21}-v_{22}=f_{1}-1=1-f_{2}, \nonumber \\
\Delta_{2}=v_{31}+v_{32}-v_{41}-v_{42}=f_{3}-1=1-f_{4}, \nonumber \\
\Delta_{3}=v_{11}+v_{13}-v_{31}-v_{33}=f_{5}-1=1-f_{6}, \nonumber \\
\Delta_{4}=v_{21}+v_{23}-v_{41}-v_{43}=f_{7}-1=1-f_{8}.
\end{eqnarray}
Precisely one of each pair $f_{1}, f_{2}$; $f_{3}, f_{4}$; $f_{5}, f_{6}$ and $f_{7},f_{8}$ is less than 1 (unless both are equal to 1). Thus, the modulus of $\Delta_{1}$ is 1 minus the smaller of $f_{1}$ and $f_{2}$, and similarly for the other $\Delta_{i}$ terms. As $\Delta$ is the largest of the moduli of the four $\Delta_{i}$ terms, Lemma 1a follows.
\vspace{0.3cm}

\noindent\textbf{Lemma 1b} (Maximal signaling and the extended non-signaling space)\textbf{.} \\
$p({\bf v}, \mathcal{P}_{\scriptscriptstyle e}) = \Delta$.
\vspace{0.3cm}


\noindent\textit{Proof.}\ \\
a) First, we show that $p({\bf v}, \mathcal{P}_{\scriptscriptstyle e}) \leq f_{i}$. We give the calculations for our representative measure $f_{1}$ but they are the same for all of the measures. 
We know that from (\ref{eq:plessvoveru}) that for the closeness $p$ to any element of $\mathcal{P}_{\scriptscriptstyle e}$, we have
\begin{equation}
p \leq \frac{v_{11}}{1-\lambda_{2}-\lambda_{4}-\lambda_{8}}, p \leq \frac{v_{12}}{\lambda_{2}+\lambda_{4}}, p \leq \frac{v_{23}}{\lambda_{8}-\lambda_{5}}, p \leq \frac{v_{24}}{\lambda_{5}},
\end{equation}
whenever the right-hand side denominator is positive, with no corresponding condition when it is not. These rearrange to 
\begin{equation}\label{eq:lambdapineqs}
1-\lambda_{2}-\lambda_{4}-\lambda_{8} \leq \frac{v_{11}}{p}, \lambda_{2}+\lambda_{4} \leq \frac{v_{12}}{p}, \lambda_{8}-\lambda_{5} \leq \frac{v_{23}}{p}, \lambda_{5} \leq \frac{v_{24}}{p}.
\end{equation}
We note that the above conditions clearly hold when the right-hand side terms are negative or zero.

Summing the four terms in (\ref{eq:lambdapineqs}) gives
\begin{equation}
1 \leq \frac{v_{11}+v_{12}+v_{23}+v_{24}}{p} \Rightarrow p \leq f_{1}.
\end{equation}
The closeness measure to any vector (in main text we use the term behavior) in the space is thus bounded above by $f_{1}$, so the closeness to the space itself is similarly bounded above by $f_{1}$, i.e. $p({\bf v}, \mathcal{P}_{\scriptscriptstyle e}) \leq f_{1}$ and more generally we have that 
\begin{equation}
p({\bf v}, \mathcal{P}_{\scriptscriptstyle e}) \leq \min \{f_{1}, f_{2}, f_{3}, f_{4}, f_{5}, f_{6}, f_{7}, f_{8} \}.
\end{equation}

b) Now suppose, without loss of generality, that: \\
$\min \{f_{1}, f_{2}, f_{3}, f_{4}, f_{5}, f_{6}, f_{7}, f_{8} \}=f_{1}$. We now find a vector {\bf u} for which the closeness value $p({\bf v}, {\bf u})=f_{1}$ can be achieved. This would then imply that $p({\bf v}, \mathcal{P}_{\scriptscriptstyle e}) \geq \min \{f_{1}, f_{2}, f_{3}, f_{4}, f_{5}, f_{6}, f_{7}, f_{8} \}$, so that combining with a) above would complete the proof.

Solving for $v_{ij}=u_{ij}p$ for the four component parts of the 
f-measure $f_{1}$ we obtain 
\begin{equation}\label{eq:fvectorlambdasfirst}
\lambda_{5}=\frac{v_{24}}{p}, \lambda_{8}=\frac{v_{23}+v_{24}}{p}, \lambda_{2}+\lambda_{4}=\frac{v_{12}}{p}
\end{equation}
and 
\begin{equation}
p=v_{11}+v_{12}+v_{23}+v_{24}=f_{1}.
\end{equation}
Thus, for these four components the inequality  (\ref{eq:plessvoveru}) is satisfied with equality. To show that this yields a vector that satisfies $p({\bf v}, {\bf u})=f_{1}$, we need to show that the other 12 terms also all satisfy (\ref{eq:plessvoveru}) (with equality or not).

To do this, we need to select the remaining values of $\lambda_{i}$ appropriately, and there are eight distinct cases, depending upon which of $f_{3}$ or $f_{4}$, $f_{5}$ or $f_{6}$, $f_{7}$ or $f_{8}$ is smaller than one (recall that this will be the case for exactly one in each case, unless they are equal to 1, as the three pairs each sum to 2).

The construction we select is as follows: \\
Pick either row 3 or row 4. Select one element from this row to be satisfied with inequality, with the other three satisfied with equality. Then, from the other of rows 3 and 4 pick either the first and third or second and fourth elements to be satisfied with inequality and the other two with equality. This actually yields 16 possible choices, but we only need to select 8 of these to obtain all the relevant cases (two of the above choices yield each case, whichever is picked).

We demonstrate this for four of the eight cases, which then shows the pattern for how the solutions work.

\begin{table}[hbt!]
\begin{center}
\begin{tabular}{||c|cccc||}
\hline
 & $v_{i1}$ & $v_{i2}$ & $v_{i3}$ & $v_{i4}$ \\
\hline
Row 3 & $>$ & $=$ & $>$ & $=$ \\
Row 4 & $>$ & $=$ & $=$ & $=$ \\
\hline
\end{tabular}
\caption{Case A: This case yields an appropriate construction when $f_{3}, f_{5}$ and $f_{7}$ are less than 1. For case A we obtain the following parameter choices: 
$\lambda_{1}=(v_{42}-v_{24})/p, \lambda_{3}=v_{44}/p,  \lambda_{4}=v_{34}/p, \lambda_{6}=(v_{32}+v_{34}-v_{12})/p, \lambda_{7}=(v_{43}+v_{44})/p.$}
\label{tab:CaseAtheorem 1}
\end{center}
\end{table}

\begin{table}[hbt!]
\begin{center}
\begin{tabular}{||c|cccc||}
\hline
 & $v_{i1}$ & $v_{i2}$ & $v_{i3}$ & $v_{i4}$ \\
\hline
Row 3 & $>$ & $=$ & $=$ & $=$ \\
Row 4 & $>$ & $=$ & $>$ & $=$ \\
\hline
\end{tabular}
\caption{Case B: This case yields an appropriate construction when $f_{4}, f_{5}$ and $f_{7}$ are less than 1. For case B we obtain the following parameter choices: 
$\lambda_{1}=(v_{42}-v_{24})/p, \lambda_{3}=v_{44}/p,  \lambda_{4}=v_{34}/p, \lambda_{6}=(v_{32}+v_{34}-v_{12})/p, \lambda_{7}=(v_{33}+v_{34})/p.$}
\label{tab:CaseBtheorem 1}
\end{center}
\end{table}

\begin{table}[hbt!]
\begin{center}
\begin{tabular}{||c|cccc||}
\hline
 & $v_{i1}$ & $v_{i2}$ & $v_{i3}$ & $v_{i4}$ \\
\hline
Row 3 & $=$ & $>$ & $=$ & $>$ \\
Row 4 & $=$ & $>$ & $=$ & $=$ \\
\hline
\end{tabular}
\caption{Case C: This case yields an appropriate construction when $f_{3}, f_{6}$ and $f_{8}$ are less than 1. For case C we obtain the following parameter choices: 
$\lambda_{1}=(v_{11}+v_{12}+v_{23}-v_{41}-v_{43}-v_{44})/p, \lambda_{3}=v_{44}/p,  \lambda_{4}=(v_{43}+v_{44}-v_{33})/p, \lambda_{6}=(v_{11}+v_{23}+v_{24}-v_{31}-v_{33})/p, \lambda_{7}=(v_{43}+v_{44})/p.$}
\label{tab:CaseCtheorem 1}
\end{center}
\end{table}

\begin{table}[hbt!]
\begin{center}
\begin{tabular}{||c|cccc||}
\hline
 & $v_{i1}$ & $v_{i2}$ & $v_{i3}$ & $v_{i4}$ \\
\hline
Row 3 & $=$ & $>$ & $=$ & $>$ \\
Row 4 & $>$ & $=$ & $=$ & $=$ \\
\hline
\end{tabular}
\caption{Case D: This case yields an appropriate construction when $f_{3}, f_{6}$ and $f_{7}$ are less than 1.For case D we obtain the following parameter choices: 
$\lambda_{1}=(v_{42}-v_{24})/p, \lambda_{3}=v_{44}/p,  \lambda_{4}=(v_{43}+v_{44}-v_{33})/p, \lambda_{6}=(v_{11}+v_{23}+v_{24}-v_{31}-v_{33})/p, \lambda_{7}=(v_{43}+v_{44})/p.$}
\label{tab:CaseDtheorem 1}
\end{center}
\end{table}

We see from Tables \ref{tab:CaseAtheorem 1}-\ref{tab:CaseDtheorem 1} how the cases relate to each other. B is a swap of the signs in rows 3 and 4 from A and this keeps the $f_{5}/f_{6}$ and $f_{7}/f_{8}$ relationships the same, but swaps the pair $f_{3}/f_{4}$. From A again, D changes the pair of inequalities in row 3, but keeps row 4 the same (note this is the row now with the single inequality). This swaps the $f_{5}/f_{6}$ relationship, but keeps the others the same. C moves the inequality element in row 4 one place back from D, keeping all else the same. This swaps the $f_{7}/f_{8}$ relationship, whilst keeping the others the same.

From Table \ref{tab:fvectorsTheorem1}, we see how much the targets $v_{ij}$ are above the corresponding $u_{ij}p$ term for each of the four cases above. 

\begin{table}[hbt!]
\begin{center}
\begin{tabular}{||c|c|c|c|c|c||}
\hline
Vector $v_{ij}$ & Local parametrization $u_{ij}$ & Case A & Case B & Case C & Case D \\
\hline
$v_{11}$ & $1-\lambda_{2}-\lambda_{4}-\lambda_{8}$ & 0 & 0 & 0 & 0 \\
$v_{12}$ & $\lambda_{2}+\lambda_{4}$ & 0 & 0 & 0 & 0 \\
$v_{13}$ & $\lambda_{8}-\lambda_{6}$ & $f_{5}-f_{1}$ & $f_{5}-f_{1}$ & $1-f_{6}$ & $1-f_{6}$ \\
$v_{14}$ & $\lambda_{6}$ & $1-f_{5}$ & $1-f_{5}$ & $f_{6}-f_{1}$ & $f_{6}-f_{1}$ \\
$v_{21}$ & $1-\lambda_{1}-\lambda_{3}-\lambda_{8}$ & $f_{7}-f_{1}$ & $f_{7}-f_{1}$ & $1-f_{8}$ & $f_{7}-f_{1}$ \\
$v_{22}$ & $\lambda_{1}+\lambda_{3}$ & $1-f_{7}$ & $1-f_{7}$ & $f_{8}-f_{1}$ & $1-f_{7}$ \\
$v_{23}$ & $\lambda_{8}-\lambda_{5}$ & 0 & 0 & 0 & 0 \\
$v_{24}$ & $\lambda_{5}$ & 0 & 0 & 0 & 0 \\
$v_{31}$ & $1-\lambda_{2}-\lambda_{6}-\lambda_{7}$ & $f_{3}-f_{1}$ & $1-f_{1}$ & 0 & 0 \\
$v_{32}$ & $\lambda_{2}+\lambda_{6}$ & 0 & 0 & $f_{3}-f_{1}$ & $f_{3}-f_{1}$ \\
$v_{33}$ & $\lambda_{7}-\lambda_{4}$ & $1-f_{3}$ & 0 & 0 & 0 \\
$v_{34}$ & $\lambda_{4}$ & 0 & 0 & $1-f_{3}$ & $1-f_{3}$ \\
$v_{41}$ & $1-\lambda_{1}-\lambda_{5}-\lambda_{7}$ & $1-f_{1}$ & $f_{4}-f_{1}$ & 0 & $1-f_{1}$ \\
$v_{42}$ & $\lambda_{1}+\lambda_{5}$ & 0 & 0 & $1-f_{1}$ & 0 \\
$v_{43}$ & $\lambda_{7}-\lambda_{3}$ & 0 & $1-f_{4}$ & 0 & 0 \\
$v_{44}$ & $\lambda_{3}$ & 0 & 0 & 0 & 0 \\
\hline
\end{tabular}
\caption{The deviations above the target, for each of the four cases A, B, C and D, for the constructions from Tables \ref{tab:CaseAtheorem 1}-\ref{tab:CaseDtheorem 1}.}
\label{tab:fvectorsTheorem1}
\end{center}
\end{table}

\vspace{0.3cm}
\noindent\textbf{\textsf{\text{C.4. {Vector measures}}}}
\vspace{0.2cm}

For Lemma 1, the construction clearly works, as the space $\mathcal{P}_{\scriptscriptstyle e}$ does not impose additional bounds due to the probability constraints and any choices of the $\lambda_{i}$ parameters are acceptable. For the spaces $\mathcal{P}_{\scriptscriptstyle NS}$ and $\mathcal{P}_{\scriptscriptstyle Loc}$ there are additional constraints and the closeness measures to these two spaces are harder to characterize as a result. Note that we have:
\begin{equation}
\mathcal{P}_{\scriptscriptstyle Loc} \subsetneq \mathcal{P}_{\scriptscriptstyle NS} \subsetneq \mathcal{P}_{\scriptscriptstyle e}.
\end{equation}
As a result, instead of eight closeness measures all of a single form (\ref{eq:eightfours}), for $\mathcal{P}_{\scriptscriptstyle NS}$ we need 120 measures (see \textbf{Theorem~\ref{theoremL}} in the main paper), which can be broken down into four distinct forms, two of which break down further into sub-classes, with the different measures within each set relating to a simple renumbering as before. For $\mathcal{P}_{\scriptscriptstyle Loc}$ we need 128 measures (see \textbf{Theorem~\ref{theoremL}} in the main paper), involving two new distinct forms as well as most of those from Theorem 3 for $\mathcal{P}_{\scriptscriptstyle NS}$. In total, we have eleven different classes, which we shall label $F, E, G, S_{1}, S_{2}, S_{3}, T_{1}, T_{2}, Z_{1}, Z_{2}, Z_{3}$. Within each class, all elements are equivalent using the symmetries (\ref{sym-1}-\ref{sym-5}) from the main paper.

We note that in the main paper, the numbering is kept as simple as possible, so for example the $f$-measures are numbered $f_{1},\ldots,f_{8}$, the $s$-measures are numbered $s_{1},\ldots,s_{64}$ etc. In the following, as we construct the different measures, sometimes it is convenient to use a different numbering. Thus, whilst the $f$-measures and $e$-measures use a single subscript identical to the main paper, the $g$-measures and $t$-measures use two subscripts and the $s$-measures and $z$-measures use three. In every case, the simple numerical ordering in the paper can be translated to the one below, by relabeling the measures in increasing numerical order in the natural way, so that $s_{111}$ is $s_{1}$, $s_{112}$ is $s_{2}$ etc.

These closeness measures are as described below: 

(i) First, $F$, the set of eight measures from (\ref{eq:eightfours}) above. The representative measure that we consider is $f_{1}$, as before.

(ii) The second set $E$ are the following eight measures listed in (\ref{eq:eighteights}): 
\begin{eqnarray}\label{eq:eighteights}
e_{1}=v_{11}+v_{14}+v_{22}+v_{23}+v_{32}+v_{33}+v_{42}+v_{43}, \nonumber \\
e_{2}=v_{12}+v_{13}+v_{21}+v_{24}+v_{31}+v_{34}+v_{41}+v_{44}, \nonumber \\
e_{3}=v_{12}+v_{13}+v_{21}+v_{24}+v_{32}+v_{33}+v_{42}+v_{43}, \nonumber \\
e_{4}=v_{11}+v_{14}+v_{22}+v_{23}+v_{31}+v_{34}+v_{41}+v_{44}, \nonumber \\
e_{5}=v_{12}+v_{13}+v_{22}+v_{23}+v_{31}+v_{34}+v_{42}+v_{43}, \nonumber \\
e_{6}=v_{11}+v_{14}+v_{21}+v_{24}+v_{32}+v_{33}+v_{41}+v_{44}, \nonumber \\
e_{7}=v_{12}+v_{13}+v_{22}+v_{23}+v_{32}+v_{33}+v_{41}+v_{44}, \nonumber \\
e_{8}=v_{11}+v_{14}+v_{21}+v_{24}+v_{31}+v_{34}+v_{42}+v_{43}.
\end{eqnarray}

(iii) Third, there is a set $G$ of 16 measures with 5 elements. We describe these as follows. Pick one of the 16 $v_{ij}$ terms. This will feature in precisely two of the $f_{i}$ measures (one from $f_{1}$-$f_{4}$, one from $f_{5}$-$f_{8}$). We shall include the original measure plus the two elements within the other row from each of these two measures (two in each). \\
For example, choose the representative vector element $v_{11}$. This is in $f_{1}$ and $f_{5}$. Recall that $f_{1}=v_{11}+v_{12}+v_{23}+v_{24}, f_{5}=v_{11}+v_{13}+v_{32}+v_{34}$. \\
We thus select $v_{11}$, $v_{23}$, $v_{24}$, $v_{32}$ and $v_{34}$. This gives us the measure: \\
$g_{11}=v_{11}+v_{23}+v_{24}+v_{32}+v_{34}$. \\
Similarly, $g_{ij}$ is the measure following the same process starting with $v_{ij}$. All measures can be achieved as a simple renumbering of $g_{11}$.

(vi) Fourth there is another set of measures with 6 elements, this time 32 of them. We describe these as follows: \\
Pick a pair of $f_{i}$ measures with no elements in common. There are eight such pairs (each of $f_{1}$ and $f_{2}$ with each of $f_{3}$ and $f_{4}$, and each of $f_{5}$ and $f_{6}$ with each of $f_{7}$ and $f_{8}$). Pick a row from the first measure and its non-partner row from the second measure (thus, we must have one of rows 1 and 4 and one of rows 2 and 3 included). The four elements thus selected are either from only two columns or from all four columns. \\
The first sub-case $T_{1}$ is when there are only two columns, where we pick one element from each remaining row, which cannot be in the same column. This gives 4x2x2=16 cases. \\
For example, choose measures $f_{1}$ and $f_{3}$, $f_{1}=v_{11}+v_{12}+v_{23}+v_{24}$ and $f_{3}=v_{31}+v_{32}+v_{43}+v_{44}$. Our four choices are then: \\
$t_{11}=v_{11}+v_{12}+v_{23}+v_{31}+v_{32}+v_{44}$, \\
$t_{12}=v_{11}+v_{12}+v_{24}+v_{31}+v_{32}+v_{43}$, \\
$t_{13}=v_{11}+v_{23}+v_{24}+v_{32}+v_{43}+v_{44}$, \\
$t_{14}=v_{12}+v_{23}+v_{24}+v_{31}+v_{43}+v_{44}$. \\
In the numbering above, the first index represents the pair of the $f_{i}$ terms involved, listed in the natural numerical order 1 through 8 (the pairs being 1,3; 1,4; 2,3; 2,4; 5,7; 5,8; 6,7; 6,8), and the second index the corresponding selected rows and final two entries, again listed in the natural order, as shown above for the $t_{1j}$ vector sums. \\
The second sub-case $T_{2}$ is when we have all four columns, where we pick one element from each remaining row, which must be in partner column pairs (i.e. columns 1 and 4, or columns 2 and 3). This again gives 4x2x2=16 cases. \\
For example, choose vector sums $f_{1}$ and $f_{4}$, $f_{1}=v_{11}+v_{12}+v_{23}+v_{24}$ and $f_{4}=v_{33}+v_{33}+v_{41}+v_{42}$. Our four choices are then: \\
$t_{21}=v_{11}+v_{12}+v_{23}+v_{33}+v_{34}+v_{42}$, \\
$t_{22}=v_{11}+v_{12}+v_{24}+v_{33}+v_{34}+v_{41}$, \\
$t_{23}=v_{11}+v_{23}+v_{24}+v_{34}+v_{41}+v_{42}$, \\
$t_{24}=v_{12}+v_{23}+v_{24}+v_{33}+v_{41}+v_{42}$. \\
Each measure of this form is again a simple renumbering of $t_{11}$ or $t_{21}$.

(v) Fifth, there is a set of 64 measures with six elements, which can be described as follows: \\
Pick one of the 16 $v_{ij}$ terms. As described above, this will feature in precisely two of the $f_{i}$ measures. We shall include the original term plus the two elements within the same row from each of these two measures (one in each). Then pick one of the remaining elements from each of the other rows from the two measures (two times two possibilities). Now select the element from the unused row that makes the total number of elements in columns 1 and 4 equal, and the total number in columns 2 and 3 equal. There is one such possibility. This gives 16x4x1=64 measures. \\
For example, choose $v_{11}$, which is in $f_{1}$ and $f_{5}$. Since $f_{1}=v_{11}+v_{12}+v_{23}+v_{24}, f_{5}=v_{11}+v_{13}+v_{32}+v_{34}$, we thus select $v_{11}$, $v_{12}$ and $v_{13}$ from row 1. This gives four combinations from the other two included rows: \\
$v_{23}, v_{32}$ and to equalize columns 1 and 4 from the last row we need $v_{44}$, \\
$v_{23}, v_{34}$ and we need $v_{42}$, \\
$v_{24}, v_{32}$ and we need $v_{43}$, \\
$v_{24}, v_{34}$ and we need $v_{41}$. \\
We shall denote the above measures as $s_{111}$, $s_{112}$, $s_{113}$ and $s_{114}$ respectively (and in general the first two indices will represent the indices of the initially selected element). For example $s_{111}=v_{11}+v_{12}+v_{13}+v_{23}+v_{32}+v_{44}$. \\
Out of each set of four above, one is such that neither of the selected first two elements are in the unused column, two are such that exactly one is in the unused column, and one is such that both are in the unused column. These give us the three distinct cases, which we denote by $S_{1}, S_{2}$ and $S_{3}$ respectively; thus these three sub-cases have 16, 32 and 16 elements respectively. The 64 measures can be obtained by a simple renumbering of $s_{111}, s_{112}$ or $s_{114}$.

(vi) Finally, there is a set of 64 measures with six elements, closely related to the fifth set above. Here for each vector, there is simply a coefficient 2 in the leading original $v_{ij}$ term. Thus in all cases $z_{ijk}=s_{ijk}+v_{ij}$.

As for the $f_{i}$ measures, there are some relationships between the other measures, for similar reasons. While we will not list all here, it is worth noting that:
\begin{equation}
e_{1}+e_{2}=e_{3}+e_{4}=e_{5}+e_{6}=e_{7}+e_{8}=4.
\end{equation}

\vspace{0.3cm}
\noindent\textbf{\textsf{\text{C.5. {Finding the non-signaling fraction}}}}
\vspace{0.2cm}

As well as \textbf{Lemma~\ref{Lemma-muNS-f}}, the main paper contains three Theorems. \textbf{Theorem~\ref{Theorem-muL-e}} is a result from earlier work, and so our main two results are \textbf{Theorem~\ref{theoremL}} from \textbf{Theorem~\ref{theoremNS}}. Whilst from the narrative in the main paper it makes sense to introduce them in this order, for the sections below in terms of construction of the solutions, it makes sense to address them in the reverse order. Thus, here, we start with the non-signaling space and \textbf{Theorem~\ref{T3}}.

The non-signaling space is characterized by the equality constraints (\ref{eq:nonsignaling}) as well as the constraints related to the laws of probability (\ref{eq:ugreat}). This then requires seven of our 11 classes, $F, G, T_{1}, T_{2}, Z_{1}, Z_{2}, Z_{3}$.
\setcounter{theorem}{2}
\begin{theorem}\label{T3}\ \\
$p({\bf v}, \mathcal{P}_{\scriptscriptstyle NS})$ $= \min \{f_{1}, \ldots , f_{8}, g_{1}, \ldots g_{16}, t_{1}, \ldots t_{32}, z_{1}, \ldots, z_{64} \}$.
\end{theorem}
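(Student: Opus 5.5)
We will prove Theorem~\ref{T3} in the analytical framework of Part~C, mirroring the two-step structure of Lemma~1b. First we show the upper bound $p({\bf v},\mathcal{P}_{\scriptscriptstyle NS})\le m:=\min\{f,g,t,z\}$. Then we show the matching lower bound by constructing, for whichever measure attains the minimum, an explicit ${\bf u}\in\mathcal{P}_{\scriptscriptstyle NS}$ with $p({\bf v},{\bf u})=m$. By the symmetries (\ref{sym-1})--(\ref{sym-5}), which preserve $\mathcal{P}_{\scriptscriptstyle NS}$ and permute each class, it suffices to treat the representatives $f_1$, $g_{11}$, $t_{11}$, $t_{21}$, $z_{111}$, $z_{112}$ and $z_{114}$.

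\textbf{Upper bound.} Fix any ${\bf u}\in\mathcal{P}_{\scriptscriptstyle NS}$ in the parametrization (\ref{eq:parametrization}) and set $p=p({\bf v},{\bf u})$, so that $p\,u_{ij}\le v_{ij}$ for all $i,j$.
\begin{itemize}
\item For $f_1$ the argument of Lemma~1b applies verbatim: the four corresponding $u_{ij}$ sum to $1$.
\item For each remaining representative we find a nonnegative combination of the entries $u_{ij}$ appearing in the measure (with the weights $1$ or $2$ of the measure) that is $\ge 1$ identically on $\mathcal{P}_{\scriptscriptstyle NS}$. This must now use $u_{ij}\ge 0$, and not just the $\lambda$-identities. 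For example,
\begin{equation*}
u_{11}+u_{23}+u_{24}+u_{32}+u_{34}=1+\lambda_6+\lambda_8-\lambda_4+u_{34}\ \text{(schematically)}.
\end{equation*}
We expand each measure in the $\lambda$'s, subtract $1$, and exhibit the remainder as a sum of entries $u_{kl}\ge 0$ that are not in the measure. This is exactly the dual certificate $N^{T}\vec q\ge\vec 1$, checked on the 24 vertices of $\mathcal{P}_{\scriptscriptstyle NS}$.
\item For the $z$-vectors the doubled entry is needed because the PR boxes have entries $\tfrac12$.
\item Summing $p\,u_{ij}\le v_{ij}$ with these weights gives $p\le$ (measure). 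Maximizing over ${\bf u}$ yields $p({\bf v},\mathcal{P}_{\scriptscriptstyle NS})\le m$.
\end{itemize}

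\textbf{Lower bound.} Suppose the minimum is attained by a representative $q$ with value $p=q\cdot{\bf v}$. Following Tables~\ref{tab:CaseAtheorem 1}--\ref{tab:fvectorsTheorem1}, we set $p\,u_{ij}=v_{ij}$ on the support of $q$ (for a coefficient $2$, on the tight combination). This fixes some of the $\lambda$'s. We then choose the remaining $\lambda$'s by a case split on which member of each pair $(f_{2k-1},f_{2k})$ lies below $1$, making $p\,u_{ij}=v_{ij}$ on a further pattern of entries. For each case we verify two things:
\begin{itemize}
\item every slack $v_{ij}-p\,u_{ij}$ equals a difference (other measure)$-q\cdot{\bf v}\ge 0$, or $1-$(something)$\ge 0$, as in Table~\ref{tab:fvectorsTheorem1};
\item every $u_{ij}\ge 0$.
\end{itemize}
The second condition is new compared with Lemma~1b. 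The minimality of $q$ against \emph{all} 120 measures is exactly what supplies nonnegativity: each slack or each $u_{ij}$ is rewritten as a difference of two measures, or of a measure and $p$, divided by $p$.

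\textbf{Main obstacle.} The hard part is organizing this case analysis. Unlike $\mathcal{P}_{\scriptscriptstyle e}$, a single pattern of tight entries need not keep every $u_{ij}\ge0$, and the paper itself warns of ``occasional corrections''. Our first approach is guided by complementary slackness: the tight pattern should be a vertex of the primal polytope $\{N\vec p\le{\bf v}\}$ adjacent to the dual vertex $q$. For each representative we would enumerate the candidate tight sets, using the symmetry stabilizer of $q$ to cut down cases. If some case still leaves a negative $u_{ij}$, we conclude that a different measure must be smaller, contradicting minimality, and use this to switch to that measure's construction.

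As a safety net, completeness can be certified independently by the LP route of B.3. By strong duality, $p({\bf v},\mathcal{P}_{\scriptscriptstyle NS})$ is the minimum of $\vec q\cdot{\bf v}$ over the vertices of $\{N^T\vec q\ge\vec1,\vec q\ge0\}$. It therefore suffices that the listed 120 vectors, together with the four row-sum vectors, are exactly those vertices, which can be confirmed by PORTA.
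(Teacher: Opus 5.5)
This is essentially the paper's own route, and your proposal is complete in the same sense the paper's is. Your upper bound writes each measure evaluated on ${\bf u}$ as $1$ plus a sum of entries $u_{kl}\geqslant 0$, which is exactly Part C(a); your case-split construction for the lower bound mirrors Part C(b), which the paper itself concedes is not a full proof because of the ``nudging''; and the rigorous completion is the strong-duality plus PORTA vertex enumeration of B.3, which you invoke as your safety net.

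Two minor corrections. First, the $g_{11}$ identity is simply $u_{11}+u_{23}+u_{24}+u_{32}+u_{34}=1+\lambda_6=1+u_{14}$. Second, a negative $u_{ij}$ arising from one choice of the additional tight entries does not by itself contradict minimality of $q$, since another complementary pattern may succeed; so the lower bound genuinely rests on the B.3 computation, just as in the paper.
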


 Below we give a construction of finding the point in the non-signaling space that achieves the appropriate closeness, itemized by the different measures corresponding to where this could be achieved. 

a) First, we give a way to show that $p({\bf v}, \mathcal{P}_{\scriptscriptstyle NS})$ is bounded above by all 120 measures, which allows structural insight into these solutions, not available in the proof in main text. In each case, we demonstrate the proof just for a single case, but these generalize through the simple renumbering mentioned earlier.

(i) $p({\bf v}, \mathcal{P}_{\scriptscriptstyle NS})$ $\leq f_{i}$. This is trivially true, since $p({\bf v}, \mathcal{P}_{\scriptscriptstyle e}) \leq f_{i}$ and $\mathcal{P}_{\scriptscriptstyle NS}$ is a subspace of $\mathcal{P}_{\scriptscriptstyle e}$.

(ii) $p({\bf v}, \mathcal{P}_{\scriptscriptstyle NS})$ $\leq g_{ij}$. We do the calculation for $g_{11}=v_{11}+v_{23}+v_{24}+v_{32}+v_{34}$. \\
We know from (\ref{eq:plessvoveru}) that for the closeness $p$ to any element of $\mathcal{P}_{\scriptscriptstyle NS}$, we have:
\begin{eqnarray}\label{eq:thefivetlambda}
1-\lambda_{2}-\lambda_{4}-\lambda_{8} \leq \frac{v_{11}}{p}, \lambda_{8}-\lambda_{5} \leq \frac{v_{23}}{p}, 
\lambda_{5} \leq \frac{v_{24}}{p}, \nonumber \\
\lambda_{2}+\lambda_{6} \leq \frac{v_{32}}{p}, \lambda_{4} \leq \frac{v_{34}}{p}.
\end{eqnarray}
Summing all the terms on both sides of the inequalities gives:
\begin{equation}\label{eq:lambdaping}
1+\lambda_{6}\leq \frac{g_{11}}{p}.
\end{equation}
We know that $\lambda_{6}=u_{14}$ and so is non-negative. Thus 
\begin{equation}
1 \leq \frac{g_{11}}{p} \Rightarrow p \leq g_{11}.
\end{equation}
Calculations from the remaining 15 terms are similar, with the equivalent left hand side to (\ref{eq:lambdaping}) being $1+u_{ik}$ for all of the other different 15 $u_{ik}$ terms in turn. We note that the $u_{ik}$ term has a different second index to $g_{ij}$. It is the partner term in the same row, so here we have $g_{11}$ and $u_{14}$.

(iii) $p({\bf v}, \mathcal{P}_{\scriptscriptstyle NS})$ $\leq t_{ij}$. For $T_{1}$ we do the calculation for $t_{11}=v_{11}+v_{12}+v_{23}+v_{31}+v_{32}+v_{44}$. We know from (\ref{eq:plessvoveru}) that:
\begin{eqnarray}\label{eq:thesixtlambda}
1-\lambda_{2}-\lambda_{4}-\lambda_{8} \leq \frac{v_{11}}{p}, \lambda_{2}+\lambda_{4} \leq \frac{v_{12}}{p}, 
\lambda_{8}-\lambda_{5} \leq \frac{v_{23}}{p}, \nonumber \\
1-\lambda_{2}-\lambda_{6}-\lambda_{7} \leq \frac{v_{31}}{p}, \lambda_{2}+\lambda_{6} \leq \frac{v_{32}}{p},  \lambda_{3} \leq \frac{v_{44}}{p}.
\end{eqnarray}
Summing all the terms on both sides of the inequalities gives:
\begin{equation}\label{eq:lambdapineqt}
1+(1-\lambda_{5}-\lambda_{7}+\lambda_{3}) \leq \frac{t_{11}}{p}.
\end{equation}
We know that $1-\lambda_{1}-\lambda_{5}-\lambda_{7}=u_{41}$ and $\lambda_{1}+\lambda_{3}=u_{22}$ so the sum of these, which is the bracketed term in the left hand side of (\ref{eq:lambdapineqt2}), is non-negative. Thus 
\begin{equation}
1 \leq \frac{t_{11}}{p} \Rightarrow p \leq t_{11}.
\end{equation}
Calculations for the remaining 15 terms are similar, with the equivalent left hand side to (\ref{eq:lambdapineqt}) being 1 plus the sum of two $u_{ik}$ terms. These are again the partner terms of the lone elements (this time plural) in the same row. Thus, for the above we had $v_{23}$ and $v_{44}$ leading to $u_{22}$ and $u_{41}$. \\
For $T_{2}$ we do the calculation for $t_{21}=v_{11}+v_{12}+v_{23}+v_{33}+v_{34}+v_{42}$. We now get
\begin{eqnarray}\label{eq:thesixt2lambda}
1-\lambda_{2}-\lambda_{4}-\lambda_{8} \leq \frac{v_{11}}{p}, \lambda_{2}+\lambda_{4} \leq \frac{v_{12}}{p}, 
\lambda_{8}-\lambda_{5} \leq \frac{v_{23}}{p}, \nonumber \\
\lambda_{7} -\lambda_{4} \leq \frac{v_{33}}{p}, \lambda_{4} \leq \frac{v_{34}}{p},  \lambda_{1}+\lambda_{5} \leq \frac{v_{42}}{p}.
\end{eqnarray}
Summing all the terms on both sides of the inequalities gives:
\begin{equation}\label{eq:lambdapineqt2}
1+(\lambda_{1}+\lambda_{7}) \leq \frac{t_{21}}{p}.
\end{equation}
We know that $\lambda_{1}+\lambda_{3}=u_{22}$ and $\lambda_{7}-\lambda_{3}=u_{43}$ so the sum of these, which is the bracketed term in the left hand side of (\ref{eq:lambdapineqt2}), is non-negative. Thus 
\begin{equation}
1 \leq \frac{t_{21}}{p} \Rightarrow p \leq t_{21}.
\end{equation}

(iv) $p({\bf v}, \mathcal{P}_{\scriptscriptstyle NS})$ $\leq z_{ijk}$. For case $Z_{1}$ we do the calculation for $z_{111}=2v_{11}+v_{12}+v_{13}+v_{23}+v_{32}+v_{44}$.
We know from (\ref{eq:plessvoveru}) that:
\begin{eqnarray}\label{eq:thesixnewslambda}
1-\lambda_{2}-\lambda_{4}-\lambda_{8} \leq \frac{v_{11}}{p}, \lambda_{2}+\lambda_{4} \leq \frac{v_{12}}{p}, 
\lambda_{8}-\lambda_{6} \leq \frac{v_{13}}{p}, \nonumber \\
\lambda_{8}-\lambda_{5} \leq \frac{v_{23}}{p}, \lambda_{2}+\lambda_{6} \leq \frac{v_{32}}{p},  \lambda_{3} \leq \frac{v_{44}}{p}.
\end{eqnarray}
Summing all the terms on both sides of the inequalities gives:
\begin{equation}\label{eq:lambdapineqnews}
1+(1-\lambda_{4}-\lambda_{5}+\lambda_{3}) \leq \frac{z_{111}}{p}.
\end{equation}
We know from (\ref{eq:ugreat}) that the bracketed term in (\ref{eq:lambdapineqnews}) is the sum of three non-negative terms $(1-\lambda_{1}-\lambda_{5}-\lambda_{7})+(\lambda_{1}+\lambda_{3})+(\lambda_{7}-\lambda_{4})=u_{41}+u_{22}+u_{33}$. Thus 
\begin{equation}
1 \leq \frac{z_{111}}{p} \Rightarrow p \leq z_{111}.
\end{equation}
Calculations for the remaining 15 terms are similar, with non-negative bracketed terms in each case. \\
For case $Z_{2}$ we do the calculation for $z_{112}=2v_{11}+v_{12}+v_{13}+v_{23}+v_{34}+v_{42}$.
We know from (\ref{eq:plessvoveru}) that:
\begin{eqnarray}\label{eq:thesixnewslambda2}
1-\lambda_{2}-\lambda_{4}-\lambda_{8} \leq \frac{v_{11}}{p}, \lambda_{2}+\lambda_{4} \leq \frac{v_{12}}{p}, 
\lambda_{8}-\lambda_{6} \leq \frac{v_{13}}{p}, \nonumber \\
\lambda_{8}-\lambda_{5} \leq \frac{v_{23}}{p}, \lambda_{4} \leq \frac{v_{34}}{p},  \lambda_{1}+\lambda_{5} \leq \frac{v_{42}}{p}.
\end{eqnarray}
Summing all the terms on both sides of the inequalities gives:
\begin{equation}\label{eq:lambdapineqnews2}
1+(1-\lambda_{2}+\lambda_{1}-\lambda_{6}) \leq \frac{z_{112}}{p}.
\end{equation}
We know from (\ref{eq:ugreat}) that the bracketed term in (\ref{eq:lambdapineqnews2}) is the sum of three non-negative terms $(\lambda_{7}-\lambda_{3})+(\lambda_{1}+\lambda_{3})+(1-\lambda_{2}-\lambda_{6}-\lambda_{7})=u_{43}+u_{22}+u_{31}$. Thus 
\begin{equation}
1 \leq \frac{z_{112}}{p} \Rightarrow p \leq z_{112}.
\end{equation}
Calculations for the remaining 31 terms are similar. \\
For case $Z_{3}$ we do the calculation for $z_{114}=2v_{11}+v_{12}+v_{13}+v_{24}+v_{34}+v_{41}$.
We know from (\ref{eq:plessvoveru}) that:
\begin{eqnarray}\label{eq:thesixnewslambda3}
1-\lambda_{2}-\lambda_{4}-\lambda_{8} \leq \frac{v_{11}}{p}, \lambda_{2}+\lambda_{4} \leq \frac{v_{12}}{p}, 
\lambda_{8}-\lambda_{6} \leq \frac{v_{13}}{p}, \nonumber \\
\lambda_{5} \leq \frac{v_{24}}{p}, \lambda_{4} \leq \frac{v_{34}}{p}, 1-\lambda_{1}-\lambda_{5}-\lambda_{7} \leq \frac{v_{41}}{p}.
\end{eqnarray}
Summing all the terms on both sides of the inequalities gives:
\begin{equation}\label{eq:lambdapineqnews3}
1+(2-\lambda_{1}-\lambda_{2}-\lambda_{6}-\lambda_{7}-\lambda_{8}) \leq \frac{z_{114}}{p}.
\end{equation}
We know from (\ref{eq:ugreat}) that the bracketed term in (\ref{eq:lambdapineqnews3}) is the sum of three non-negative terms $\lambda_{3}+(1-\lambda_{1}-\lambda_{3}-\lambda_{8})+(1-\lambda_{2}-\lambda_{6}-\lambda_{7})=u_{44}+u_{21}+u_{31}$. Thus 
\begin{equation}
1 \leq \frac{z_{114}}{p} \Rightarrow p \leq z_{114}.
\end{equation}
Calculations for the remaining 15 terms are similar.

b) 
We now show how to achieve the bound when a closeness measure from each of the five classes is the minimum. We note that this is not a full proof, because of the ``solution nudging'' that is sometimes required (this is explained below at appropriate places).

(i) We start with the $f_{i}$ measures. There are eight measures, and the construction from Lemma 1 involved eight cases for each. We show the process for $f_{1}$ and the case when $f_{3}, f_{5}$ and $f_{7}$ are less than 1. Thus, following the proof from Lemma 1, we consider the local vector from Case A. We need to show that when $f_{1}$ is the smallest of all 120 closeness measures, so that $p=f_{1}$, (and also that $f_{3}, f_{5}, f_{7}$ are less than 1), we have that the solution specified is in the appropriate space. In the above, we concentrated on meeting the conditions. In fact, some of the conditions can be met by a sufficient margin that the deficit is larger than the original target, implying that the corresponding $u_{ij}$ term is negative. This was not a problem for the space $\mathcal{P}_{\scriptscriptstyle e}$ where such terms were allowed, but cannot occur here. 

We then need to ``nudge'' the solution back into the space. The process is similar for each case. Here, the potential elements that we need to address are those with a non-zero entry in the Case A column of Table \ref{tab:fvectorsTheorem1}. How to shift the $\lambda_{i}$ values is clear from observing the entries corresponding to zeros in this column and so which cannot be increased. Starting from the bottom of the column, $\lambda_{3}$ can only be decreased and $\lambda_{7}$ can also only be decreased, and by at least as much as $\lambda_{3}$. $\lambda_{1}+\lambda_{5}$ can only be decreased. $\lambda_{5}$ is fixed from the original set up to hit the target, and so $\lambda_{1}$ can only be decreased. $\lambda_{4}$ can only be decreased. $\lambda_{2}+\lambda_{6}$ can only be decreased. From the original requirement  to hit the target, we know that $\lambda_{2}+\lambda_{4}$ is fixed, so as $\lambda_{4}$ can only be decreased, $\lambda_{2}$ can only be increased, so in turn $\lambda_{6}$ can only be decreased. We know that $\lambda_{8}$ is fixed due to the original set up. Thus, we have various shifts in direction, which we can make to nudge our solutions back into place.

When is this possible? As we can only decrease $\lambda_{6}$, this means that we must have $v_{14} \geq 1-f_{5}$, as $u_{14}$ must already be non-negative (we cannot increase it). $v_{14}-(1-f_{5})=v_{14}+v_{11}+v_{13}+v_{32}+v_{34}-1=v_{32}+v_{34}-v_{12}=g_{11}-f_{1} \geq 0$, so this holds. We can then reduce $\lambda_{6} p$ by up to $g_{11}-f_{1}$, which ensures $u_{13}$ is also non-negative. A similar argument holds for $\lambda_{1}+\lambda_{3}$ (their summation can be reduced by up to 
$v_{42}+v_{44}-v_{24}=g_{23}-f_{1}$, which then ensures both $u_{21}$ and $u_{22}$ are positive. The remaining three free parameters can then be adjusted suitably to move any of the other three cases into the correct space.

(ii) Now consider the $z_{ijk}$ measures, starting with $Z_{1}$ and choosing $z_{111}$. For this measure, we need the conditions from (\ref{eq:thesixnewslambda}) satisfied with equality. 
Since here $p=z_{111}$ there are eight distinct equations with eight variables. This yields a unique set of $\lambda_{i}$ values, as shown in Table \ref{tab:newsvectorslambda}. 

\begin{table}[hbt!]
\begin{center}
\begin{tabular}{||c|c||}
\hline
$\lambda_{1}$ & $-v_{44}/p$ \\
$\lambda_{2}$ & $-(v_{11}+v_{23}+v_{44})/p$ \\
$\lambda_{3}$ & $v_{44}/p$ \\
$\lambda_{4}$ & $(v_{11}+v_{12}+v_{23}+v_{44})/p$ \\
$\lambda_{5}$ & $(v_{11}+v_{13}+v_{32}+v_{44})/p$ \\
$\lambda_{6}$ & $(v_{11}+v_{23}+v_{32}+v_{44})/p$\\
$\lambda_{7}$ & $(v_{11}+v_{12}+v_{23}+v_{44})/p$ \\
$\lambda_{8}$ & $(v_{11}+v_{13}+v_{23}+v_{32}+v_{44})/p$ \\
\hline
\end{tabular}
\caption{The values of $\lambda_{i}$ when the minimum occurs at $z_{111}$.}
\label{tab:newsvectorslambda}
\end{center}
\end{table}

\begin{table}[hbt!]
\begin{center}
\begin{tabular}{||c|c|c|c||}
\hline
Vector & Local param. $u_{ij}$ & Target deviation & Comparison \\
\hline
$v_{11}$ & $1-\lambda_{2}-\lambda_{4}-\lambda_{8}$ & 0 & 0 \\
$v_{12}$ & $\lambda_{2}+\lambda_{4}$ & 0 & 0 \\
$v_{13}$ & $\lambda_{8}-\lambda_{6}$ & 0 & 0 \\
$v_{14}$ & $\lambda_{6}$ & $v_{14}-v_{11}-v_{23}-v_{32}-v_{44}$ & $1-z_{111} \geq 0$ \\
$v_{21}$ & $1-\lambda_{1}-\lambda_{3}-\lambda_{8}$ & $v_{21}-v_{11}-v_{12}$ & $t_{51}-z_{111} \geq 0$ \\
$v_{22}$ & $\lambda_{1}+\lambda_{3}$ & $v_{22}$  & $ v_{22} \geq 0$ \\
$v_{23}$ & $\lambda_{8}-\lambda_{5}$ & 0 & 0 \\
$v_{24}$ & $\lambda_{5}$ & $v_{24}-v_{11}-v_{13}-v_{32}-v_{44}$ & $f_{1}-z_{111} \geq 0$ \\
$v_{31}$ & $1-\lambda_{2}-\lambda_{6}-\lambda_{7}$ & $v_{31}-v_{11}-v_{13}$ & $t_{11}-z_{111} \geq 0$ \\
$v_{32}$ & $\lambda_{2}+\lambda_{6}$ & 0 & 0 \\
$v_{33}$ & $\lambda_{7}-\lambda_{4}$ & $v_{33}$ & $v_{33} \geq 0$ \\
$v_{34}$ & $\lambda_{4}$ & $v_{34}-v_{11}-v_{12}-v_{23}-v_{44}$ & $f_{5}-z_{111} \geq 0$ \\
$v_{41}$ & $1-\lambda_{1}-\lambda_{5}-\lambda_{7}$ & $v_{41}$ & $v_{41} \geq 0$ \\
$v_{42}$ & $\lambda_{1}+\lambda_{5}$ & $v_{42}-v_{11}-v_{13}-v_{32}$ & $g_{23}-z_{111} \geq 0$ \\
$v_{43}$ & $\lambda_{7}-\lambda_{3}$ & $v_{43}-v_{11}-v_{12}-v_{23}$ & $g_{32}-z_{111} \geq 0$ \\
$v_{44}$ & $\lambda_{3}$ & 0 & 0 \\
\hline
\end{tabular}
\caption{Demonstration of the conditions for a 6 element closeness measure from the first class, here $z_{111}$.The $\lambda_{i}$ values come from Table \ref{tab:newsvectorslambda}.}
\label{tab:newsvectors}
\end{center}
\end{table}

In Table \ref{tab:newsvectors} the measures for comparison with the benchmark $z_{111}$ (not including those that are zero as well as those that achieve the target) are as follows: \\
$1$ \\
$v_{11}+v_{13}+v_{21}+v_{23}+v_{32}+v_{44}$ \\
$v_{11}+v_{12}+v_{23}+v_{24}$ \\
$v_{11}+v_{12}+v_{23}+v_{31}+v_{32}+v_{44}$ \\
$v_{11}+v_{13}+v_{32}+v_{34}$ \\
$v_{11}+v_{12}+v_{23}+v_{42}+v_{44}$ \\
$v_{11}+v_{13}+v_{32}+v_{43}+v_{44}$

Now for $Z_{2}$ we choose $z_{112}$. For this measure, we need the conditions from (\ref{eq:thesixnewslambda2}) satisfied with equality. 
Since here $p=z_{112}$ there are eight distinct equations with eight variables. As before, this yields a unique set of $\lambda_{i}$ values, as shown in Table \ref{tab:newsvectorslambda2}. 

\begin{table}[hbt!]
\begin{center}
\begin{tabular}{||c|c||}
\hline
$\lambda_{1}$ & $-(v_{11}+v_{13}+v_{34})/p$ \\
$\lambda_{2}$ & $(v_{12}-v_{34})/p$ \\
$\lambda_{3}$ & $(v_{11}+v_{13}+v_{34})/p$ \\
$\lambda_{4}$ & $v_{34}/p$ \\
$\lambda_{5}$ & $(v_{11}+v_{13}+v_{34}+v_{42})/p$ \\
$\lambda_{6}$ & $(v_{11}+v_{23}+v_{34}+v_{42})/p$\\
$\lambda_{7}$ & $(v_{11}+v_{13}+v_{34})/p$ \\
$\lambda_{8}$ & $(v_{11}+v_{13}+v_{23}+v_{34}+v_{42})/p$ \\
\hline
\end{tabular}
\caption{The values of $\lambda_{i}$ when the minimum occurs at $z_{112}$.}
\label{tab:newsvectorslambda2}
\end{center}
\end{table}

\begin{table}[hbt!]
\begin{center}
\begin{tabular}{||c|c|c|c||}
\hline
Vector & Local param. $u_{ij}$ & Target deviation & Comparison \\
\hline
$v_{11}$ & $1-\lambda_{2}-\lambda_{4}-\lambda_{8}$ & 0 & 0 \\
$v_{12}$ & $\lambda_{2}+\lambda_{4}$ & 0 & 0 \\
$v_{13}$ & $\lambda_{8}-\lambda_{6}$ & 0 & 0 \\
$v_{14}$ & $\lambda_{6}$ & $v_{14}-v_{11}-v_{23}-v_{34}-v_{42}$ & $1-z_{112} \geq 0$ \\
$v_{21}$ & $1-\lambda_{1}-\lambda_{3}-\lambda_{8}$ & $v_{21}-v_{11}-v_{12}$ & $t_{52}-z_{112} \geq 0$ \\
$v_{22}$ & $\lambda_{1}+\lambda_{3}$ & $v_{22}$  & $v_{22} \geq 0$ \\
$v_{23}$ & $\lambda_{8}-\lambda_{5}$ & 0 & 0 \\
$v_{24}$ & $\lambda_{5}$ & $v_{24}-v_{11}-v_{13}-v_{34}-v_{42}$ & $f_{1}-z_{112} \geq 0$ \\
$v_{31}$ & $1-\lambda_{2}-\lambda_{6}-\lambda_{7}$ & $v_{31}$ & $v_{31} \geq 0$ \\
$v_{32}$ & $\lambda_{2}+\lambda_{6}$ & $v_{32}-v_{11}-v_{12}-v_{23}-v_{42}$ & $f_{5}-z_{112} \geq 0$ \\
$v_{33}$ & $\lambda_{7}-\lambda_{4}$ & $v_{33}-v_{11}-v_{13}$ & $t_{21}-z_{112} \geq 0$ \\
$v_{34}$ & $\lambda_{4}$ & 0 & 0 \\
$v_{41}$ & $1-\lambda_{1}-\lambda_{5}-\lambda_{7}$ & $v_{41}-v_{11}-v_{12}-v_{23}$ & $g_{34}-z_{112} \geq 0$ \\
$v_{42}$ & $\lambda_{1}+\lambda_{5}$ & 0 & 0 \\
$v_{43}$ & $\lambda_{7}-\lambda_{3}$ & $v_{43}$ & $v_{43} \geq 0$ \\
$v_{44}$ & $\lambda_{3}$ & $v_{44}-v_{11}-v_{13}-v_{34}$ & $g_{23}-z_{112} \geq 0$ \\
\hline
\end{tabular}
\caption{Demonstration of the conditions for a 6 element closeness measure from the second class, here $z_{112}$.The $\lambda_{i}$ values come from Table \ref{tab:newsvectorslambda2}.}
\label{tab:newsvectors2}
\end{center}
\end{table}

In Table \ref{tab:newsvectors2} the measures for comparison with the benchmark $z_{112}$ (not including those that are zero as well as those that achieve the target) are as follows: \\
$1$ \\
$v_{11}+v_{13}+v_{21}+v_{23}+v_{34}+v_{42}$ \\
$v_{11}+v_{12}+v_{23}+v_{24}$ \\
$v_{11}+v_{13}+v_{32}+v_{34}$ \\
$v_{11}+v_{12}+v_{23}+v_{33}+v_{34}+v_{42}$ \\
$v_{11}+v_{13}+v_{34}+v_{41}+v_{42}$ \\
$v_{11}+v_{12}+v_{23}+v_{42}+v_{44}$

Finally, for $Z_{3}$ we choose $z_{114}$. For this measure, we need the conditions from (\ref{eq:thesixnewslambda3}) satisfied with equality. Since here $p=z_{114}$ there are eight distinct equations with eight variables. As before, this yields a unique set of $\lambda_{i}$ values, as shown in Table \ref{tab:newsvectorslambda3}.

\begin{table}[hbt!]
\begin{center}
\begin{tabular}{||c|c||}
\hline
$\lambda_{1}$ & $(v_{11}+v_{12})/p$ \\
$\lambda_{2}$ & $(v_{12}-v_{34})/p$ \\
$\lambda_{3}$ & 0 \\
$\lambda_{4}$ & $v_{34}/p$ \\
$\lambda_{5}$ & $v_{24}/p$ \\
$\lambda_{6}$ & $(v_{11}+v_{24}+v_{34}+v_{41})/p$\\
$\lambda_{7}$ & $(v_{11}+v_{13}+v_{34})/p$ \\
$\lambda_{8}$ & $(v_{11}+v_{13}+v_{24}+v_{34}+v_{41})/p$ \\
\hline
\end{tabular}
\caption{The values of $\lambda_{i}$ when the minimum occurs at $z_{114}$.}
\label{tab:newsvectorslambda3}
\end{center}
\end{table}

\begin{table}[hbt!]
\begin{center}
\begin{tabular}{||c|c|c|c||}
\hline
Vect. & Local param. $u_{ij}$ & Target deviation & Comparison \\
\hline
$v_{11}$ & $1-\lambda_{2}-\lambda_{4}-\lambda_{8}$ & 0 & 0 \\
$v_{12}$ & $\lambda_{2}+\lambda_{4}$ & 0 & 0 \\
$v_{13}$ & $\lambda_{8}-\lambda_{6}$ & 0 & 0 \\
$v_{14}$ & $\lambda_{6}$ & $v_{14}-v_{11}-v_{24}-v_{34}-v_{41}$ & $1-z_{114} \geq 0$ \\
$v_{21}$ & $1-\lambda_{1}-\lambda_{3}-\lambda_{8}$ & $v_{21}$ & $v_{21} \geq 0$ \\
$v_{22}$ & $\lambda_{1}+\lambda_{3}$ & $v_{22}-v_{11}-v_{12}$  & $t_{62}-z_{114} \geq 0$ \\
$v_{23}$ & $\lambda_{8}-\lambda_{5}$ & $v_{23}-v_{11}-v_{13}-v_{34}-v_{41}$ & $f_{1}-z_{114} \geq 0$ \\
$v_{24}$ & $\lambda_{5}$ & 0 & 0 \\
$v_{31}$ & $1-\lambda_{2}-\lambda_{6}-\lambda_{7}$ & $v_{31}$  & $v_{31} \geq 0$ \\
$v_{32}$ & $\lambda_{2}+\lambda_{6}$ & $v_{32}-v_{11}-v_{12}-v_{24}-v_{41}$ & $f_{5}-z_{114} \geq 0$ \\
$v_{33}$ & $\lambda_{7}-\lambda_{4}$ & $v_{33}-v_{11}-v_{13}$ & $t_{22}-z_{114} \geq 0$ \\
$v_{34}$ & $\lambda_{4}$ & 0 & 0 \\
$v_{41}$ & $1-\lambda_{1}-\lambda_{5}-\lambda_{7}$ & 0 & 0 \\
$v_{42}$ & $\lambda_{1}+\lambda_{5}$ & $v_{42}-v_{11}-v_{12}-v_{24}$ & $g_{34}-z_{114} \geq 0$ \\
$v_{43}$ & $\lambda_{7}-\lambda_{3}$ & $v_{43}-v_{11}-v_{13}-v_{34}$ & $g_{23}-z_{114} \geq 0$ \\
$v_{44}$ & $\lambda_{3}$ & $v_{44}$ & $v_{44} \geq 0$ \\
\hline
\end{tabular}
\caption{Demonstration of the conditions for a 6 element closeness measure from the third class, here $z_{114}$. The $\lambda_{i}$ values come from Table \ref{tab:newsvectorslambda3}.}
\label{tab:newsvectors3}
\end{center}
\end{table}

In Table \ref{tab:newsvectors3} the measures for comparison with the benchmark $z_{114}$ (not including those that are zero, as well as those that achieve the target) are as follows: \\
$1$ \\
$v_{11}+v_{13}+v_{22}+v_{24}+v_{34}+v_{41}$ \\
$v_{11}+v_{12}+v_{23}+v_{24}$ \\
$v_{11}+v_{13}+v_{32}+v_{34}$ \\
$v_{11}+v_{12}+v_{24}+v_{33}+v_{34}+v_{41}$ \\
$v_{11}+v_{13}+v_{34}+v_{41}+v_{42}$ \\
$v_{11}+v_{12}+v_{24}+v_{41}+v_{43}$

(iii) Now consider the $t_{ij}$ closeness measures, choosing $t_{11}$ from $T_{1}$. For this measure, we need the conditions from (\ref{eq:thesixtlambda}) satisfied with equality. Since here $p=t_{11}$ there are only five distinct equations with eight variables. However, we also need the condition (\ref{eq:lambdapineqt}) to be satisfied with equality, in turn implying that $1-\lambda_{1}-\lambda_{5}-\lambda_{7}=0$ and $\lambda_{1}+\lambda_{3}=0$. This yields a set of $\lambda_{i}$ values with only one free parameter, as shown in Table \ref{tab:tvectorslambda}. In this table we leave this as a general value, but in the subsequent solution we choose $\lambda_{2} p=-\min(v_{34}-v_{12},v_{14}-v_{32})$. \\
As before the focus has been on achieving the conditions on the more challenging cases, and for some parameter values the above conditions are achieved with a sufficient margin that the initially selected point is actually outside of the space (i.e., some 'nudging' back into the correct space is needed). We omit this calculation here, but shall show this for the corresponding case $T_{2}$ below as illustration.

\begin{table}[hbt!]
\begin{center}
\begin{tabular}{||c|c||}
\hline
$\lambda_{1}$ & $-v_{44}/p$ \\
$\lambda_{2}$ & free \\
$\lambda_{3}$ & $v_{44}/p$ \\
$\lambda_{4}$ & $v_{12}/p-\lambda_{2}$ \\
$\lambda_{5}$ & $(v_{31}+v_{32}+v_{44})/p$ \\
$\lambda_{6}$ & $v_{32}/p-\lambda_{2}$\\
$\lambda_{7}$ & $(v_{11}+v_{12}+v_{23}+v_{44})/p$ \\
$\lambda_{8}$ & $(v_{23}+v_{31}+v_{32}+v_{44})/p$ \\
\hline
\end{tabular}
\caption{The values of $\lambda_{i}$ when the minimum occurs at $t_{11}$.}
\label{tab:tvectorslambda}
\end{center}
\end{table}

\begin{table}[hbt!]
\begin{center}
\begin{tabular}{||c|c|c|c||}
\hline
Vector & Local param. $u_{ij}$ & Target deviation & Comparison \\
\hline
$v_{11}$ & $1-\lambda_{2}-\lambda_{4}-\lambda_{8}$ & 0 & 0 \\
$v_{12}$ & $\lambda_{2}+\lambda_{4}$ & 0 & 0 \\
$v_{13}$ & $\lambda_{8}-\lambda_{6}$ & $v_{13}-v_{23}-v_{31}-v_{44}+\min(v_{34}-v_{12},v_{14}-v_{32})$ & $\geq 1-t_{11} \geq 0$ \\
$v_{14}$ & $\lambda_{6}$ & $v_{14}-v_{32}-\min(v_{34}-v_{12},v_{14}-v_{32})$ & $\geq 0$ \\
$v_{21}$ & $1-\lambda_{1}-\lambda_{3}-\lambda_{8}$ & $v_{21}-v_{11}-v_{12}$ & $g_{44}-t_{11} \geq 0$ \\
$v_{22}$ & $\lambda_{1}+\lambda_{3}$ & $v_{22}$ & $v_{22} \geq 0$ \\
$v_{23}$ & $\lambda_{8}-\lambda_{5}$ & 0 & 0 \\
$v_{24}$ & $\lambda_{5}$ & $v_{24}-v_{31}-v_{32}-v_{44}$ & $f_{1}-t_{11} \geq0$ \\
$v_{31}$ & $1-\lambda_{2}-\lambda_{6}-\lambda_{7}$ & 0 & 0 \\
$v_{32}$ & $\lambda_{2}+\lambda_{6}$ & 0 & 0 \\
$v_{33}$ & $\lambda_{7}-\lambda_{4}$ & $v_{33}-v_{11}-v_{23}-v_{44}+\min(v_{34}-v_{12},v_{14}-v_{32})$ & $\geq 1-t_{11} \geq 0$ \\
$v_{34}$ & $\lambda_{4}$ & $v_{34}-v_{12}-\min(v_{34}-v_{12},v_{14}-v_{32})$ & $\geq 0$ \\
$v_{41}$ & $1-\lambda_{1}-\lambda_{5}-\lambda_{7}$ & $v_{41}$ & $v_{41} \geq 0$ \\
$v_{42}$ & $\lambda_{1}+\lambda_{5}$ & $v_{42}-v_{31}-v_{32}$ & $g_{23}-t_{11} \geq 0$ \\
$v_{43}$ & $\lambda_{7}-\lambda_{3}$ & $v_{43}-v_{11}-v_{12}-v_{23}$ & $f_{3}-t_{11} \geq0$ \\
$v_{44}$ & $\lambda_{3}$ & 0 & 0 \\
\hline
\end{tabular}
\caption{Demonstration of the conditions for a 6 element closeness measure from the first class, here $t_{11}$. The $\lambda_{i}$ values come from Table \ref{tab:tvectorslambda}.}
\label{tab:tvectors}
\end{center}
\end{table}

In Table \ref{tab:tvectors} the measures for comparison with the benchmark $t_{11}$ are as follows: \\
$v_{11}+v_{12}+v_{13}+v_{32}-L$ \\
$v_{11}+v_{12}+v_{14}+v_{23}+v_{31}+v_{44}+L$ \\
$v_{21}+v_{23}+v_{31}+v_{32}+v_{44}$ \\
$v_{11}+v_{12}+v_{23}+v_{24}$ \\
$v_{12}+v_{31}+v_{32}+v_{33}-L$ \\
$v_{11}+v_{23}+v_{31}+v_{32}+v_{34}+v_{44}+L$ \\
$v_{11}+v_{12}+v_{23}+v_{42}+v_{44}$ \\
$v_{31}+v_{32}+v_{43}+v_{44}$. \\
Here, $L$ represents $\lambda_{2}p=\lambda_{2} t_{11}$ and is the only free term after the allocation of the parameters, as described in Table \ref{tab:tvectorslambda}. As we see in Table \ref{tab:tvectors}, the selected value of $L=-\min(v_{34}-v_{12},v_{14}-v_{32})$ achieves all of the inequalities (and in fact in some cases it is the unique term that does this for the $v_{13}$ and $v_{33}$ cases). 

There are four conditions that we need to check which relate to nudging the vector back in the correct space; those related to  $v_{13}, v_{14}, v_{33}, v_{34}$ (we need to see that the target deviations from Table \ref{tab:tvectors2} are less than for the original $v_{ij}$ vector element; the other conditions are automatically satisfied in this regard). This is easy to see by considering the appropriate $v_{ij}+t_{11}$ minus this deviation, when we obtain a closeness measure larger than $t_{11}$ in each case.

From $T_{2}$ we select $t_{21}$. Following the same process, we need the conditions from (\ref{eq:thesixt2lambda}) satisfied with equality. Again, there are five distinct equations with eight variables, together with the condition (\ref{eq:lambdapineqt2}) to be satisfied with equality, in turn implying that $\lambda_{1}+\lambda_{3}=0$ and $\lambda_{7}-\lambda_{3}=0$. This yields a set of $\lambda_{i}$ values with only one free parameter, as shown in Table \ref{tab:tvectorslambda2}. In this table, we leave this as a general value, but in the subsequent solution we choose $L=\lambda_{2}p=\max(v_{23}+v_{33}+v_{34}+v_{42}-v_{13}, v_{11}+v_{23}+v_{34}+v_{42}-v_{31})$. 

\begin{table}[hbt!]
\begin{center}
\begin{tabular}{||c|c||}
\hline
$\lambda_{1}$ & $-(v_{33}+v_{34})/p$ \\
$\lambda_{2}$ & $(v_{12}-v_{34})/p$ \\
$\lambda_{3}$ & $(v_{33}+v_{34})/p$ \\
$\lambda_{4}$ & $v_{34}/p$ \\
$\lambda_{5}$ & $v_{33}+v_{34}+v_{42})/p$ \\
$\lambda_{6}$ & free \\
$\lambda_{7}$ & $(v_{33}+v_{34})/p$ \\
$\lambda_{8}$ & $(v_{23}+v_{33}+v_{34}+v_{42})/p$ \\
\hline
\end{tabular}
\caption{The values of $\lambda_{i}$ when the minimum occurs at $t_{21}$.}
\label{tab:tvectorslambda2}
\end{center}
\end{table}

\begin{table}[hbt!]
\begin{center}
\begin{tabular}{||c|c|c|c||}
\hline
Vt. & Local param. $u_{ij}$ & Target deviation & Comparison \\
\hline
$v_{11}$ & $1-\lambda_{2}-\lambda_{4}-\lambda_{8}$ & 0 & 0 \\
$v_{12}$ & $\lambda_{2}+\lambda_{4}$ & 0 & 0 \\
$v_{13}$ & $\lambda_{8}-\lambda_{6}$ & $\max(0, v_{11}+v_{13}-v_{31}-v_{33})$ & $\max (1-f_{6},0)$ \\
$v_{14}$ & $\lambda_{6}$ & $v_{14}-v_{23}-v_{34}-v_{42}-\max(v_{33}-v_{13}, v_{11}-v_{31})$ & $\min(1,f_{6})-t_{21}$ \\
$v_{21}$ & $1-\lambda_{1}-\lambda_{3}-\lambda_{8}$ & $v_{21}-v_{11}-v_{12}$ & $g_{42}-t_{21} \geq 0$ \\
$v_{22}$ & $\lambda_{1}+\lambda_{3}$ & $v_{22}$ & $v_{22} \geq 0$ \\
$v_{23}$ & $\lambda_{8}-\lambda_{5}$ & 0 & 0 \\
$v_{24}$ & $\lambda_{5}$ & $v_{24}-v_{33}-v_{34}-v_{42}$ & $f_{1}-t_{21} \geq 0$ \\
$v_{31}$ & $1-\lambda_{2}-\lambda_{6}-\lambda_{7}$ & $\max(0, v_{31}+v_{33}-v_{11}-v_{13})$ &  $\max (1-f_{5},0)$ \\
$v_{32}$ & $\lambda_{2}+\lambda_{6}$ & $v_{32}-v_{12}-v_{23}-v_{42}-\max(v_{33}-v_{13}, v_{11}-v_{31})$ & $\min (1,f_{5})-t_{21}$ \\
$v_{33}$ & $\lambda_{7}-\lambda_{4}$ & 0 & 0 \\
$v_{34}$ & $\lambda_{4}$ & 0 & 0 \\
$v_{41}$ & $1-\lambda_{1}-\lambda_{5}-\lambda_{7}$ & $v_{41}-v_{11}-v_{12}-v_{23}$ & $f_{4}-t_{21} \geq 0$ \\
$v_{42}$ & $\lambda_{1}+\lambda_{5}$ & 0 & 0 \\
$v_{43}$ & $\lambda_{7}-\lambda_{3}$ & $v_{43}$ & $v_{43} \geq 0$ \\
$v_{44}$ & $\lambda_{3}$ & $v_{44}-v_{33}-v_{34}$ & $g_{23}-t_{21} \geq 0$ \\
\hline
\end{tabular}
\caption{Demonstration of the conditions for a 6 element closeness measure from the second class, here $t_{21}$. The $\lambda_{i}$ values come from Table \ref{tab:tvectorslambda2}.}
\label{tab:tvectors2}
\end{center}
\end{table}

In Table \ref{tab:tvectors2} the measures for comparison with the benchmark $t_{21}$ are as follows: \\
$v_{11}+v_{12}+v_{13}+L$ \\
$v_{11}+v_{12}+v_{14}+v_{23}+v_{33}+v_{34}+v_{42}-L$ \\
$v_{21}+v_{23}+v_{33}+v_{34}+v_{42}$ \\
$v_{11}+v_{12}+v_{23}+v_{24}$ \\
$v_{12}+v_{31}+v_{33}+L$ \\
$v_{11}+v_{23}+v_{32}+v_{33}+2v_{34}+v_{42}-L$ \\
$v_{33}+v_{34}+v_{41}+v_{42}$ \\
$v_{11}+v_{12}+v_{23}+v_{42}+v_{44}$ \\
Here $L$ represents $\lambda_{2}p=\lambda_{2} t_{21}$ and is the only free term after the allocation of the parameters as described in Table \ref{tab:tvectorslambda2}. As we see in Table \ref{tab:tvectors2}, the selected value of $L=\max(v_{23}+v_{33}+v_{34}+v_{42}-v_{13}, v_{11}+v_{23}+v_{34}+v_{42}-v_{31})$ achieves all of the inequalities (and in fact is the unique term that does this for the $v_{13}$ and $v_{33}$ terms). 

As before, we can see that the selected vector is within the correct space, $\mathcal{P}_{\scriptscriptstyle NS}$.


(iv) Now consider the $g_{ij}$ measures, choosing $g_{11}$. For this measure, we need the conditions from (\ref{eq:thefivetlambda}) satisfied with equality. Since here $p=g_{11}$, there are only four distinct equations with eight variables. However, we also need the condition (\ref{eq:lambdaping}) to be satisfied with equality, implying that $\lambda_{6}=0$. This yields a set of $\lambda_{i}$ values with three free parameters, as shown in Table \ref{tab:gvectorslambda}. In this table, we leave these as general values, but in the subsequent solution we choose several cases: \\
$f_{3}<1, f_{7}<1$ then $\lambda_{1}=(v_{42}-v_{24})/p, \lambda_{3}=v_{44}/p, \lambda_{7}=(v_{43}+v_{44})/p$. \\
$f_{3}<1, f_{7}>1$ then $\lambda_{1}=(v_{42}-v_{24})/p+(1-f_{7}), \lambda_{3}=v_{44}/p, \lambda_{7}=(v_{43}+v_{44})/p$. \\
$f_{3}>1, f_{7}<1$ then $\lambda_{1}=(v_{42}-v_{24})/p, \lambda_{3}=v_{44}/p, \lambda_{7}=(v_{43}+v_{44})/p+1-f_{3}$. \\
$f_{3}>f_{7}>1$ then $\lambda_{1}=(v_{42}-v_{24})/p, \lambda_{3}=v_{44}/p+1-f_{7}, \lambda_{7}=(v_{43}+v_{44})/p+1-f_{3}$. \\
$f_{7}>f_{3}>1$ then $\lambda_{1}=(v_{42}-v_{24})/p, \lambda_{3}=v_{44}/p+1-f_{7}, \lambda_{7}=(v_{43}+v_{44})/p+1-f_{7}$. \\

\begin{table}[hbt!]
\begin{center}
\begin{tabular}{||c|c||}
\hline
$\lambda_{1}$ & free \\
$\lambda_{2}$ & $v_{32}/p$ \\
$\lambda_{3}$ & free \\
$\lambda_{4}$ & $v_{34}/p$ \\
$\lambda_{5}$ & $v_{24}/p$ \\
$\lambda_{6}$ & 0\\
$\lambda_{7}$ & free \\
$\lambda_{8}$ & $(v_{23}+v_{24})/p$ \\
\hline
\end{tabular}
\caption{The values of $\lambda_{i}$ when the minimum occurs at $g_{11}$.}
\label{tab:gvectorslambda}
\end{center}
\end{table}

\begin{table}[hbt!]
\begin{center}
\begin{tabular}{||c|c|c|c||}
\hline
Vector & Local param. $u_{ij}$ & Target deviation & Comparison \\
\hline
$v_{11}$ & $1-\lambda_{2}-\lambda_{4}-\lambda_{8}$ & 0 & 0 \\
$v_{12}$ & $\lambda_{2}+\lambda_{4}$ & $v_{12}-v_{32}-v_{34}$ & $f_{1}-g_{11}\geq 0$ \\
$v_{13}$ & $\lambda_{8}-\lambda_{6}$ & $v_{13}-v_{23}-v_{24}$ & $f_{3}-g_{11} \geq 0$ \\
$v_{14}$ & $\lambda_{6}$ & $v_{14}$ & $v_{14} \geq 0$ \\
$v_{21}$ & $1-\lambda_{1}-\lambda_{3}-\lambda_{8}$ & $\min(1,f_{7})-g_{11}$ &  $\geq 0$ \\
$v_{22}$ & $\lambda_{1}+\lambda_{3}$ & $\max(v_{22}+v_{24}-v_{42}-v_{44},0)$ & $\max(1-f_{7},0) \geq 0$ \\
$v_{23}$ & $\lambda_{8}-\lambda_{5}$ & 0 & 0 \\
$v_{24}$ & $\lambda_{5}$ & 0 & 0 \\
$v_{31}$ & $1-\lambda_{2}-\lambda_{6}-\lambda_{7}$ & $f_{3}/1/(1+f_{3}-f_{7})-g_{11}$ & $\geq 0$ \\
$v_{32}$ & $\lambda_{2}+\lambda_{6}$ & 0 & 0 \\
$v_{33}$ & $\lambda_{7}-\lambda_{4}$ & $(1-f_{3})/0/(f_{7}-f_{3})$ & $\geq 0$ \\
$v_{34}$ & $\lambda_{4}$ & 0 & 0 \\
$v_{41}$ & $1-\lambda_{1}-\lambda_{5}-\lambda_{7}$ & $f_{4}/f_{8}/1-g_{11}$  & $\geq 0$ \\
$v_{42}$ & $\lambda_{1}+\lambda_{5}$ & $\max(v_{42}+v_{44}-v_{22}-v_{24},0)$ & $\max(f_{7}-1, 0) \geq 0$ \\
$v_{43}$ & $\lambda_{7}-\lambda_{3}$ & $(f_{3}-1)/(f_{3}-f_{7})/0$ & $\geq 0$ \\
$v_{44}$ & $\lambda_{3}$ & $\max(v_{42}+v_{44}-v_{22}-v_{24},0)$ &  $\max(f_{7}-1, 0) \geq 0$ \\
\hline
\end{tabular}
\caption{Demonstration of the conditions for a 5 element closeness measure, here $g_{11}$. The $\lambda_{i}$ values come from Table \ref{tab:gvectorslambda}. }
\label{tab:gvectors}
\end{center}
\end{table}

In Table \ref{tab:gvectors} the measures for comparison with the benchmark $g_{11}$ are as follows: \\
$v_{21}+v_{23}+v_{24}+L_{1}+L_{3}$ \\
$v_{11}+v_{22}+v_{23}+v_{24}+v_{32}+v_{34}-L_{1}-L_{3}$ \\
$v_{31}+v_{32}+L_{7}$ \\
$v_{11}+v_{23}+v_{24}+v_{32}+v_{33}+2v_{34}-L_{7}$ \\
$v_{24}+v_{41}+L_{1}+L_{7}$ \\
$v_{11}+v_{23}+v_{32}+v_{34}+v_{42}-L_{1}$ \\
$v_{11}+v_{23}+v_{24}+v_{32}+v_{34}+v_{43}+L_{3}-L_{7}$ \\
$v_{31}+v_{23}+v_{24}+v_{32}+v_{34}+v_{44}-L_{3}$. \\
Here $L_{i}$ represents $\lambda_{i}p=\lambda_{i} g_{11}$ for $i=1,3,7$. These are the free terms after the allocation of the parameters as described in Table \ref{tab:gvectorslambda}. The values must be selected appropriately, depending upon other parameter values, to ensure the conditions are satisfied. As we see in Table \ref{tab:gvectors}, the selected values $L_{1}, L_{2}, L_{3}$ achieve all of the inequalities.
We note that, as before, the selected point might be outside of the allowed space and a similar nudging procedure as previously described would be required.

\vspace{0.3cm}
\noindent\textbf{\textsf{\text{C.6. {Finding the measure of locality}}}}
\vspace{0.2cm}

The non-signaling space is characterized by the equality constraints (\ref{eq:nonsignaling}) as well as the constraints related to the laws of probability (\ref{eq:ugreat}). The local space involves the additional inequality constraints (\ref{eq:parametrizationlocal}). Here are then eight different classes, $F, E, G, S_{1}, S_{2}, S_{3}, T_{1}, T_{2}$.

\vspace{0.3cm}
\setcounter{theorem}{1}
\begin{theorem}\label{T2}\ \\
    $p({\bf v}, \mathcal{P}_{\scriptscriptstyle Loc})$ = $\min \{f_{1}, \ldots , f_{8}, e_{1}, \ldots, e_{8}, g_{1}, \ldots g_{16},  t_{1}, \ldots t_{32}, s_{1}, \ldots, s_{64} \}$.
\end{theorem}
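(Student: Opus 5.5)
The statement is the analytical counterpart of \textbf{Theorem~\ref{theoremL}}, so I would follow the same two-sided scheme used for \textbf{Theorem~\ref{T3}} in Section C.5. I first show that $p({\bf v},\mathcal{P}_{\scriptscriptstyle Loc})$ is bounded above by each of the 128 measures. I then show that the minimum is attained, either by an explicit local ${\bf u}$ or by appeal to the LP duality of the \textbf{Methods} section. Throughout I use the parametrization (\ref{eq:parametrization}) together with the locality constraints (\ref{eq:parametrizationlocal}). By the symmetries (\ref{sym-1})-(\ref{sym-5}), which preserve $\mathcal{P}_{\scriptscriptstyle Loc}$ and permute each class, it suffices to treat one representative per orbit: $f_1$, $g_{11}$, $t_{11}$, $t_{21}$, $s_{111}$, $s_{112}$, $s_{114}$ and $e_1$.

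For the upper bounds, the $f$-, $g$- and $t$-bounds transfer immediately from \textbf{Theorem~\ref{T3}} because $\mathcal{P}_{\scriptscriptstyle Loc}\subseteq\mathcal{P}_{\scriptscriptstyle NS}$. For $e_1$, I would write the eight inequalities $u_{ij}\le v_{ij}/p$ over the support of $e_1$ and sum them. The left-hand side then becomes a constant plus a CHSH combination of the $\lambda$'s, equivalently $2-\tfrac12 S_4({\bf u})$ by (\ref{E12-S4}). Locality of ${\bf u}$ gives $|S_4|\le2$, so $1\le e_1/p$.

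For $s_{111}$, the analogous summation gives $1+(1-\lambda_4-\lambda_5+\lambda_3)\le (s_{111}+\cdots)/p$. Compared with the $z_{111}$ computation, one copy of $u_{11}$ is now missing, so the bracket is no longer a sum of nonnegative $u$'s. Instead I expect it to be a sum of nonnegative $u$'s minus $u_{11}$ plus one of the locality bounds in (\ref{eq:parametrizationlocal}). This should mirror the identity (\ref{ineq-4}), $\vec s_\beta^T\vec P=\tfrac12(\vec e_7^T\vec P+\ldots+1)$, which I would apply with ${\bf u}$ in place of $\vec P$. It yields $s({\bf u})\ge 1$ for every local ${\bf u}$. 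Since $v_{ij}\ge p\,u_{ij}$ componentwise and $s$ has nonnegative coefficients, $s({\bf v})\ge p\,s({\bf u})\ge p$. This last observation gives a uniform argument for all classes: it suffices to verify $\vec q^{\,T}{\bf u}\ge1$ on the sixteen local deterministic vertices, which is exactly the dual feasibility condition (\ref{DLP-poly-1}).

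For attainability, the clean route is strong duality. The LP (\ref{LP-1})-(\ref{LP-3}) has optimum equal to the minimum of $\vec q^{\,T}\vec P$ over the vertices of (\ref{DLP-poly-1})-(\ref{DLP-poly-2}). So the remaining task is to prove that the vertex set, apart from the four trivial row vectors, is exactly $F\cup E\cup G\cup T\cup S$. Each listed vector is dual-feasible by the step above. To show each is a vertex, I would exhibit 16 linearly independent tight constraints: its zero entries together with the deterministic strategies $\vec L_i$ on which it equals 1.

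The main obstacle is completeness: showing there are no other vertices. I would first try a structural argument. Any vertex has integer entries, which would follow if the constraint matrix $[L^T;I]$ restricted to tight rows is totally unimodular enough. A case analysis by support size modulo the symmetry group should then recover the orbits. If that fails, I would fall back on the explicit constructive route of C.5(b). Starting from the $\mathcal{P}_{\scriptscriptstyle NS}$ solutions, I would show that whenever the $\mathcal{P}_{\scriptscriptstyle NS}$ optimizer violates a CHSH bound, the minimum migrates to an $s$- or $e$-measure. In that case an explicit local ${\bf u}$ saturating the corresponding $e_i$ or $s$ inequalities can be built, accepting that nudging issues reappear. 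I would cite the PORTA enumeration as the final check.
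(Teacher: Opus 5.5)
Your proposal is correct and follows the paper's route. The upper bounds are the support-summation argument of C.6(a), which your uniform dual-feasibility check $L^{T}\vec q\geqslant\vec 1$ restates, and attainability and completeness rest, exactly as in B.2, on strong duality plus the PORTA vertex enumeration; the paper also relies on PORTA, since it says its constructive C.6(b) route is not a complete proof. One small slip: summing over the support of $s_{111}$ gives $1+(\lambda_2+\lambda_3+\lambda_8-\lambda_5)\leqslant s_{111}/p$ directly, and that bracket is one of the locality expressions in (\ref{eq:parametrizationlocal}), whereas the bracket you wrote belongs to $z_{111}$ --- your fallback via (\ref{ineq-4}) still yields the bound.
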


As for \textbf{Theorem~\ref{T3}}, a proof of this theorem is elsewhere. Below we give a construction for finding the appropriate point in the non-signaling space that achieves the best closeness, itemized by the different measures concerning where this is possible. 
  
First, we give a way to show that $p({\bf v}, \mathcal{P}_{\scriptscriptstyle Loc})$ is bounded above by all 128 closeness measures.  In each case, we demonstrate the proof just for a single case (or two cases for $e_{i}$) but these generalize through the simple renumbering mentioned earlier.

(i) $p({\bf v},  \mathcal{P}_{\scriptscriptstyle Loc})$ $\leq f_{i}$. This is trivially true, since $p({\bf v},  \mathcal{P}_{\scriptscriptstyle NS})$ $\leq f_{i}$ and $\mathcal{P}_{\scriptscriptstyle Loc}$ is a subspace of $ \mathcal{P}_{\scriptscriptstyle NS}$.

(ii) $p({\bf v},  \mathcal{P}_{\scriptscriptstyle Loc})$ $\leq e_{i}$. We do the calculation for both $e_{1}$ and $e_{2}$.
We know from (\ref{eq:plessvoveru}) that for the closeness $p$ to any element of $U_{L}$, we have:
\begin{eqnarray}\label{eq:theeightlambda}
1-\lambda_{2}-\lambda_{4}-\lambda_{8} \leq \frac{v_{11}}{p}, \lambda_{6} \leq \frac{v_{14}}{p}, \lambda_{1}+\lambda_{3} \leq \frac{v_{22}}{p}, \lambda_{8}-\lambda_{5} \leq \frac{v_{23}}{p}, \nonumber \\
\lambda_{2}+\lambda_{6} \leq \frac{v_{32}}{p}, \lambda_{7}-\lambda_{4} \leq \frac{v_{33}}{p}, \lambda_{1}+\lambda_{5} \leq \frac{v_{42}}{p}, \lambda_{7}-\lambda_{3} \leq \frac{v_{43}}{p}.
\end{eqnarray}
Summing all the terms on both sides of the inequalities gives:
\begin{equation}\label{eq:lambdasumsine}
1+ 2(\lambda_{1}+\lambda_{6}+\lambda_{7}-\lambda_{4}) \leq \frac{e_{1}}{p}.
\end{equation}
We note that from (\ref{eq:parametrizationlocal}) we have that the bracketed term in (\ref{eq:lambdasumsine}) is greater than 0. Thus 
\begin{equation}
1 \leq \frac{e_{1}}{p} \Rightarrow p \leq e_{1}.
\end{equation}
Repeating the calculation for $e_{2}$ gives:
\begin{eqnarray}\label{eq:theeightlambda2}
\lambda_{2}+\lambda_{4} \leq \frac{v_{12}}{p}, \lambda_{8}-\lambda_{6} \leq \frac{v_{13}}{p}, 1-\lambda_{1}-\lambda_{3}-\lambda_{8} \leq \frac{v_{21}}{p}, \lambda_{5} \leq \frac{v_{24}}{p}, \nonumber \\
1-\lambda_{2}-\lambda_{6}-\lambda_{7} \leq \frac{v_{31}}{p}, \lambda_{4} \leq \frac{v_{34}}{p}, 1-\lambda_{1}-\lambda_{5}-\lambda_{7} \leq \frac{v_{41}}{p}, \lambda_{3} \leq \frac{v_{44}}{p}.
\end{eqnarray}
Summing all the terms on both sides of the inequalities gives:
\begin{equation}\label{eq:lambdasumsineagain}
3- 2(\lambda_{1}+\lambda_{6}+\lambda_{7}-\lambda_{4}) \leq \frac{e_{2}}{p}.
\end{equation}
We note that from (\ref{eq:parametrizationlocal}) we have that $\lambda_{1}+\lambda_{6}+\lambda_{7}-\lambda_{4}$ is less than 1. Thus 
\begin{equation}\label{eq:strictforeight2}
1 \leq \frac{e_{2}}{p} \Rightarrow p \leq e_{2}.
\end{equation}
Calculations for the remaining six terms are similar, with each of the eight inequalities summarized in (\ref{eq:parametrizationlocal}) appearing exactly once.

(iii) (i) $p({\bf v},  \mathcal{P}_{\scriptscriptstyle Loc})$ $\leq g_{ij}$. This is trivially true, since $p({\bf v},  \mathcal{P}_{\scriptscriptstyle NS})$ $\leq g_{ij}$ and $\mathcal{P}_{\scriptscriptstyle Loc}$ is a subspace of $ \mathcal{P}_{\scriptscriptstyle NS}$.

(iv) (i) $p({\bf v},  \mathcal{P}_{\scriptscriptstyle Loc})$ $\leq t_{ji}$. This is trivially true, since $p({\bf v}, U_{NS})$ $\leq t_{ij}$ and $\mathcal{P}_{\scriptscriptstyle Loc}$ is a subspace of $ \mathcal{P}_{\scriptscriptstyle NS}$.

(v) $p({\bf v},  \mathcal{P}_{\scriptscriptstyle Loc})$ $\leq s_{ijk}$. For case $S_{1}$ we do the calculation for $s_{111}=v_{11}+v_{12}+v_{13}+v_{23}+v_{32}+v_{44}$.
We know from (\ref{eq:plessvoveru}) that:
\begin{eqnarray}\label{eq:thesixslambda}
1-\lambda_{2}-\lambda_{4}-\lambda_{8} \leq \frac{v_{11}}{p}, \lambda_{2}+\lambda_{4} \leq \frac{v_{12}}{p}, 
\lambda_{8}-\lambda_{6} \leq \frac{v_{13}}{p}, \nonumber \\
\lambda_{8}-\lambda_{5} \leq \frac{v_{23}}{p}, \lambda_{2}+\lambda_{6} \leq \frac{v_{32}}{p},  \lambda_{3} \leq \frac{v_{44}}{p}.
\end{eqnarray}
Summing all the terms on both sides of the inequalities gives:
\begin{equation}\label{eq:lambdapineqs}
1+(\lambda_{2}+\lambda_{3}+\lambda_{8}-\lambda_{5}) \leq \frac{s_{111}}{p}.
\end{equation}
We know from (\ref{eq:parametrizationlocal}) that the bracketed term in (\ref{eq:lambdapineqs}) is non-negative. Thus 
\begin{equation}
1 \leq \frac{s_{111}}{p} \Rightarrow p \leq s_{111}.
\end{equation}
Calculations from the remaining 15 terms are similar, with non-negative bracketed terms in each case. \\
For case $S_{2}$ we do the calculation for $s_{112}=v_{11}+v_{12}+v_{13}+v_{23}+v_{34}+v_{42}$.
We know from (\ref{eq:plessvoveru}) that:
\begin{eqnarray}\label{eq:thesixslambda2}
1-\lambda_{2}-\lambda_{4}-\lambda_{8} \leq \frac{v_{11}}{p}, \lambda_{2}+\lambda_{4} \leq \frac{v_{12}}{p}, 
\lambda_{8}-\lambda_{6} \leq \frac{v_{13}}{p}, \nonumber \\
\lambda_{8}-\lambda_{5} \leq \frac{v_{23}}{p}, \lambda_{4} \leq \frac{v_{34}}{p},  \lambda_{1}+\lambda_{5} \leq \frac{v_{42}}{p}.
\end{eqnarray}
Summing all the terms on both sides of the inequalities gives:
\begin{equation}\label{eq:lambdapineqs2}
1+(\lambda_{1}+\lambda_{4}+\lambda_{8}-\lambda_{6}) \leq \frac{s_{112}}{p}.
\end{equation}
We know from (\ref{eq:parametrizationlocal}) that the bracketed term in (\ref{eq:lambdapineqs2}) is non-negative. Thus 
\begin{equation}
1 \leq \frac{s_{112}}{p} \Rightarrow p \leq s_{112}.
\end{equation}
Calculations from the remaining 31 terms are similar. \\
For case $S_{3}$, we do the calculation for $s_{114}=v_{11}+v_{12}+v_{13}+v_{24}+v_{34}+v_{41}$.
We know from (\ref{eq:plessvoveru}) that:
\begin{eqnarray}\label{eq:thesixslambda3}
1-\lambda_{2}-\lambda_{4}-\lambda_{8} \leq \frac{v_{11}}{p}, \lambda_{2}+\lambda_{4} \leq \frac{v_{12}}{p}, 
\lambda_{8}-\lambda_{6} \leq \frac{v_{13}}{p}, \nonumber \\
\lambda_{5} \leq \frac{v_{24}}{p}, \lambda_{4} \leq \frac{v_{34}}{p}, 1-\lambda_{1}-\lambda_{5}-\lambda_{7} \leq \frac{v_{41}}{p}.
\end{eqnarray}
Summing all the terms on both sides of the inequalities gives:
\begin{equation}\label{eq:lambdapineqs3}
1+(1-\lambda_{1}-\lambda_{6}-\lambda_{7}+\lambda_{4}) \leq \frac{s_{114}}{p}.
\end{equation}
We know from (\ref{eq:parametrizationlocal}) that the bracketed term in (\ref{eq:lambdapineqs3}) is non-negative. Thus 
\begin{equation}
1 \leq \frac{s_{114}}{p} \Rightarrow p \leq s_{114}.
\end{equation}
Calculations from the remaining 15 terms are similar.

b) 
We now demonstrate that the bound described above is tight, by showing how to achieve it when a measure from each of the five classes is the minimum. 

(i) We start with the $f_{i}$ measures. We again start with the construction from Lemma 1 and show the process for $f_{1}$ and when $f_{3}, f_{5}$ and $f_{7}$ are less than 1. Thus, following the proof from Lemma 1, we consider the local vector from Case A. We need to show that when $f_{1}$ is the smallest of all 128 closeness measures, so that $p=f_{1}$ (and also that $f_{3}, f_{5}, f_{7}$ are less than 1), we have that the inequalities in \ref{eq:parametrizationlocal} hold.

Using the $\lambda_{i}$ values from equation \ref{eq:fvectorlambdasfirst} and Table \ref{tab:CaseAtheorem 1}, we have:
\begin{eqnarray}\label{eq:whenf1smallest128}
\lambda_{1}+\lambda_{4}+\lambda_{8}-\lambda_{6}=(v_{42}+v_{12}+v_{23}-v_{32})/p=(s_{331}-1)/f_{1} , \nonumber \\
\lambda_{2}+\lambda_{3}+\lambda_{8}-\lambda_{5}=(v_{12}-v_{34}+v_{44}+v_{23})/p=(s_{311}-1)/f_{1} , \nonumber \\
\lambda_{1}+\lambda_{6}+\lambda_{7}-\lambda_{4}=(v_{42}-v_{24}+v_{32}-v_{12}+v_{43}+v_{44})/p=(s_{441}-f_{1})/f_{1} , \nonumber \\
\lambda_{2}+\lambda_{5}+\lambda_{7}-\lambda_{3} =(v_{12}-v_{34}+v_{24}+v_{43})/p=(s_{312}-1)/f_{1} .
\end{eqnarray}

We need to see that all expressions above are greater than 0 and less than 1. For this, we need to see that each of the first, second, and fourth $s_{ijk}$ values are more than $1$ and less than $1+f_{1}$, and the third $s_{ijk}$ value is between $f_{1}$ and $2f_{1}$. We thus have eight conditions to satisfy. \\
As before, the solutions may be just outside the correct space, in which case we need to ``nudge'' them back. From the first equation, if $s_{331}<1$ we can increase this by decreasing $\lambda_{6}$, potentially down to 0, while leaving all other $\lambda_{i}$ values unchanged. We obtain at the maximum limit \\
\begin{equation}
\lambda_{1}+\lambda_{4}+\lambda_{8}-\lambda_{6}=(v_{23}+v_{34}+v_{42})/p,
\end{equation}
which is clearly greater than 0, so we just need to stop at some appropriate value. This does not violate any of the other conditions, except potentially the lower bound in the third inequality, since $\lambda_{6}$ works in the opposite direction here. We note that these two terms add to
\begin{equation}
(v_{11}+v_{12}+2v_{23}+2v_{42}+v_{43}+v_{44} -f_{1})/f_{1}=(g_{23}-f_{1}+v_{23}+v_{42}+v_{43})/f_{1}.
\end{equation}
Since this sum is positive, we can achieve both expressions being greater than zero simultaneously. A similar procedure applies to each of the other conditions. 

(ii) Now consider the measures $e_{i}$, in particular $e_{2}$. For this measure, we need the conditions from (\ref{eq:theeightlambda2}) satisfied with equality. Since here $p=e_{2}$ there are only seven distinct equations with eight variables. However, we also need the condition (\ref{eq:strictforeight2}) satisfied with equality, which implies that $\lambda_{1}+\lambda_{6}+\lambda_{7}-\lambda_{4}=1$. This yields a unique set of $\lambda_{i}$ values, as shown in Table \ref{tab:evectorslambda}. We can clearly see from the table that the target deviations lie between 0 and the respective $v_{ij}$, so corresponding to a vector within the local space.

\begin{table}[hbt!]
\begin{center}
\begin{tabular}{||c|c||}
\hline
$\lambda_{1}$ & $(v_{12}+v_{31})/p$ \\
$\lambda_{2}$ & $(v_{12}-v_{34})/p$ \\
$\lambda_{3}$ & $v_{44}/p$ \\
$\lambda_{4}$ & $v_{34}/p$ \\
$\lambda_{5}$ & $v_{24}/p$ \\
$\lambda_{6}$ & $(v_{24}+v_{34}+v_{41})/p$ \\
$\lambda_{7}$ & $(v_{13}+v_{21}+v_{34}+v_{44})/p$ \\
$\lambda_{8}$ & $(v_{13}+v_{24}+v_{34}+v_{41})/p$ \\
\hline
\end{tabular}
\caption{The values of $\lambda_{i}$ when the minimum occurs at $e_{2}$.}
\label{tab:evectorslambda}
\end{center}
\end{table}

\begin{table}[hbt!]
\begin{center}
\begin{tabular}{||c|c|c|c||}
\hline
Vector $v_{ij}$ & Local parametrization $u_{ij}$ & Target deviation & Comparison \\
\hline
$v_{11}$ & $1-\lambda_{2}-\lambda_{4}-\lambda_{8}$ & $v_{11}-v_{21}-v_{31}-v_{44}$ & $s_{114}-e_{2} \geq 0$ \\
$v_{12}$ & $\lambda_{2}+\lambda_{4}$ & 0 & 0 \\
$v_{13}$ & $\lambda_{8}-\lambda_{6}$ & 0 & 0 \\
$v_{14}$ & $\lambda_{6}$ & $v_{14}-v_{24}-v_{34}-v_{41}$ & $s_{141}-e_{2} \geq 0$ \\
$v_{21}$ & $1-\lambda_{1}-\lambda_{3}-\lambda_{8}$ & 0 & 0 \\
$v_{22}$ & $\lambda_{1}+\lambda_{3}$ & $v_{22}-v_{12}-v_{31}-v_{44}$ & $s_{221}-e_{2} \geq 0$ \\
$v_{23}$ & $\lambda_{8}-\lambda_{5}$ & $v_{23}-v_{13}-v_{34}-v_{41}$ & $s_{234}-e_{2} \geq 0$ \\
$v_{24}$ & $\lambda_{5}$ & 0 & 0 \\
$v_{31}$ & $1-\lambda_{2}-\lambda_{6}-\lambda_{7}$ & 0 & 0 \\
$v_{32}$ & $\lambda_{2}+\lambda_{6}$ & $v_{32}-v_{12}-v_{24}-v_{41}$ & $s_{323}-e_{2} \geq 0$ \\
$v_{33}$ & $\lambda_{7}-\lambda_{4}$ & $v_{33}-v_{13}-v_{21}-v_{44}$ & $s_{331}-e_{2} \geq 0$ \\
$v_{34}$ & $\lambda_{4}$ & 0 & 0 \\
$v_{41}$ & $1-\lambda_{1}-\lambda_{5}-\lambda_{7}$ & 0 & 0 \\
$v_{42}$ & $\lambda_{1}+\lambda_{5}$ & $v_{42}-v_{12}-v_{24}-v_{31}$ & $s_{423}-e_{2} \geq 0$ \\
$v_{43}$ & $\lambda_{7}-\lambda_{3}$ & $v_{43}-v_{13}-v_{21}-v_{34}$ & $s_{432}-e_{2} \geq 0$ \\
$v_{44}$ & $\lambda_{3}$ & 0 & 0 \\
\hline
\end{tabular}
\caption{Demonstration of the conditions for an 8 element closeness measure, here $e_{2}$. The $\lambda_{i}$ values come from Table \ref{tab:evectorslambda}.}
\label{tab:evectors}
\end{center}
\end{table}

In Table \ref{tab:evectors} the eight $s_{ijk}$ measures are as follows: \\
$s_{114}=v_{11}+v_{12}+v_{13}+v_{24}+v_{34}+v_{41}$ \\
$s_{141}=v_{12}+v_{13}+v_{14}+v_{21}+v_{31}+v_{44}$ \\
$s_{221}=v_{13}+v_{21}+v_{22}+v_{24}+v_{34}+v_{41}$ \\
$s_{234}=v_{12}+v_{21}+v_{23}+v_{24}+v_{31}+v_{44}$ \\
$s_{323}=v_{13}+v_{21}+v_{31}+v_{32}+v_{34}+v_{44}$ \\
$s_{331}=v_{12}+v_{24}+v_{31}+v_{33}+v_{34}+v_{41}$ \\
$s_{423}=v_{13}+v_{21}+v_{34}+v_{41}+v_{42}+v_{44}$ \\
$s_{432}=v_{12}+v_{24}+v_{31}+v_{41}+v_{43}+v_{44}$ \\
Note that the first two measures are in $S_{3}$, the next four are in $S_{2}$ and the last two are in $S_{1}$.

(iii) Now consider the $t_{ij}$ measures. We start with the construction from \textbf{Theorem~\ref{T3}}. We need to show that when $t_{11}$ is the smallest of all 128 measures, so that $p=t_{11}$, we have that the inequalities in (\ref{eq:parametrizationlocal}) hold.

Using the $\lambda_{i}$ values from equation (\ref{tab:tvectorslambda}), we have:
\begin{eqnarray}\label{eq:whent11smallest128}
\lambda_{1}+\lambda_{4}+\lambda_{8}-\lambda_{6}= \frac{1}{t_{11}} (v_{12}+v_{23}+v_{31}), \nonumber \\
\lambda_{2}+\lambda_{3}+\lambda_{8}-\lambda_{5}= \frac{1}{t_{11}} (v_{23}+v_{44}-\min(v_{34}-v_{12}, v_{14}-v_{32})), \nonumber \\
\lambda_{1}+\lambda_{6}+\lambda_{7}-\lambda_{4} = \frac{1}{t_{11}} (v_{11}+v_{23}+v_{32}), \nonumber \\
\lambda_{2}+\lambda_{5}+\lambda_{7}-\lambda_{3} = \frac{1}{t_{11}} (v_{11}+v_{12}+v_{23}+v_{31}+v_{32}+v_{44}-\min(v_{34}-v_{12}, v_{14}-v_{32})).
\end{eqnarray}
The first and third expressions clearly lie between 0 and 1. The second expression is clearly smaller than the fourth, so we need to show that the second is greater than 0 and the fourth is less than 1. Now consider the fourth expression. Since $s_{321}-t_{11}=v_{34}-v_{12}$ and $s_{121}-t_{11}=v_{14}-v_{32}$, both of these expressions are non-negative, so that $\min(v_{34}-v_{12}, v_{14}-v_{32})$ is non-negative and the second expression does not exceed 1 as required. For the second expression, it is possible that the selected value of $L$ gives a value for this that is negative, if $-(v_{23}+v_{44})>L=-\min(v_{34}-v_{12}, v_{14}-v_{32}$. We need to nudge this into the region by increasing $L$ to equal $-(v_{23}+v_{44})$ (here we can see that all other conditions still hold).

A similar argument holds for $t_{12}$.

(iv) Now consider the $g_{ij}$ measures, choosing $g_{11}$. We start with the construction from \textbf{Theorem~\ref{T3}} from Table \ref{tab:gvectors}. We would need to show that when $g_{11}$ is the smallest of all 128 measures, so that $p=g_{11}$, we have that the inequalities in \ref{eq:parametrizationlocal} hold. As above, this again might not be the case without suitable nudging of the solution.

(v) Now consider the $s_{ijk}$ measures, starting with $S_{1}$ and choosing $s_{111}$. For this measure, we need the conditions from (\ref{eq:thesixslambda}) satisfied with equality. We also need the condition (\ref{eq:lambdapineqs}) to be satisfied with equality, in turn implying that $\lambda_{2}+\lambda_{3}+\lambda_{8}-\lambda_{5}=0$. For the vectors in $S_{1}$ (and also $S_{2}, S_{3}$), as above, some nudging may be needed.

Since here $p=s_{111}$ there are six distinct equations with eight variables. This yields a set of $\lambda_{i}$ values with only two free parameters, as shown in Table \ref{tab:svectorslambda}. In this table, we leave these as general values, but in the subsequent solution we choose $\lambda_{1} p=\min(v_{22}-v_{44}, v_{42}-v_{13}-v_{32}-v_{44})$ and $\lambda_{7} p = \min(v_{43}+v_{44}, v_{12}+v_{23}+v_{33}+v_{44}).$

\begin{table}[hbt!]
\begin{center}
\begin{tabular}{||c|c||}
\hline
$\lambda_{1}$ & free \\
$\lambda_{2}$ & $-(v_{23}+v_{44})/p$ \\
$\lambda_{3}$ & $v_{44}/p$ \\
$\lambda_{4}$ & $(v_{12}+v_{23}+v_{44})/p$ \\
$\lambda_{5}$ & $(v_{13}+v_{32}+v_{44})/p$ \\
$\lambda_{6}$ & $(v_{23}+v_{32}+v_{44})/p$\\
$\lambda_{7}$ & free \\
$\lambda_{8}$ & $(v_{13}+v_{23}+v_{32}+v_{44})/p$ \\
\hline
\end{tabular}
\caption{The values of $\lambda_{i}$ when the minimum occurs at $s_{111}$.}
\label{tab:svectorslambda}
\end{center}
\end{table}

\begin{table}[hbt!]
\begin{center}
\begin{tabular}{||c|c|c|c||}
\hline
Vector & Local param. $u_{ij}$ & Target deviation & Comparison \\
\hline
$v_{11}$ & $1-\lambda_{2}-\lambda_{4}-\lambda_{8}$ & 0 & 0 \\
$v_{12}$ & $\lambda_{2}+\lambda_{4}$ & 0 & 0 \\
$v_{13}$ & $\lambda_{8}-\lambda_{6}$ & 0 & 0 \\
$v_{14}$ & $\lambda_{6}$ & $v_{14}-v_{23}-v_{32}-v_{44}$ & $1-s_{111} \geq 0$ \\
$v_{21}$ & $1-\lambda_{1}-\lambda_{3}-\lambda_{8}$ & $v_{21}-v_{11}-v_{12}+\min(v_{22}, v_{42}-v_{13}-v_{32})$ & $\min(s_{211},f_{7})-s_{111} \geq 0$ \\
$v_{22}$ & $\lambda_{1}+\lambda_{3}$ & $\max(0,v_{13}+v_{22}+v_{32}-v_{42})$  & $\geq 0$ \\
$v_{23}$ & $\lambda_{8}-\lambda_{5}$ & 0 & 0 \\
$v_{24}$ & $\lambda_{5}$ & $v_{24}-v_{13}-v_{32}-v_{44}$ & $f_{1}-s_{111} \geq 0$ \\
$v_{31}$ & $1-\lambda_{2}-\lambda_{6}-\lambda_{7}$ & $v_{31}-v_{11}-v_{13}-\min(v_{33},v_{43}-v_{12}-v_{23})$ & $\min(f_{3},s_{311})-s_{111} \geq 0$ \\
$v_{32}$ & $\lambda_{2}+\lambda_{6}$ & 0 & 0 \\
$v_{33}$ & $\lambda_{7}-\lambda_{4}$ & $\max(v_{12}+v_{23}+v_{33}-v_{43},0)$ & $\geq 0$ \\
$v_{34}$ & $\lambda_{4}$ & $v_{34}-v_{12}-v_{23}-v_{44}$ & $f_{5}-s_{111} \geq 0$ \\
$v_{41}$ & $1-\lambda_{1}-\lambda_{5}-\lambda_{7}$ & $\min (s_{422}, e_{7}, 1, s_{433})-s_{111}$ & $\geq 0$ \\
$v_{42}$ & $\lambda_{1}+\lambda_{5}$ & $\max(v_{42}-v_{13}-v_{22}-v_{32},0)$ & $\geq 0$ \\
$v_{43}$ & $\lambda_{7}-\lambda_{3}$ & $\max(0, v_{43}-v_{12}-v_{23}-v_{33})$ & $\geq 0$ \\
$v_{44}$ & $\lambda_{3}$ & 0 & 0 \\
\hline
\end{tabular}
\caption{Demonstration of the conditions for a 6 element closeness measure from the first class, here $s_{111}$. The $\lambda_{i}$ values come from Table \ref{tab:svectorslambda}.}
\label{tab:svectors}
\end{center}
\end{table}

In Table \ref{tab:svectors} the measures for comparison with the benchmark $s_{111}$ are as follows: \\
$1$ \\
$v_{13}+v_{21}+v_{23}+v_{32}+2v_{44}+L_{1}$ \\
$v_{11}+v_{12}+v_{13}+v_{22}+v_{23}+v_{32}-L_{1}$ \\
$v_{11}+v_{12}+v_{23}+v_{24}$ \\
$v_{31}+v_{32}+L_{7}$ \\
$v_{11}+2v_{12}+v_{13}+2v_{23}+v_{32}+v_{33}+2v_{44}-L_{7}$ \\
$v_{11}+v_{13}+v_{32}+v_{34}$ \\
$v_{13}+v_{32}+v_{41}+v_{44}+L_{1}+L_{7}$ \\
$v_{11}+v_{12}+v_{23}+v_{42}-L_{1}$ \\
$v_{11}+v_{12}+v_{13}+v_{23}+v_{32}+v_{43}+2v_{44}-L_{7}$ \\
Here $L_{1}=\min(v_{22}-v_{44}, v_{42}-v_{13}-v_{32}-v_{44})$ and $L_{7}=\min(v_{43}+v_{44}, v_{12}+v_{23}+v_{33}+v_{44})$ \\

Now for $S_{2}$ we choose $s_{112}$. For this measure, we need the conditions from (\ref{eq:thesixslambda2}) satisfied with equality. As before, we also have the condition (\ref{eq:lambdapineqs2}) to be satisfied with equality, implying that $\lambda_{1}+\lambda_{4}+\lambda_{8}-\lambda_{6}=0$, so that with for $p=s_{112}$ there are six distinct equations here with eight variables. This yields the set of $\lambda_{i}$ values shown in Table \ref{tab:svectorslambda2}. In this table, we leave these as general values, but in the subsequent solution we choose $\lambda_{3} p=\min(v_{44}, v_{13}+v_{22}+v_{34})$ and $\lambda_{7} p=v_{11}+v_{13}+v_{34}-\min(v_{31}, v_{41}-v_{12}-v_{23}).$

\begin{table}[hbt!]
\begin{center}
\begin{tabular}{||c|c||}
\hline
$\lambda_{1}$ & $-(v_{13}+v_{34})/p$ \\
$\lambda_{2}$ & $(v_{12}-v_{44})/p$ \\
$\lambda_{3}$ & free \\
$\lambda_{4}$ & $v_{34}/p$ \\
$\lambda_{5}$ & $(v_{13}+v_{34}+v_{42})/p$ \\
$\lambda_{6}$ & $(v_{23}+v_{34}+v_{42})/p$\\
$\lambda_{7}$ & free \\
$\lambda_{8}$ & $(v_{13}+v_{23}+v_{34}+v_{42})/p$ \\
\hline
\end{tabular}
\caption{The values of $\lambda_{i}$ when the minimum occurs at $s_{112}$.}
\label{tab:svectorslambda2}
\end{center}
\end{table}

\begin{table}[hbt!]
\begin{center}
\begin{tabular}{||c|c|c|c||}
\hline
Vector & Local param. $u_{ij}$ & Target deviation & Comparison \\
\hline
$v_{11}$ & $1-\lambda_{2}-\lambda_{4}-\lambda_{8}$ & 0 & 0 \\
$v_{12}$ & $\lambda_{2}+\lambda_{4}$ & 0 & 0 \\
$v_{13}$ & $\lambda_{8}-\lambda_{6}$ & 0 & 0 \\
$v_{14}$ & $\lambda_{6}$ & $v_{14}-v_{23}-v_{34}-v_{42}$ & $1-s_{112} \geq 0$ \\
$v_{21}$ & $1-\lambda_{1}-\lambda_{3}-\lambda_{8}$ & $
v_{21}-v_{11}-v_{12}+\min(v_{22}, v_{13}+v_{22}+v_{34})$ & $\min(f_{7},s_{212})-s_{112} \geq 0$ \\
$v_{22}$ & $\lambda_{1}+\lambda_{3}$ & $v_{13}+v_{22}+v_{34}-\min(v_{44}-v_{13}-v_{34}, v_{22})$  & $\geq 0$ \\
$v_{23}$ & $\lambda_{8}-\lambda_{5}$ & 0 & 0 \\
$v_{24}$ & $\lambda_{5}$ & $v_{24}-v_{13}-v_{34}-v_{42}$ & $f_{1}-s_{112} \geq 0$ \\
$v_{31}$ & $1-\lambda_{2}-\lambda_{6}-\lambda_{7}$ & $v_{31}-\min(v_{31}, v_{41}-v_{12}-v_{23})$ & $\geq 0$ \\
$v_{32}$ & $\lambda_{2}+\lambda_{6}$ & $v_{32}-v_{12}-v_{23}-v_{42}$ & $f_{5}-s_{112} \geq 0$ \\
$v_{33}$ & $\lambda_{7}-\lambda_{4}$ & $v_{33}-v_{11}-v_{13}+\min(v_{31}, v_{41}-v_{12}-v_{23})$ & $\min(s_{331},f_{4})-s_{112} \geq 0$ \\
$v_{34}$ & $\lambda_{4}$ & 0 & 0 \\
$v_{41}$ & $1-\lambda_{1}-\lambda_{5}-\lambda_{7}$ & $v_{41}-v_{12}-v_{23}-\min(v_{31}, v_{41}-v_{12}-v_{23})$ & $\geq 0$ \\
$v_{42}$ & $\lambda_{1}+\lambda_{5}$ & 0 & 0 \\
$v_{43}$ & $\lambda_{7}-\lambda_{3}$ & $\min(s{413},1,e_{5},s_{442})-s_{112}$ & $\geq 0$ \\
$v_{44}$ & $\lambda_{3}$ & $v_{44}-\min(v_{44}, v_{13}+v_{22}+v_{34})$ & $\geq 0$ \\
\hline
\end{tabular}
\caption{Demonstration of the conditions for a 6 element closeness measure from the second class, here $s_{112}$. The $\lambda_{i}$ values come from Table \ref{tab:svectorslambda2}.}
\label{tab:svectors2}
\end{center}
\end{table}

In Table \ref{tab:svectors2} the measures for comparison with the benchmark $s_{112}$ are as follows: \\
$1$ \\
$v_{21}+v_{23}+v_{42}+L_{3}$ \\
$v_{11}+v_{12}+2v_{13}+v_{22}+v_{23}+2v_{34}+v_{42}-L_{3}$ \\
$v_{11}+v_{12}+v_{23}+v_{24}$ \\
$v_{12}+v_{23}+v_{31}+v_{42}+L_{7}$ \\
$v_{11}+v_{13}+v_{32}+v_{34}$ \\
$v_{11}+v_{12}+v_{13}+v_{23}+v_{33}+2v_{34}+v_{42}-L_{7}$ \\
$v_{41}+v_{42}+L_{7}$ \\
$v_{11}+v_{12}+v_{13}+v_{23}+v_{34}+v_{42}+v_{43}-L_{7}+L_{3}$ \\
$v_{11}+v_{12}+v_{13}+v_{23}+v_{34}+v_{42}+v_{44}-L_{3}$ \\
Here $L_{3}=\min(v_{44}, v_{13}+v_{22}+v_{34})$ and $L_{7}=v_{11}+v_{13}+v_{34}-\min(v_{31}, v_{41}-v_{12}-v_{23})$. \\

Finally, regarding $S_{3}$ we choose as an example $s_{114}$. For this measure, we need the conditions from (\ref{eq:thesixslambda3}) satisfied with equality. As before, we also have the condition (\ref{eq:lambdapineqs3}) to be satisfied with equality, implying that $\lambda_{1}+\lambda_{6}+\lambda_{7}-\lambda_{4}=1$, so that for $p=s_{114}$ there are six distinct equations with eight variables. This yields the set of $\lambda_{i}$ values shown in Table \ref{tab:svectorslambda3}. In this table, we leave these as general values, but in the subsequent solution we choose $\lambda_{1} p=\min(v_{42}-v_{24}, v_{12}+v_{31})$ and $\lambda_{3} p=\max(v_{11}+v_{12}-v_{21}, v_{11}+v_{12}+v_{13}+v_{34}-v_{43})-\min(v_{42}-v_{24}, v_{12}+v_{31})$.

\begin{table}[hbt!]
\begin{center}
\begin{tabular}{||c|c||}
\hline
$\lambda_{1}$ & free \\
$\lambda_{2}$ & $(v_{12}-v_{32})/p$ \\
$\lambda_{3}$ & free \\
$\lambda_{4}$ & $v_{34}/p$ \\
$\lambda_{5}$ & $v_{24}/p$ \\
$\lambda_{6}$ & $(v_{24}+v_{34}+v_{41})/p$\\
$\lambda_{7}$ & $(v_{11}+v_{12}+v_{13}+v_{34})/p-\lambda_{1}$ \\
$\lambda_{8}$ & $(v_{13}+v_{24}+v_{34}+v_{41})/p$ \\
\hline
\end{tabular}
\caption{The values of $\lambda_{i}$ when the minimum occurs at $s_{114}$.}
\label{tab:svectorslambda3}
\end{center}
\end{table}

\begin{table}[hbt!]
\begin{center}
\begin{tabular}{||c|c|c|c||}
\hline
Vect. & Local param. $u_{ij}$ & Target deviation & Comparison \\
\hline
$v_{11}$ & $1-\lambda_{2}-\lambda_{4}-\lambda_{8}$ & 0 & 0 \\
$v_{12}$ & $\lambda_{2}+\lambda_{4}$ & 0 & 0 \\
$v_{13}$ & $\lambda_{8}-\lambda_{6}$ & 0 & 0 \\
$v_{14}$ & $\lambda_{6}$ & $v_{14}-v_{24}-v_{34}-v_{41}$ & $1-s_{114} \geq 0$ \\
$v_{21}$ & $1-\lambda_{1}-\lambda_{3}-\lambda_{8}$ & $v_{21}+\max(-v_{21}, v_{13}+v_{34}-v_{43})$ & $\geq 0$ \\
$v_{22}$ & $\lambda_{1}+\lambda_{3}$ & $v_{22}-v_{11}-v_{12}+\min(v_{21}, v_{43}-v_{13}-v_{34})$  & $\min(s_{222},f_{8})-s_{114} \geq 0$ \\
$v_{23}$ & $\lambda_{8}-\lambda_{5}$ & $v_{23}-v_{13}-v_{34}-v_{41}$ & $f_{1}-s_{114} \geq 0$ \\
$v_{24}$ & $\lambda_{5}$ & 0 & 0 \\
$v_{31}$ & $1-\lambda_{2}-\lambda_{6}-\lambda_{7}$ & $v_{12}+v_{31}-\min(v_{42}-v_{24}, v_{12}+v_{31})$  & $\geq 0$ \\
$v_{32}$ & $\lambda_{2}+\lambda_{6}$ & $v_{32}-v_{12}-v_{24}-v_{41}$ & $f_{5}-s_{114} \geq 0$ \\
$v_{33}$ & $\lambda_{7}-\lambda_{4}$ & $v_{33}-v_{11}-v_{13}+\min(v_{42}-v_{12}-v_{24}, v_{31})$ & $\min(f_{4},s_{332})-s_{114} \geq 0$ \\
$v_{34}$ & $\lambda_{4}$ & 0 & 0 \\
$v_{41}$ & $1-\lambda_{1}-\lambda_{5}-\lambda_{7}$ & 0 & 0 \\
$v_{42}$ & $\lambda_{1}+\lambda_{5}$ & $v_{42}-v_{24}-\min(v_{42}-v_{24}, v_{12}+v_{31})$ & $\geq 0$ \\
$v_{43}$ & $\lambda_{7}-\lambda_{3}$ & $v_{43}-v_{13}-v_{34}+\max(-v_{21}, v_{13}+v_{34}-v_{43})$ & $\geq 0$ \\
$v_{44}$ & $\lambda_{3}$ & $\min(s_{423}, e_{2},1, s_{432})-s_{114}$ & $\geq 0$ \\
\hline
\end{tabular}
\caption{Demonstration of the conditions for a 6 element closeness measure from the third class, here $s_{114}$. The $\lambda_{i}$ values come from Table \ref{tab:svectorslambda3}.}
\label{tab:svectors3}
\end{center}
\end{table}

In Table \ref{tab:svectors3} the measures for comparison with the benchmark $s_{114}$ are as follows: \\
$1$ \\
$v_{13}+v_{21}+v_{24}+v_{34}+v_{41}+L_{1}+L_{3}$ \\
$v_{11}+v_{12}+v_{13}+v_{22}+v_{24}+v_{34}+v_{41}-L_{1}-L_{3}$ \\
$v_{11}+v_{12}+v_{23}+v_{24}$ \\
$v_{11}+2v_{12}+v_{13}+v_{24}+v_{31}+v_{34}+v_{41}-L_{1}$ \\
$v_{11}+v_{13}+v_{32}+v_{34}$ \\
$v_{24}+v_{33}+v_{34}+v_{41}+L_{1}$ \\
$v_{11}+v_{12}+v_{13}+v_{34}+v_{41}+v_{42}-L_{1}$ \\
$v_{24}+v_{41}+v_{43}+L_{1}+L_{3}$ \\
$v_{11}+v_{12}+v_{13}+v_{24}+v_{34}+v_{41}+v_{44}-L_{3}$ \\
Here $L_{1}=\min(v_{42}-v_{24}, v_{12}+v_{31})$, we set $L_{1}+L_{3}=\max(v_{11}+v_{12}-v_{21}, v_{11}+v_{12}+v_{13}+v_{34}-v_{43})$, which gives $L_{3}=\max(v_{11}+v_{12}-v_{21}, v_{11}+v_{12}+v_{13}+v_{34}-v_{43})-\min(v_{42}-v_{24}, v_{12}+v_{31})$, with $L_{7}=v_{11}+v_{12}+v_{13}+v_{34}-L_{1}$.

\newpage

\noindent\textsf{\Large \textbf{\underline{Part D}}: Insights into solutions}\vspace{0.5cm}

We show in Table~\ref{long-table} the full set of 120 vectors for the non-signaling subspace $f$'s, $g$'s, $t$'s and  $z$'s in Eq.~(\ref{S}) and the 128 vectors for the locality subspace $f$'s, $g$'s, $t$'s, $s$'s and $e$'s in Eq.~(\ref{Q}). See also \textbf{Part A}. The vectors are labeled (e.g. $f_1$ or $g_{11}$) in a way following the explicit construction in \textbf{Part C}, rather than the simpler one in \textbf{Part B}. 

\begin{center}
\begin{longtable}{||c| c| c| c| c||}
\caption{List of vectors in $\mathbb{Q}$ and $\mathbb{S}$.\\}\label{long-table}\\
\hline
 \textbf{ID} & \textbf{Vectors} & \textbf{Directions} & \textbf{Category}& \textbf{Calculation}\\ 
 \hline
 \endhead
 
 f1 & (1, 1, 0, 0, 0, 0, 1, 1, 0, 0, 0, 0, 0, 0, 0, 0) & LNNN & F & f1 \\
 \hline
 f2 & (0, 0, 1, 1, 1, 1, 0, 0, 0, 0, 0, 0, 0, 0, 0, 0) & RNNN & F & f2 \\
 \hline
 f3 & (0, 0, 0, 0, 0, 0, 0, 0, 1, 1, 0, 0, 0, 0, 1, 1) & NNLN & F & f3 \\
 \hline
 f4 & (0, 0, 0, 0, 0, 0, 0, 0, 0, 0, 1, 1, 1, 1, 0, 0) & NNRN & F & f4 \\
 \hline
 f5 & (1, 0, 1, 0, 0, 0, 0, 0, 0, 1, 0, 1, 0, 0, 0, 0) & NLNN & F & f5 \\
 \hline
 f6 & (0, 1, 0, 1, 0, 0, 0, 0, 1, 0, 1, 0, 0, 0, 0, 0) & NRNN & F & f6 \\
 \hline
 f7 & (0, 0, 0, 0, 1, 0, 1, 0, 0, 0, 0, 0, 0, 1, 0, 1) & NNNL & F & f7 \\
 \hline
 f8 & (0, 0, 0, 0, 0, 1, 0, 1, 0, 0, 0, 0, 1, 0, 1, 0) & NNNR & F & f8 \\
 \hline
 g11 & (1, 0, 0, 0, 0, 0, 1, 1, 0, 1, 0, 1, 0, 0, 0, 0) & LLNN & G & f1 + f5 - (1 - v14) \\
 \hline 
 g12 & (0, 1, 0, 0, 0, 0, 1, 1, 1, 0, 1, 0, 0, 0, 0, 0) & LRNN & G & f1 + f6 - (1 - v13) \\
 \hline
 g13 & (0, 0, 1, 0, 1, 1, 0, 0, 0, 1, 0, 1, 0, 0, 0, 0) & RLNN & G & f2 + f5 - (1 - v12) \\
 \hline
 g14 & (0, 0, 0, 1, 1, 1, 0, 0, 1, 0, 1, 0, 0, 0, 0, 0) & RRNN & G & f2 + f6 - (1 - v11) \\
 \hline
 g21 & (0, 0, 1, 1, 1, 0, 0, 0, 0, 0, 0, 0, 0, 1, 0, 1) & RNNL & G & f2 + f7 - (1 - v24) \\
 \hline
 g22 & (0, 0, 1, 1, 0, 1, 0, 0, 0, 0, 0, 0, 1, 0, 1, 0) & RNNR & G & f2 + f8 - (1 - v23) \\
 \hline
 g23 & (1, 1, 0, 0, 0, 0, 1, 0, 0, 0, 0, 0, 0, 1, 0, 1) & LNNL & G & f1 + f7 - (1 - v22) \\
 \hline
 g24 & (1, 1, 0, 0, 0, 0, 0, 1, 0, 0, 0, 0, 1, 0, 1, 0) & LNNR & G & f1 + f8 - (1 - v21) \\
 \hline
 g31 & (0, 1, 0, 1, 0, 0, 0, 0, 1, 0, 0, 0, 0, 0, 1, 1) & NRLN & G & f3 + f6 - (1 - v34) \\
 \hline
 g32 & (1, 0, 1, 0, 0, 0, 0, 0, 0, 1, 0, 0, 0, 0, 1, 1) & NLLN & G & f3 + f5 - (1 - v33) \\
 \hline
 g33 & (0, 1, 0, 1, 0, 0, 0, 0, 0, 0, 1, 0, 1, 1, 0, 0) & NRRN & G & f4 + f6 - (1 - v32) \\
 \hline
 g34 & (1, 0, 1, 0, 0, 0, 0, 0, 0, 0, 0, 1, 1, 1, 0, 0) & NLRN & G & f4 + f5 - (1 - v31) \\
 \hline
 g41 & (0, 0, 0, 0, 0, 1, 0, 1, 0, 0, 1, 1, 1, 0, 0, 0) & NNRR & G & f4 + f8 - (1 - v44) \\
 \hline
 g42 & (0, 0, 0, 0, 1, 0, 1, 0, 0, 0, 1, 1, 0, 1, 0, 0) & NNRL & G & f4 + f7 - (1 - v43) \\
 \hline
 g43 & (0, 0, 0, 0, 0, 1, 0, 1, 1, 1, 0, 0, 0, 0, 1, 0) & NNLR & G & f3 + f8 - (1 - v42) \\
 \hline
 g44 & (0, 0, 0, 0, 1, 0, 1, 0, 1, 1, 0, 0, 0, 0, 0, 1) & NNLL & G & f3 + f7 - (1 - v41) \\
 \hline
 t11 & (1, 1, 0, 0, 0, 0, 1, 0, 1, 1, 0, 0, 0, 0, 0, 1) & LNLL & T1 & g23 + g44 – f7 \\
 \hline
 t12 & (1, 1, 0, 0, 0, 0, 0, 1, 1, 1, 0, 0, 0, 0, 1, 0) & LNLR & T1 & g24 + g43 – f8 \\
 \hline
 t13 & (1, 0, 0, 0, 0, 0, 1, 1, 0, 1, 0, 0, 0, 0, 1, 1) & LLLN & T1 & g11 + g32 – f5 \\
 \hline
 t14 & (0, 1, 0, 0, 0, 0, 1, 1, 1, 0, 0, 0, 0, 0, 1, 1) & LRLN & T1 & g12 + g31 – f6 \\
 \hline
 t21 & (1, 1, 0, 0, 0, 0, 1, 0, 0, 0, 1, 1, 0, 1, 0, 0) & LNRL & T2 & g23 + g42 – f7 \\
 \hline
 t22 & (1, 1, 0, 0, 0, 0, 0, 1, 0, 0, 1, 1, 1, 0, 0, 0) & LNRR & T2 & g24 + g41 – f8 \\
 \hline
 t23 & (1, 0, 0, 0, 0, 0, 1, 1, 0, 0, 0, 1, 1, 1, 0, 0) & LLRN & T2 & g11 + g34 – f5 \\
 \hline
 t24 & (0, 1, 0, 0, 0, 0, 1, 1, 0, 0, 1, 0, 1, 1, 0, 0) & LRRN & T2 & g12 + g33 – f6 \\
 \hline
 t31 & (0, 0, 1, 1, 1, 0, 0, 0, 1, 1, 0, 0, 0, 0, 0, 1) & RNLL & T2 & g21 + g44 – f7 \\
 \hline
 t32 & (0, 0, 1, 1, 0, 1, 0, 0, 1, 1, 0, 0, 0, 0, 1, 0) & RNLR & T2 & g22 + g43 – f8 \\
 \hline
 t33 & (0, 0, 1, 0, 1, 1, 0, 0, 0, 1, 0, 0, 0, 0, 1, 1) & RLLN & T2 & g13 + g32 – f5 \\
 \hline
 t34 & (0, 0, 0, 1, 1, 1, 0, 0, 1, 0, 0, 0, 0, 0, 1, 1) & RRLN & T2 & g14 + g31 – f6 \\
 \hline
 t41 & (0, 0, 1, 1, 1, 0, 0, 0, 0, 0, 1, 1, 0, 1, 0, 0) & RNRL & T1 & g21 + g42 – f7 \\
 \hline
 t42 & (0, 0, 1, 1, 0, 1, 0, 0, 0, 0, 1, 1, 1, 0, 0, 0) & RNRR & T1 & g22 + g41 – f8 \\
 \hline
 t43 & (0, 0, 1, 0, 1, 1, 0, 0, 0, 0, 0, 1, 1, 1, 0, 0) & RLRN & T1 & g13 + g34 – f5 \\
 \hline
 t44 & (0, 0, 0, 1, 1, 1, 0, 0, 0, 0, 1, 0, 1, 1, 0, 0) & RRRN & T1 & g14 + g33 – f6 \\
 \hline
 t51 & (1, 0, 1, 0, 1, 0, 1, 0, 0, 1, 0, 0, 0, 0, 0, 1) & NLLL & T1 & g32 + g44 – f3 \\
 \hline
 t52 & (1, 0, 1, 0, 1, 0, 1, 0, 0, 0, 0, 1, 0, 1, 0, 0) & NLRL & T1 & g34 + g42 – f4 \\
 \hline
 t53 & (1, 0, 0, 0, 0, 0, 1, 0, 0, 1, 0, 1, 0, 1, 0, 1) & LLNL & T1 & g11 + g23 – f1 \\
 \hline
 t54 & (0, 0, 1, 0, 1, 0, 0, 0, 0, 1, 0, 1, 0, 1, 0, 1) & RLNL & T1 & g13 + g21 – f2 \\
 \hline
 t61 & (1, 0, 1, 0, 0, 1, 0, 1, 0, 1, 0, 0, 0, 0, 1, 0) & NLLR & T2 & g32 + g43 – f3 \\
 \hline
 t62 & (1, 0, 1, 0, 0, 1, 0, 1, 0, 0, 0, 1, 1, 0, 0, 0) & NLRR & T2 & g34 + g41 – f4 \\
 \hline
 t63 & (1, 0, 0, 0, 0, 0, 0, 1, 0, 1, 0, 1, 1, 0, 1, 0) & LLNR & T2 & g11 + g24 – f1 \\
 \hline
 t64 & (0, 0, 1, 0, 0, 1, 0, 0, 0, 1, 0, 1, 1, 0, 1, 0) & RLNR & T2 & g13 + g22 – f2 \\
 \hline
 t71 & (0, 1, 0, 1, 1, 0, 1, 0, 1, 0, 0, 0, 0, 0, 0, 1) & NRLL & T2 & g31 + g44 – f3 \\
 \hline
 t72 & (0, 1, 0, 1, 1, 0, 1, 0, 0, 0, 1, 0, 0, 1, 0, 0) & NRRL & T2 & g33 + g42 – f4 \\
 \hline
 t73 & (0, 1, 0, 0, 0, 0, 1, 0, 1, 0, 1, 0, 0, 1, 0, 1) & LRNL & T2 & g12 + g23 – f1 \\
 \hline
 t74 & (0, 0, 0, 1, 1, 0, 0, 0, 1, 0, 1, 0, 0, 1, 0, 1) & RRNL & T2 & g14 + g21 – f2 \\
 \hline
 t81 & (0, 1, 0, 1, 0, 1, 0, 1, 1, 0, 0, 0, 0, 0, 1, 0) & NRLR & T1 & g31 + g43 – f3 \\
 \hline
 t82 & (0, 1, 0, 1, 0, 1, 0, 1, 0, 0, 1, 0, 1, 0, 0, 0) & NRRR & T1 & g33 + g41 – f4 \\
 \hline
 t83 & (0, 1, 0, 0, 0, 0, 0, 1, 1, 0, 1, 0, 1, 0, 1, 0) & LRNR & T1 & g12 + g24 – f1 \\
 \hline
 t84 & (0, 0, 0, 1, 0, 1, 0, 0, 1, 0, 1, 0, 1, 0, 1, 0) & RRNR & T1 & g14 + g22 – f2 \\
 \hline
 z111 & (2, 1, 1, 0, 0, 0, 1, 0, 0, 1, 0, 0, 0, 0, 0, 1) & LLLL (3) & Z3 & t11 + t51 – g44 \\
 \hline
 z112 & (2, 1, 1, 0, 0, 0, 1, 0, 0, 0, 0, 1, 0, 1, 0, 0) & LLRL (3) & Z1 & t21 + t52 – g42 \\
 \hline
 z113 & (2, 1, 1, 0, 0, 0, 0, 1, 0, 1, 0, 0, 0, 0, 1, 0) & LLLR (3) & Z1 & t12 + t61 – g43 \\
 \hline
 z114 & (2, 1, 1, 0, 0, 0, 0, 1, 0, 0, 0, 1, 1, 0, 0, 0) & LLRR (3) & Z2 & t22 + t62 – g41 \\
 \hline
 z121 & (1, 2, 0, 1, 0, 0, 1, 0, 1, 0, 0, 0, 0, 0, 0, 1) & LRLL (3) & Z1 & t11 + t71 – g44 \\
 \hline
 z122 & (1, 2, 0, 1, 0, 0, 1, 0, 0, 0, 1, 0, 0, 1, 0, 0) & LRRL (3) & Z2 & t21 + t72 – g42 \\
 \hline
 z123 & (1, 2, 0, 1, 0, 0, 0, 1, 1, 0, 0, 0, 0, 0, 1, 0) & LRLR (3) & Z3 & t12 + t81 – g43 \\
 \hline
 z124 & (1, 2, 0, 1, 0, 0, 0, 1, 0, 0, 1, 0, 1, 0, 0, 0) & LRRR (3) & Z1 & t22 + t82 – g41 \\
 \hline
 z131 & (1, 0, 2, 1, 1, 0, 0, 0, 0, 1, 0, 0, 0, 0, 0, 1) & RLLL (3) & Z1 & t31 + t51 – g44 \\
 \hline
 z132 & (1, 0, 2, 1, 1, 0, 0, 0, 0, 0, 0, 1, 0, 1, 0, 0) & RLRL (3) & Z3 & t41 + t52 – g42 \\
 \hline
 z133 & (1, 0, 2, 1, 0, 1, 0, 0, 0, 1, 0, 0, 0, 0, 1, 0) & RLLR (3) & Z2 & t32 + t61 – g43 \\
 \hline
 z134 & (1, 0, 2, 1, 0, 1, 0, 0, 0, 0, 0, 1, 1, 0, 0, 0) & RLRR (3) & Z1 & t42 + t62 – g41 \\
 \hline
 z141 & (0, 1, 1, 2, 1, 0, 0, 0, 1, 0, 0, 0, 0, 0, 0, 1) & RRLL (3) & Z2 & t31 + t71 – g44 \\
 \hline
 z142 & (0, 1, 1, 2, 1, 0, 0, 0, 0, 0, 1, 0, 0, 1, 0, 0) & RRRL (3) & Z1 & t41 + t72 – g42 \\
 \hline
 z143 & (0, 1, 1, 2, 0, 1, 0, 0, 1, 0, 0, 0, 0, 0, 1, 0) & RRLR (3) & Z1 & t32 + t81 – g43 \\
 \hline
 z144 & (0, 1, 1, 2, 0, 1, 0, 0, 0, 0, 1, 0, 1, 0, 0, 0) & RRRR (3) & Z3 & t42 + t82 – g41 \\
 \hline
 z211 & (0, 0, 1, 0, 2, 1, 1, 0, 0, 1, 0, 0, 0, 0, 0, 1) & RLLL (2) & Z1 & t33 + t51 – g32 \\
 \hline
 z212 & (0, 0, 1, 0, 2, 1, 1, 0, 0, 0, 0, 1, 0, 1, 0, 0) & RLRL (2) & Z3 & t43 + t52 – g34 \\
 \hline
 z213 & (0, 0, 0, 1, 2, 1, 1, 0, 1, 0, 0, 0, 0, 0, 0, 1) & RRLL (2) & Z2 & t34 + t71 – g31 \\
 \hline
 z214 & (0, 0, 0, 1, 2, 1, 1, 0, 0, 0, 1, 0, 0, 1, 0, 0) & RRRL (2) & Z1 & t44 + t72 – g33 \\
 \hline
 z221 & (0, 0, 1, 0, 1, 2, 0, 1, 0, 1, 0, 0, 0, 0, 1, 0) & RLLR (2) & Z2 & t33 + t61 – g32 \\
 \hline
 z222 & (0, 0, 1, 0, 1, 2, 0, 1, 0, 0, 0, 1, 1, 0, 0, 0) & RLRR (2) & Z1 & t43 + t62 – g34 \\
 \hline
 z223 & (0, 0, 0, 1, 1, 2, 0, 1, 1, 0, 0, 0, 0, 0, 1, 0) & RRLR (2) & Z1 & t34 + t81 – g31 \\
 \hline
 z224 & (0, 0, 0, 1, 1, 2, 0, 1, 0, 0, 1, 0, 1, 0, 0, 0) & RRRR (2) & Z3 & t44 + t82 – g33 \\
 \hline
 z231 & (1, 0, 0, 0, 1, 0, 2, 1, 0, 1, 0, 0, 0, 0, 0, 1) & LLLL (2) & Z3 & t13 + t51 – g32 \\
 \hline
 z232 & (1, 0, 0, 0, 1, 0, 2, 1, 0, 0, 0, 1, 0, 1, 0, 0) & LLRL (2) & Z1 & t23 + t52 – g34 \\
 \hline
 z233 & (0, 1, 0, 0, 1, 0, 2, 1, 1, 0, 0, 0, 0, 0, 0, 1) & LRLL (2) & Z1 & t14 + t71 – g31 \\
 \hline
 z234 & (0, 1, 0, 0, 1, 0, 2, 1, 0, 0, 1, 0, 0, 1, 0, 0) & LRRL (2) & Z2 & t24 + t72 – g33 \\
 \hline
 z241 & (1, 0, 0, 0, 0, 1, 1, 2, 0, 1, 0, 0, 0, 0, 1, 0) & LLLR (2) & Z1 & t13 + t61 – g32 \\
 \hline
 z242 & (1, 0, 0, 0, 0, 1, 1, 2, 0, 0, 0, 1, 1, 0, 0, 0) & LLRR (2) & Z2 & t23 + t62 – g34 \\
 \hline
 z243 & (0, 1, 0, 0, 0, 1, 1, 2, 1, 0, 0, 0, 0, 0, 1, 0) & LRLR (2) & Z3 & t14 + t81 – g31 \\
 \hline
 z244 & (0, 1, 0, 0, 0, 1, 1, 2, 0, 0, 1, 0, 1, 0, 0, 0) & LRRR (2) & Z1 & t24 + t82 – g33 \\
 \hline
 z311 & (0, 1, 0, 0, 0, 0, 1, 0, 2, 1, 1, 0, 0, 0, 0, 1) & LRLL (4) & Z1 & t11 + t73 – g23 \\
 \hline
 z312 & (0, 1, 0, 0, 0, 0, 0, 1, 2, 1, 1, 0, 0, 0, 1, 0) & LRLR (4) & Z3 & t12 + t83 – g24 \\
 \hline
 z313 & (0, 0, 0, 1, 1, 0, 0, 0, 2, 1, 1, 0, 0, 0, 0, 1) & RRLL (4) & Z2 & t31 + t74 – g21 \\
 \hline
 z314 & (0, 0, 0, 1, 0, 1, 0, 0, 2, 1, 1, 0, 0, 0, 1, 0) & RRLR (4) & Z1 & t32 + t84 – g22 \\
 \hline
 z321 & (1, 0, 0, 0, 0, 0, 1, 0, 1, 2, 0, 1, 0, 0, 0, 1) & LLLL (4) & Z3 & t11 + t53 – g23 \\
 \hline
 z322 & (1, 0, 0, 0, 0, 0, 0, 1, 1, 2, 0, 1, 0, 0, 1, 0) & LLLR (4) & Z1 & t12 + t63 – g24 \\
 \hline
 z323 & (0, 0, 1, 0, 1, 0, 0, 0, 1, 2, 0, 1, 0, 0, 0, 1) & RLLL (4) & Z1 & t31 + t54 – g21 \\
 \hline
 z324 & (0, 0, 1, 0, 0, 1, 0, 0, 1, 2, 0, 1, 0, 0, 1, 0) & RLLR (4) & Z2 & t32 + t64 – g22 \\
 \hline
 z331 & (0, 1, 0, 0, 0, 0, 1, 0, 1, 0, 2, 1, 0, 1, 0, 0) & LRRL (4) & Z2 & t21 + t73 – g23 \\
 \hline
 z332 & (0, 1, 0, 0, 0, 0, 0, 1, 1, 0, 2, 1, 1, 0, 0, 0) & LRRR (4) & Z1 & t22 + t83 – g24 \\
 \hline
 z333 & (0, 0, 0, 1, 1, 0, 0, 0, 1, 0, 2, 1, 0, 1, 0, 0) & RRRL (4) & Z1 & t41 + t74 – g21 \\
 \hline
 z334 & (0, 0, 0, 1, 0, 1, 0, 0, 1, 0, 2, 1, 1, 0, 0, 0) & RRRR (4) & Z3 & t42 + t84 – g22 \\
 \hline
 z341 & (1, 0, 0, 0, 0, 0, 1, 0, 0, 1, 1, 2, 0, 1, 0, 0) & LLRL (4) & Z1 & t21 + t53 – g23 \\
 \hline
 z342 & (1, 0, 0, 0, 0, 0, 0, 1, 0, 1, 1, 2, 1, 0, 0, 0) & LLRR (4) & Z2 & t22 + t63 – g24 \\
 \hline
 z343 & (0, 0, 1, 0, 1, 0, 0, 0, 0, 1, 1, 2, 0, 1, 0, 0) & RLRL (4) & Z3 & t41 + t54 – g21 \\
 \hline
 z344 & (0, 0, 1, 0, 0, 1, 0, 0, 0, 1, 1, 2, 1, 0, 0, 0) & RLRR (4) & Z1 & t42 + t64 – g22 \\
 \hline
 z411 & (1, 0, 0, 0, 0, 0, 0, 1, 0, 0, 0, 1, 2, 1, 1, 0) & LLRR (1) & Z2 & t23 + t63 – g11 \\
 \hline
 z412 & (0, 1, 0, 0, 0, 0, 0, 1, 0, 0, 1, 0, 2, 1, 1, 0) & LRRR (1) & Z1 & t24 + t83 – g12 \\
 \hline
 z413 & (0, 0, 1, 0, 0, 1, 0, 0, 0, 0, 0, 1, 2, 1, 1, 0) & RLRR (1) & Z1 & t43 + t64 – g13 \\
 \hline
 z414 & (0, 0, 0, 1, 0, 1, 0, 0, 0, 0, 1, 0, 2, 1, 1, 0) & RRRR (1) & Z3 & t44 + t84 – g14 \\
 \hline
 z421 & (1, 0, 0, 0, 0, 0, 1, 0, 0, 0, 0, 1, 1, 2, 0, 1) & LLRL (1) & Z1 & t23 + t53 – g11 \\
 \hline
 z422 & (0, 1, 0, 0, 0, 0, 1, 0, 0, 0, 1, 0, 1, 2, 0, 1) & LRRL (1) & Z2 & t24 + t73 – g12 \\
 \hline
 z423 & (0, 0, 1, 0, 1, 0, 0, 0, 0, 0, 0, 1, 1, 2, 0, 1) & RLRL (1) & Z3 & t43 + t54 – g13 \\
 \hline
 z424 & (0, 0, 0, 1, 1, 0, 0, 0, 0, 0, 1, 0, 1, 2, 0, 1) & RRRL (1) & Z1 & t44 + t74 – g14 \\
 \hline
 z431 & (1, 0, 0, 0, 0, 0, 0, 1, 0, 1, 0, 0, 1, 0, 2, 1) & LLLR (1) & Z1 & t13 + t63 – g11 \\
 \hline
 z432 & (0, 1, 0, 0, 0, 0, 0, 1, 1, 0, 0, 0, 1, 0, 2, 1) & LRLR (1) & Z3 & t14 + t83 – g12 \\
 \hline
 z433 & (0, 0, 1, 0, 0, 1, 0, 0, 0, 1, 0, 0, 1, 0, 2, 1) & RLLR (1) & Z2 & t33 + t64 – g13 \\
 \hline
 z434 & (0, 0, 0, 1, 0, 1, 0, 0, 1, 0, 0, 0, 1, 0, 2, 1) & RRLR (1) & Z1 & t34 + t84 – g14 \\
 \hline
 z441 & (1, 0, 0, 0, 0, 0, 1, 0, 0, 1, 0, 0, 0, 1, 1, 2) & LLLL (1) & Z3 & t13 + t53 – g11 \\
 \hline
 z442 & (0, 1, 0, 0, 0, 0, 1, 0, 1, 0, 0, 0, 0, 1, 1, 2) & LRLL (1) & Z1 & t14 + t73 – g12 \\
 \hline
 z443 & (0, 0, 1, 0, 1, 0, 0, 0, 0, 1, 0, 0, 0, 1, 1, 2) & RLLL (1) & Z1 & t33 + t54 – g13 \\
 \hline
 z444 & (0, 0, 0, 1, 1, 0, 0, 0, 1, 0, 0, 0, 0, 1, 1, 2) & RRLL (1) & Z2 & t34 + t74 – g14 \\
 \hline
 s111 & (1, 1, 1, 0, 0, 0, 1, 0, 0, 1, 0, 0, 0, 0, 0, 1) & LLLL (3) & S3 & z111 – v11 \\
 \hline
 s112 & (1, 1, 1, 0, 0, 0, 1, 0, 0, 0, 0, 1, 0, 1, 0, 0) & LLRL (3) & S1 & z112 – v11 \\
 \hline
 s113 & (1, 1, 1, 0, 0, 0, 0, 1, 0, 1, 0, 0, 0, 0, 1, 0) & LLLR (3) & S1 & z113 – v11 \\
 \hline
 s114 & (1, 1, 1, 0, 0, 0, 0, 1, 0, 0, 0, 1, 1, 0, 0, 0) & LLRR (3) & S2 & z114 – v11 \\
 \hline
 s121 & (1, 1, 0, 1, 0, 0, 1, 0, 1, 0, 0, 0, 0, 0, 0, 1) & LRLL (3) & S1 & z121 – v12 \\
 \hline
 s122 & (1, 1, 0, 1, 0, 0, 1, 0, 0, 0, 1, 0, 0, 1, 0, 0) & LRRL (3) & S2 & z122 – v12 \\
 \hline
 s123 & (1, 1, 0, 1, 0, 0, 0, 1, 1, 0, 0, 0, 0, 0, 1, 0) & LRLR (3) & S3 & z123 - v12 \\
 \hline
 s124 & (1, 1, 0, 1, 0, 0, 0, 1, 0, 0, 1, 0, 1, 0, 0, 0) & LRRR (3) & S1 & z124 – v12 \\
 \hline
 s131 & (1, 0, 1, 1, 1, 0, 0, 0, 0, 1, 0, 0, 0, 0, 0, 1) & RLLL (3) & S1 & z131 – v13 \\
 \hline
 s132 & (1, 0, 1, 1, 1, 0, 0, 0, 0, 0, 0, 1, 0, 1, 0, 0) & RLRL (3) & S3 & z132 – v13 \\
 \hline
 s133 & (1, 0, 1, 1, 0, 1, 0, 0, 0, 1, 0, 0, 0, 0, 1, 0) & RLLR (3) & S2 & z133 – v13 \\
 \hline
 s134 & (1, 0, 1, 1, 0, 1, 0, 0, 0, 0, 0, 1, 1, 0, 0, 0) & RLRR (3) & S1 & z134 – v13 \\
 \hline
 s141 & (0, 1, 1, 1, 1, 0, 0, 0, 1, 0, 0, 0, 0, 0, 0, 1) & RRLL (3) & S2 & z141 – v14 \\
 \hline
 s142 & (0, 1, 1, 1, 1, 0, 0, 0, 0, 0, 1, 0, 0, 1, 0, 0) & RRRL (3) & S1 & z142 – v14 \\
 \hline
 s143 & (0, 1, 1, 1, 0, 1, 0, 0, 1, 0, 0, 0, 0, 0, 1, 0) & RRLR (3) & S1 & z143 – v14 \\
 \hline
 s144 & (0, 1, 1, 1, 0, 1, 0, 0, 0, 0, 1, 0, 1, 0, 0, 0) & RRRR (3) & S3 & z144 – v14 \\
 \hline
 s211 & (0, 0, 1, 0, 1, 1, 1, 0, 0, 1, 0, 0, 0, 0, 0, 1) & RLLL (2) & S1 & z211 – v21 \\
 \hline
 s212 & (0, 0, 1, 0, 1, 1, 1, 0, 0, 0, 0, 1, 0, 1, 0, 0) & RLRL (2) & S3 & z212 – v21 \\
 \hline
 s213 & (0, 0, 0, 1, 1, 1, 1, 0, 1, 0, 0, 0, 0, 0, 0, 1) & RRLL (2) & S2 & z213 – v21 \\
 \hline
 s214 & (0, 0, 0, 1, 1, 1, 1, 0, 0, 0, 1, 0, 0, 1, 0, 0) & RRRL (2) & S1 & z214 – v21 \\
 \hline
 s221 & (0, 0, 1, 0, 1, 1, 0, 1, 0, 1, 0, 0, 0, 0, 1, 0) & RLLR (2) & S2 & z221 – v22 \\
 \hline
 s222 & (0, 0, 1, 0, 1, 1, 0, 1, 0, 0, 0, 1, 1, 0, 0, 0) & RLRR (2) & S1 & z222 – v22 \\
 \hline
 s223 & (0, 0, 0, 1, 1, 1, 0, 1, 1, 0, 0, 0, 0, 0, 1, 0) & RRLR (2) & S1 & z223 – v22 \\
 \hline
 s224 & (0, 0, 0, 1, 1, 1, 0, 1, 0, 0, 1, 0, 1, 0, 0, 0) & RRRR (2) & S3 & z224 – v22 \\
 \hline
 s231 & (1, 0, 0, 0, 1, 0, 1, 1, 0, 1, 0, 0, 0, 0, 0, 1) & LLLL (2) & S3 & z231 – v23 \\
 \hline
 s232 & (1, 0, 0, 0, 1, 0, 1, 1, 0, 0, 0, 1, 0, 1, 0, 0) & LLRL (2) & S1 & z232 – v23 \\
 \hline
 s233 & (0, 1, 0, 0, 1, 0, 1, 1, 1, 0, 0, 0, 0, 0, 0, 1) & LRLL (2) & S1 & z233 – v23 \\
 \hline
 s234 & (0, 1, 0, 0, 1, 0, 1, 1, 0, 0, 1, 0, 0, 1, 0, 0) & LRRL (2) & S2 & z234 – v23 \\
 \hline
 s241 & (1, 0, 0, 0, 0, 1, 1, 1, 0, 1, 0, 0, 0, 0, 1, 0) & LLLR (2) & S1 & z241 – v24 \\
 \hline
 s242 & (1, 0, 0, 0, 0, 1, 1, 1, 0, 0, 0, 1, 1, 0, 0, 0) & LLRR (2) & S2 & z242 – v24 \\
 \hline
 s243 & (0, 1, 0, 0, 0, 1, 1, 1, 1, 0, 0, 0, 0, 0, 1, 0) & LRLR (2) & S3 & z243 – v24 \\
 \hline
 s244 & (0, 1, 0, 0, 0, 1, 1, 1, 0, 0, 1, 0, 1, 0, 0, 0) & LRRR (2) & S1 & z244 – v24 \\
 \hline
 s311 & (0, 1, 0, 0, 0, 0, 1, 0, 1, 1, 1, 0, 0, 0, 0, 1) & LRLL (4) & S1 & z311 – v31 \\
 \hline
 s312 & (0, 1, 0, 0, 0, 0, 0, 1, 1, 1, 1, 0, 0, 0, 1, 0) & LRLR (4) & S3 & z312 – v31 \\
 \hline
 s313 & (0, 0, 0, 1, 1, 0, 0, 0, 1, 1, 1, 0, 0, 0, 0, 1) & RRLL (4) & S2 & z313 – v31 \\
 \hline
 s314 & (0, 0, 0, 1, 0, 1, 0, 0, 1, 1, 1, 0, 0, 0, 1, 0) & RRLR (4) & S1 & z314 – v31 \\
 \hline
 s321 & (1, 0, 0, 0, 0, 0, 1, 0, 1, 1, 0, 1, 0, 0, 0, 1) & LLLL (4) & S3 & z321 – v32 \\
 \hline
 s322 & (1, 0, 0, 0, 0, 0, 0, 1, 1, 1, 0, 1, 0, 0, 1, 0) & LLLR (4) & S1 & z322 – v32 \\
 \hline
 s323 & (0, 0, 1, 0, 1, 0, 0, 0, 1, 1, 0, 1, 0, 0, 0, 1) & RLLL (4) & S1 & z323 – v32 \\
 \hline
 s324 & (0, 0, 1, 0, 0, 1, 0, 0, 1, 1, 0, 1, 0, 0, 1, 0) & RLLR (4) & S2 & z324 – v32 \\
 \hline
 s331 & (0, 1, 0, 0, 0, 0, 1, 0, 1, 0, 1, 1, 0, 1, 0, 0) & LRRL (4) & S2 & z331 – v33 \\
 \hline
 s332 & (0, 1, 0, 0, 0, 0, 0, 1, 1, 0, 1, 1, 1, 0, 0, 0) & LRRR (4) & S1 & z332 – v33 \\
 \hline
 s333 & (0, 0, 0, 1, 1, 0, 0, 0, 1, 0, 1, 1, 0, 1, 0, 0) & RRRL (4) & S1 & z333 – v33 \\
 \hline
 s334 & (0, 0, 0, 1, 0, 1, 0, 0, 1, 0, 1, 1, 1, 0, 0, 0) & RRRR (4) & S3 & z334 – v33 \\
 \hline
 s341 & (1, 0, 0, 0, 0, 0, 1, 0, 0, 1, 1, 1, 0, 1, 0, 0) & LLRL (4) & S1 & z341 – v34 \\
 \hline
 s342 & (1, 0, 0, 0, 0, 0, 0, 1, 0, 1, 1, 1, 1, 0, 0, 0) & LLRR (4) & S2 & z342 – v34 \\
 \hline
 s343 & (0, 0, 1, 0, 1, 0, 0, 0, 0, 1, 1, 1, 0, 1, 0, 0) & RLRL (4) & S3 & z343 – v34 \\
 \hline
 s344 & (0, 0, 1, 0, 0, 1, 0, 0, 0, 1, 1, 1, 1, 0, 0, 0) & RLRR (4) & S1 & z344 – v34 \\
 \hline
 s411 & (1, 0, 0, 0, 0, 0, 0, 1, 0, 0, 0, 1, 1, 1, 1, 0) & LLRR (1) & S2 & z411 – v41 \\
 \hline
 s412 & (0, 1, 0, 0, 0, 0, 0, 1, 0, 0, 1, 0, 1, 1, 1, 0) & LRRR (1) & S1 & z412 – v41 \\
 \hline
 s413 & (0, 0, 1, 0, 0, 1, 0, 0, 0, 0, 0, 1, 1, 1, 1, 0) & RLRR (1) & S1 & z413 – v41 \\
 \hline
 s414 & (0, 0, 0, 1, 0, 1, 0, 0, 0, 0, 1, 0, 1, 1, 1, 0) & RRRR (1) & S3 & z414 – v41 \\
 \hline
 s421 & (1, 0, 0, 0, 0, 0, 1, 0, 0, 0, 0, 1, 1, 1, 0, 1) & LLRL (1) & S1 & z421 – v42 \\
 \hline
 s422 & (0, 1, 0, 0, 0, 0, 1, 0, 0, 0, 1, 0, 1, 1, 0, 1) & LRRL (1) & S2 & z422 – v42 \\
 \hline
 s423 & (0, 0, 1, 0, 1, 0, 0, 0, 0, 0, 0, 1, 1, 1, 0, 1) & RLRL (1) & S3 & z423 – v42 \\
 \hline
 s424 & (0, 0, 0, 1, 1, 0, 0, 0, 0, 0, 1, 0, 1, 1, 0, 1) & RRRL (1) & S1 & z424 – v42 \\
 \hline
 s431 & (1, 0, 0, 0, 0, 0, 0, 1, 0, 1, 0, 0, 1, 0, 1, 1) & LLLR (1) & S1 & z431 – v43 \\
 \hline
 s432 & (0, 1, 0, 0, 0, 0, 0, 1, 1, 0, 0, 0, 1, 0, 1, 1) & LRLR (1) & S3 & z432 – v43 \\
 \hline
 s433 & (0, 0, 1, 0, 0, 1, 0, 0, 0, 1, 0, 0, 1, 0, 1, 1) & RLLR (1) & S2 & z433 – v43 \\
 \hline
 s434 & (0, 0, 0, 1, 0, 1, 0, 0, 1, 0, 0, 0, 1, 0, 1, 1) & RRLR (1) & S1 & z434 – v43 \\
 \hline
 s441 & (1, 0, 0, 0, 0, 0, 1, 0, 0, 1, 0, 0, 0, 1, 1, 1) & LLLL (1) & S3 & z441 – v44 \\
 \hline
 s442 & (0, 1, 0, 0, 0, 0, 1, 0, 1, 0, 0, 0, 0, 1, 1, 1) & LRLL (1) & S1 & z442 – v44 \\
 \hline
 s443 & (0, 0, 1, 0, 1, 0, 0, 0, 0, 1, 0, 0, 0, 1, 1, 1) & RLLL (1) & S1 & z443 – v44 \\
 \hline
 s444 & (0, 0, 0, 1, 1, 0, 0, 0, 1, 0, 0, 0, 0, 1, 1, 1) & RRLL (1) & S2 & z444 – v44 \\
 \hline
 e1 & (1, 0, 0, 1, 0, 1, 1, 0, 0, 1, 1, 0, 0, 1, 1, 0) & - & E & s122 + s133 – 1 \\
 \hline
 &  &  &  & s214 + s241 – 1 \\
 \hline
 &  &  &  & s314 + s341 – 1 \\
 \hline
 &  &  &  & s414 + s441 – 1 \\
 \hline
 e2 & (0, 1, 1, 0, 1, 0, 0, 1, 1, 0, 0, 1, 1, 0, 0, 1) & - & E & s114 + s141 – 1 \\
 \hline
 &  &  &  & s222 + s233 – 1 \\
 \hline
 &  &  &  & s323 + s332 – 1 \\
 \hline
 &  &  &  & s423 + s432 – 1 \\
 \hline
 e3 & (0, 1, 1, 0, 1, 0, 0, 1, 0, 1, 1, 0, 0, 1, 1, 0) & - & E & s113 + s142 – 1 \\
 \hline
 &  &  &  & s221 + s234 – 1 \\
 \hline
 &  &  &  & s312 + s343 – 1 \\
 \hline
 &  &  &  & s412 + s443 – 1 \\
 \hline
 e4 & (1, 0, 0, 1, 0, 1, 1, 0, 1, 0, 0, 1, 1, 0, 0, 1) & - & E & s121 + s134 – 1 \\
 \hline
 &  &  &  & s213 + s242 – 1 \\
 \hline
 &  &  &  & s321 + s334 – 1 \\
 \hline
 &  &  &  & s421 + s434 – 1 \\
 \hline
 e5 & (0, 1, 1, 0, 0, 1, 1, 0, 1, 0, 0, 1, 0, 1, 1, 0) & - & E & s112 + s143 – 1 \\
 \hline
 &  &  &  & s212 + s243 – 1 \\
 \hline
 &  &  &  & s331 + s324 – 1 \\
 \hline
 &  &  &  & s413 + s442 – 1 \\
 \hline
 e6 & (1, 0, 0, 1, 1, 0, 0, 1, 0, 1, 1, 0, 1, 0, 0, 1) & - & E & s124 + s131 – 1 \\
 \hline
 &  &  &  & s224 + s231 – 1 \\
 \hline
 &  &  &  & s313 + s342 – 1 \\
 \hline
 &  &  &  & s424 + s431 – 1 \\
 \hline
 e7 & (0, 1, 1, 0, 0, 1, 1, 0, 0, 1, 1, 0, 1, 0, 0, 1) & - & E & s111 + s144 – 1 \\
 \hline
 &  &  &  & s211 + s244 – 1 \\
 \hline
 &  &  &  & s311 + s344 – 1 \\
 \hline
 &  &  &  & s422 + s433 – 1 \\
 \hline
 e8 & (1, 0, 0, 1, 1, 0, 0, 1, 1, 0, 0, 1, 0, 1, 1, 0) & - & E & s123 + s132 – 1 \\
 \hline
 &  &  &  & s223 + s232 – 1 \\
 \hline
 &  &  &  & s322 + s333 – 1 \\
 \hline
 &  &  &  & s411 + s444 – 1 \\
 \hline
 
\end{longtable}
\end{center}

It is possible to describe the vectors in each class, in terms of which non-signaling constraints are violated -- for the $f$, $g$, $t$, and $z$/$s$ classes, the vectors are associated with violations of 1, 2, 3, and 4 non-signaling constraints, respectively. 

Specifically, we order the signaling constraints as $(\Delta_{\scriptscriptstyle  1}, \Delta_{\scriptscriptstyle  3},  \Delta_{\scriptscriptstyle  2}, \Delta_{\scriptscriptstyle  4})$, following Eqs.~(\ref{NS-AB0})-(\ref{NS-BA1}), that is alternating the constraints corresponding to the Alice to Bob direction. Then, all vectors can be identified through the heuristic rule of violating non-signaling constraints, in the order of alternating directions, gradually increasing the number of non-signaling constraints violated. In Table~\ref{long-table}  (‘Directions’ column), no constraint violation is indicated as N, positive violations as L, and negative ones as R. Then, violating a single constraint leads to the four $f$-vectors, $\{(f_1, f_2), (f_5, f_6), (f_3, f_4), (f_7, f_8)\}$. For example, $\Delta_{\scriptscriptstyle 1}= v_{11} + v_{12} -v_{21}-v_{22} = f_1-1=1-f_2$ so that $f_1$ corresponds to $\Delta_{\scriptscriptstyle  1-}$ and $f_2$ corresponds to $\Delta_{\scriptscriptstyle  1+}$. For the $z$-vectors, because all four constraints are violated, there are four possibilities for which constraint is violated first (the numbers in parentheses). For example, $\Delta_{\scriptscriptstyle  1-}$, $\Delta_{\scriptscriptstyle  3-}$, $\Delta_{\scriptscriptstyle  2-}$, $\Delta_{\scriptscriptstyle  4-}$ corresponds to LLLL(1), LLLL(2), LLLL(3) and LLLL(4) i.e. $z_{114}$, $z_{234}$, $z_{324}$  and $z_{444}$ respectively. This heuristic rule leads to the construction of all 120 vectors to compute the non-signaling fraction. In order to construct the 128 vectors for the locality fraction, we add the $e$’s and replace the $z$’s with $s$’s – the cardinalities and correspondence of signaling constraint violations are the same. 

\begin{figure}[th]
\centering
\includegraphics[width=0.5\columnwidth]{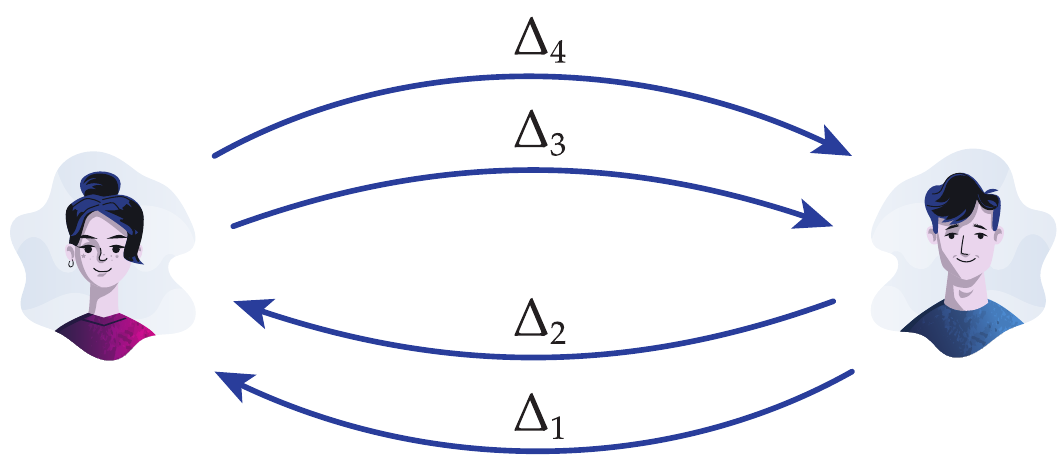}\caption
{\label{Fig-LNNR}{\bf\textsf{'Alternating' direction of signaling violation.}} 1) There are two parties, A and B, with two arrows from A to B (4,3) and two arrows from B to A (2,1).
2) We traverse a path starting from B or A using each arrow only once.
3) When traversing an arrow we carry a flag L or R (these may differ along a composite path).
4) Path reversals are equivalent, e.g. L along 1 then R along 3 is equivalent to R along 3 then L along 1.
5) Each path has a unique start and end.
}
\end{figure}

A more complete description of this heuristic rule is given in Fig.~\ref{Fig-LNNR}. This also helps to describe aspects of the symmetry orbits across the different vector classes, as indicated by the “Cat.” (category) column in Table~\ref{long-table}, where F, G, T1, T2, S1, S2, S3, Z1, Z2, Z3, and E denote the orbits of the symmetry group in Eqs.(\ref{sym-1})-(\ref{sym-5}), corresponding to the representatives $\vec{f}_\alpha$, $\vec{g}_\beta$, $\vec{t}_\beta$, $\vec{t}_\gamma$, $\vec{s}_\beta$, $\vec{s}_\gamma$, $\vec{s}_\delta$, $\vec{z}_\beta$, $\vec{z}_\gamma$, $\vec{z}_\delta$ and $\vec{e}_\alpha$, respectively, as listed in Eq.~(\ref{representatives}). For the $g$-vectors, these correspond to only two violations of non-signaling constraints, one in each direction, so there is only one class of vectors. The $t$-vectors correspond to the violation of three non-signaling constraints, two paths from Alice to Bob and one from Bob to Alice, or vice versa (see Fig.~\ref{Fig-LNNR}). There are symmetry classes corresponding to whether the two non-signaling constraints in the same direction have the same sign (T1) or not (T2). The $z$-vectors correspond to violations of all four NS constraints. Considering the two pairs of non-signaling constraints in each direction ($\Delta_{\scriptscriptstyle  1}$, $\Delta_{\scriptscriptstyle  2}$ and $\Delta_{\scriptscriptstyle  3}$, $\Delta_{\scriptscriptstyle  4}$), symmetry orbits correspond to whether the signs are the same in one pair and opposite in the other pair (Z1), the signs are opposite in both pairs (Z2), and the signs are the same in both pairs (Z3).

The rule above can be used to compute the cardinalities for each vector class. Note, the ‘alternating’ part of the rule explains why cardinalities do not just follow the number of non-signaling constraints in each class. For example, without the alternating rule, regarding the cardinality of the $g$ class there would be $(_2^4)\cdot2\cdot2=24$ vectors, but instead we have only 16.

Another way to illustrate the set of vectors is with a set of 4x4 tables, each representing a vector in the solution sets. For the $f$-vectors, this is depicted in Fig.~\ref{4x4_f's}, while the relationship between $f$-measures and associated $g$-, $t$-, $z$-measures is depicted in Fig.~\ref{4x4_relations}.

\begin{figure}[h]
\centering
\includegraphics[page=1,
        trim=2cm 7cm 2cm 4cm,
        clip,
        width=\textwidth]{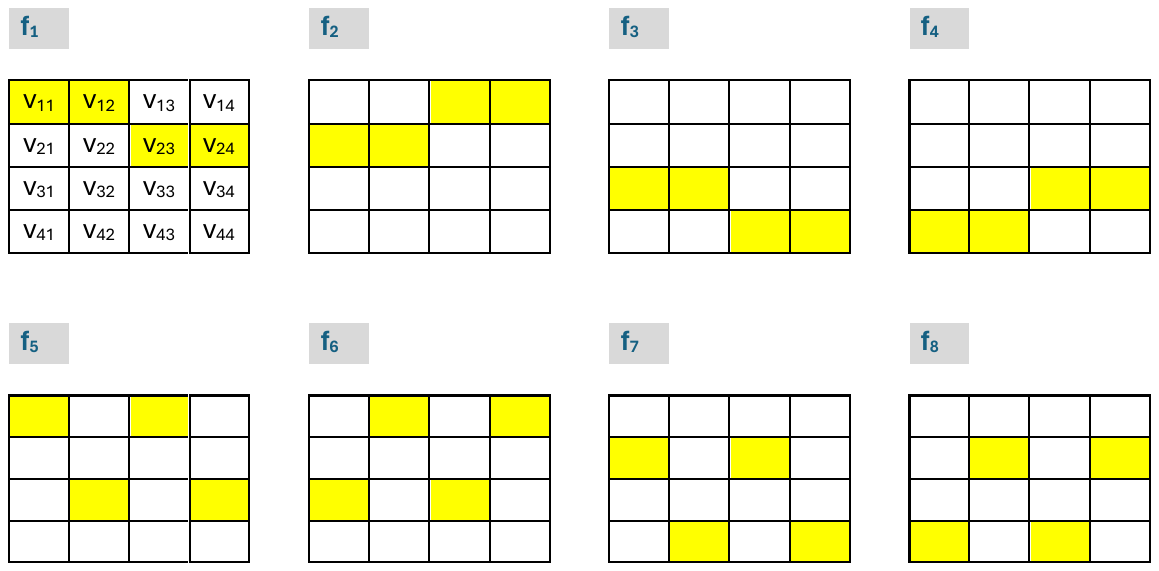}\caption
{\label{4x4_f's}{\bf\textsf{4x4 representation of the $\textit{f}$-\textit{measures}.}} The above figure is a 4x4 representation of the $f$-vectors, whereby the highlighted cells are the 4 components out of the behavior’s vector’s 16 components that make up each $f$-\textit{measure}. The $f$-\textit{measures} are divided into two sets, and each set consists of two pairs; these are $(f_1, f_2)$, $(f_3, f_4)$  and $(f_5, f_6)$, $(f_7, f_8)$, with the elements of each pair summing to 2 (e.g. $f_1+f_2=2$).}
\end{figure}

\textbf{\myuline{Symmetries of the solutions}:} these can be found using the five symmetry conditions from Eqs.~(\ref{sym-1})-(\ref{sym-5}) of the main text. Applying them in turn, we have for the \textit{f-measures} (where “to” denotes a swap of the two elements):\vspace{0.1cm}

Eq.~(\ref{sym-1}) - $a$ to $1-a$: $f_1$ to $f_2$, $f_3$ to $f_4$; others are (swapping columns 1 and 3, 2 and 4 in Fig.~\ref{4x4_f's})\,,

Eq.~(\ref{sym-2}) - $b$ to $1-b$: $f_5$ to $f_6$, $f_7$ to $f_8$; others are unchanged (swapping columns 1 and 2, 3 and 4 in Fig.~\ref{4x4_f's})\,,

Eq.~(\ref{sym-3}) - $x$ to $1-x$: $f_1$ to $f_3$\,, $f_2$ to $f_4$, $f_5$ to $f_6$, $f_7$ to $f_8$ (swapping rows 1 and 3, 2 and 4 in Fig.~\ref{4x4_f's})\,,

Eq.~(\ref{sym-4}) - $y$ to $1-y$: $f_5$ to $f_7$\,, $f_6$ to $f_8$, $f_1$ to $f_2$, $f_3$ to $f_4$ (swapping rows 1 and 2, 3 and 4 in Fig.~\ref{4x4_f's})\,,

Eq.~(\ref{sym-5}) - $(x,a)$ to $(y,b)$: $f_1$ to $f_5$, $f_2$ to $f_6$, $f_3$ to $f_7$, $f_4$ to $f_8$ (swapping rows 2 and 3, columns 2 and 3 in Fig.~\ref{4x4_f's})\,.\vspace{0.1cm}

\noindent We can thus get from $f_1$ to any other via a sequence of moves (for instance, $f_1\rightarrow f_8$ can be achieved by $f_1\rightarrow f_2$, then $f_2\rightarrow f_4$, and finally $f_4\rightarrow f_8$). We note that using the symmetries above we can similarly get from any vector to all of the members of its symmetry group.

\begin{figure}[h]
\centering
\includegraphics[page=1,
        trim=0.5cm 3cm 0.5cm 0.5cm,
        clip,
        width=0.9\textwidth]{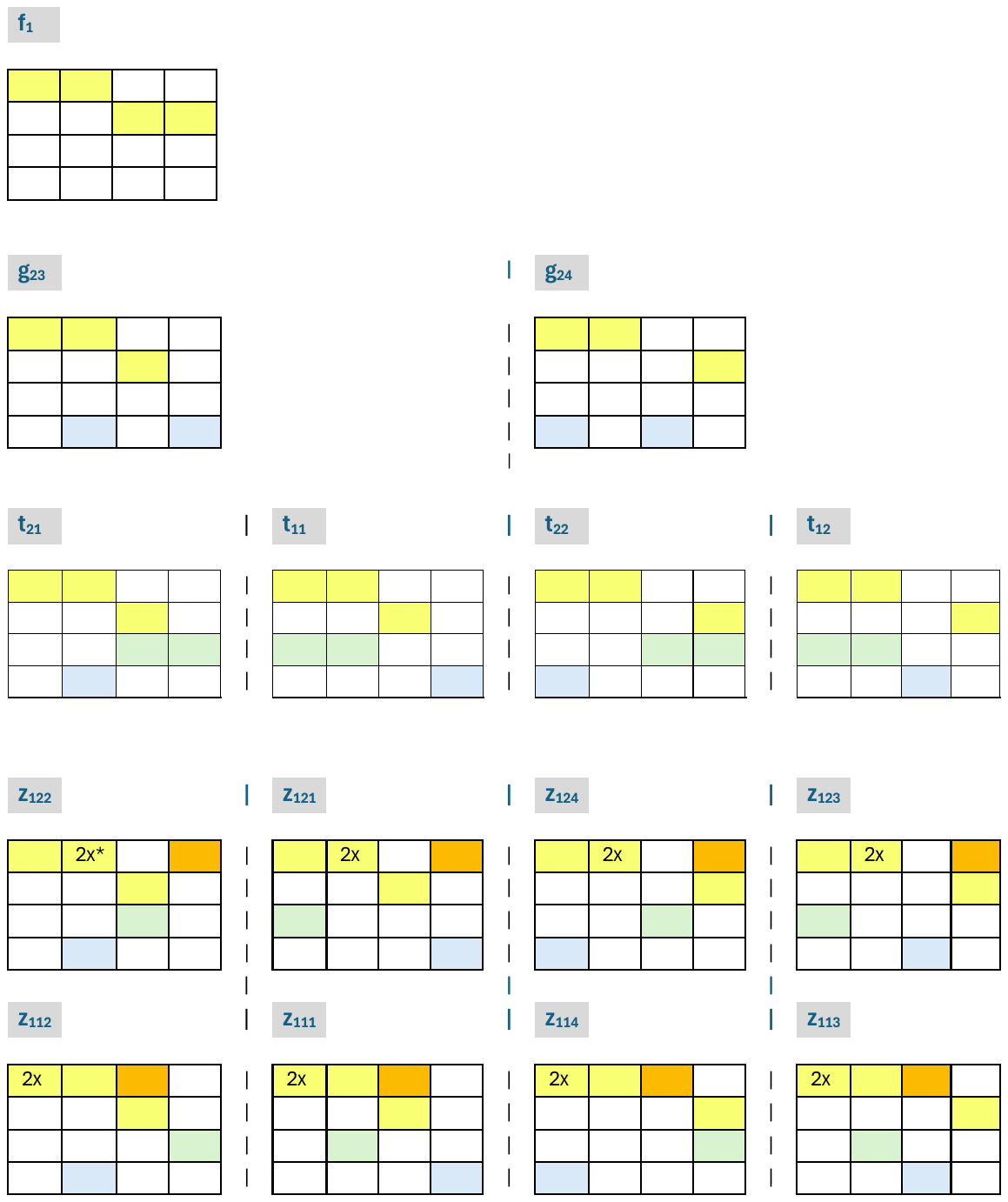}\caption
{\label{4x4_relations}{\textsf{\textbf{Relationship between $\textit{f}$-\textit{measures} and associated $\textit{g}$-, $\textit{t}$-, $\textit{z}$-\textit{measures}.}} Here we show a subset of the connections highlighted in Table~\ref{long-table} (also see Fig~\ref{Connections}). The term "2x'' indicates that a factor of 2 is applied to that component, when computing the corresponding measures (these factors of 2 occur only for the $z$-vectors).}}
\end{figure}

Fig.~\ref{4x4_relations} shows how $f_1$ connects to $g_{23}$ and $g_{24}$. Each $g_{ij}$ term can be constructed using two $f_i$ terms, as we see in the final column of Table~\ref{long-table}. For example, $g_{23}$ uses $f_1$ and $f_7$. In general, each $f_i$ term is involved in four $g_{ij}$ terms ($f_1$ is used for $g_{23}$ and $g_{24}$, as above, and also $g_{11}$ and $g_{12}$). 

Similarly, the figure shows how e.g. $g_{23}$ connects to $t_{21}$ and $t_{11}$ uses $g_{23}$ and $g_{11}$. Each $t_{ij}$ term uses two $g_{ij}$ terms, again see Table~\ref{long-table}.
Each $g_{ij}$ is used in four $t_{ij}$ terms ($g_{23}$ is used for $t_{21}$ and $t_{11}$, as above, and also $t_{53}$ and $t_{73}$). 

We then see how e.g. $t_{21}$ connects to $z_{122}$ and $z_{112}$. Each $z_{ijk}$ term can be constructed using two $t_{ij}$ terms (see Table~\ref{long-table}). For example, $z_{122}$ uses $t_{21}$ and $t_{72}$. Each $t_{ij}$ term is used in four $z_{ijk}$ terms ($t_{21}$ is used for $z_{122}$ and $z_{112}$, as above, and also $z_{331}$ and $z_{341}$).

These relations are further illustrated in the 'Directions' and 'Calculation' column in Table~\ref{long-table}. Recall, each vector in the $f$, $g$, $t$, $z/s$ classes can be characterized by violations of particular non-signaling constraints. If one takes a target vector in any class apart from $f$, then the non-signaling constraints corresponding to this vector are exactly the ones of the vectors which make up the target vector from earlier classes. For example, each $t_{ij}$ is characterized by three violations of non-signaling constraints, which are exactly the ones of the two $g_{kl}$ vectors, which make up this particular $t_{ij}$ one (two $g_{ij}$ vectors will be characterized by four violations, but one will be common between the two). See also Fig~\ref{Connections} below -- the highlighted connections are the same ones as used in the example in Fig.~\ref{4x4_relations}.

\begin{figure}[h]
\centering
\includegraphics[width=0.9\columnwidth]{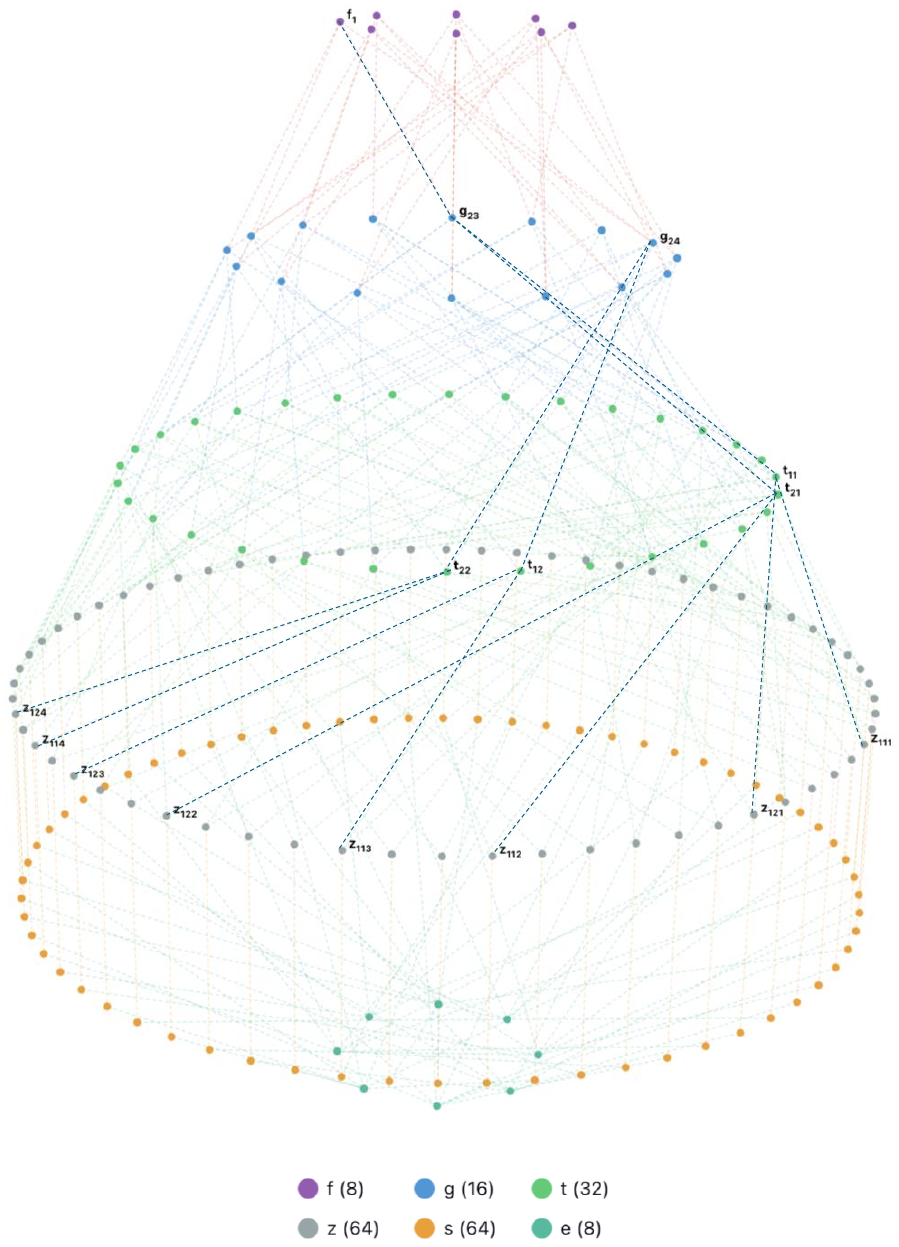}\caption
{\label{Connections}\textsf{\textbf{Connections between measures.}} This is a graph representation of the connections between \textit{f}-, \textit{g}-, \textit{t}- , \textit{z}- and \textit{s}-\textit{measures}, which reflects the algebraic expressions, i.e. how they can be derived from one class of the directions to the next:
each layer represents one category of \textit{f}-, \textit{g}-, \textit{t}-, \textit{z}- or \textit{s}-\textit{measures}, 
each \textit{g}-, \textit{t}- and \textit{z}-\textit{measure} is connected to two quantities from the layer above. Each \textit{f}-\textit{measure} is connected to four \textit{g}-\textit{measures}, each \textit{g}-\textit{measure} is connected to four \textit{t}-\textit{measures}, and each \textit{t}-\textit{measure} is connected to four \textit{z}-\textit{measures}. Finally, each \textit{s}-\textit{measure} is connected to its associated \textit{z}-\textit{measure} and to one \textit{e}-\textit{measure}. These relationships mirror those in Table~\ref{long-table}, as seen by comparing the entries in the first and fifth columns.
The figure was constructed using \texttt{Python}’s \texttt{Plotly} library and then visually refined using a barycentric optimization routine to minimize edge crossings. The example shown in Fig.~\ref{4x4_relations} is overlaid with a black dashed line.
}
\end{figure}

\end{document}